\let\newfloat\newfloat@ltx
\def\HC{\mathcal{H}}
\def\ad{^{\dagger}}
\newcommand{\fsnull}[1]{}
\newcommand{\old}[1]{}
\tikzset{every picture/.style=remember picture}
\newcommand{\poly}{\operatorname{poly}}
\newcommand{\BC}{\mathcal{B}}
\newcommand{\CC}{\mathcal{C}}
\newcommand{\EC}{\mathcal{E}}
\newcommand{\FC}{\mathcal{F}}
\newcommand{\GC}{\mathcal{G}}
\newcommand{\MC}{\mathcal{M}}
\newcommand{\NC}{\mathcal{N}}
\newcommand{\OC}{\mathcal{O}}
\newcommand{\PC}{\mathcal{P}}
\newcommand{\Var}{{\rm Var}}
\renewcommand{\leq}{\leqslant}
\renewcommand{\vec}[1]{\boldsymbol{#1}}  
\newcommand*{\id}{\openone}
\newcommand*{\g}{\mathfrak{g}}
\newcommand{\bs}{\textsf{BS}}
\newcommand{\thv}{\vec{\theta}}
\def\be{\begin{equation}}
\def\ee{\end{equation}}
\def\bs{\begin{split}}
\def\e{\end{split}}
\def\ba{\begin{eqnarray}}
\def\bea{\begin{eqnarray}}
\def\tea{\end{eqnarray}}
\def\ea{\end{eqnarray}}
\def\eea{\end{eqnarray}}
\newcommand\mf[1]{\mathfrak{#1}}
\newcommand\mbb[1]{\mathbb{#1}}
\def\liea{\mathfrak{g}}
\newcommand\Z{\text{Z}}
\newtheorem{theorem}{Theorem}
\newtheorem{lemma}{Lemma}
\newtheorem{suplemma}{Supplemental Lemma}
\newtheorem{supdefinition}{Supplemental Definition}
\newtheorem{suptheorem}{Supplemental Theorem}
\newtheorem{supproposition}{Supplemental Proposition}
\newtheorem{supexample}{Supplemental Example}
\newtheorem{corollary}{Corollary}
\def\be{\begin{equation}}
\def\te{\end{equation}}
\def\ee{\end{equation}}
\def\ba{\begin{eqnarray}}
\def\bea{\begin{eqnarray}}
\def\tea{\end{eqnarray}}
\def\ea{\end{eqnarray}}
\def\eea{\end{eqnarray}}
\begin{document}

\title{Showcasing a Barren Plateau Theory Beyond the Dynamical Lie Algebra}

\author{N. L. Diaz}
\affiliation{Information Sciences, Los Alamos National Laboratory, Los Alamos, NM 87545, USA}
\affiliation{Departamento de F\'isica-IFLP/CONICET,
		Universidad Nacional de La Plata, C.C. 67, La Plata 1900, Argentina}

  \author{Diego Garc\'ia-Mart\'in}
\affiliation{Information Sciences, Los Alamos National Laboratory, Los Alamos, NM 87545, USA}

\author{Sujay Kazi}
\affiliation{Department of Electrical and Computer Engineering, Duke University, Durham, NC 27708, USA}

\author{Martin Larocca}
\affiliation{Theoretical Division, Los Alamos National Laboratory, Los Alamos, NM 87545, USA}
\affiliation{Center for Nonlinear Studies, Los Alamos National Laboratory, Los Alamos, NM 87545, USA}

\author{M. Cerezo}
\affiliation{Information Sciences, Los Alamos National Laboratory, Los Alamos, NM 87545, USA}

\begin{abstract}

Barren plateaus have emerged as a pivotal challenge for variational quantum computing. Our  understanding of this phenomenon underwent a transformative shift with the recent introduction of a Lie algebraic theory capable of  explaining most sources of barren plateaus.  However, this theory requires either initial states or observables that lie in the circuit's Lie algebra.
Focusing on parametrized matchgate circuits, in this work we are able to go beyond this assumption and provide an exact formula for the loss function variance  that is valid for arbitrary input states and measurements. 
Our results reveal that new phenomena emerge when the Lie algebra constraint is relaxed. For instance,  we find that the variance does not necessarily vanish inversely with the Lie algebra's dimension. Instead, this measure of expressiveness is replaced by a generalized expressiveness quantity: The dimension of the Lie group modules.  By characterizing the operators in these modules as products of Majorana operators, we can introduce a precise notion of generalized globality and show that measuring generalized-global operators leads to   barren plateaus. Our work also provides operational meaning to the generalized entanglement as we connect it with known fermionic entanglement measures, and show that it  satisfies a monogamy relation. 
Finally, while parameterized matchgate circuits 
are not efficiently simulable in general, our results suggest that the  structure allowing for trainability may also lead to classical simulability.

\end{abstract}

\maketitle
\textit{Introduction.}   The development of near-term quantum hardware  raised an important question: \textit{What can we achieve with these devices?} Variational quantum computing schemes, such as variational algorithms~\cite{cerezo2020variationalreview, bharti2021noisy, endo2021hybrid}, and quantum machine learning models~\cite{schuld2015introduction, biamonte2017quantum, havlivcek2019supervised}, stand as promising candidates to answer this question. 
At its core, variational quantum computing attempts to solve a  problem by formulating it as an optimization task. These methods operate by evolving an initial state through a parametrized circuit and estimating the expectation value of a given observable at the circuit's output. 
The model's parameters are iteratively adjusted to minimize a loss function quantifying the extent to which the problem has been solved.

Ever since the seminal work in Ref.~\cite{mcclean2018barren}, which first showed that variational quantum computing schemes can exhibit Barren Plateaus (BPs) 
--an exponentially concentrated loss function--
the study of BPs underwent a veritable Cambrian explosion.  Over the past few years, researchers
have started to unravel
the different sources of BPs through case-specific analyses ~\cite{cerezo2020cost,marrero2020entanglement,sharma2020trainability,patti2020entanglement,pesah2020absence,uvarov2020barren,cerezo2020impact,uvarov2020variational,wang2020noise,abbas2020power,arrasmith2021equivalence,larocca2021diagnosing,liu2021presence,holmes2021connecting,zhao2021analyzing,kieferova2021quantum,thanaslip2021subtleties,lee2021towards,shaydulin2021importance,holmes2021barren,leadbeater2021f,martin2022barren,grimsley2022adapt,leone2022practical,sack2022avoiding,kashif2023impact,friedrich2023quantum,garcia2023deep,kulshrestha2022beinit,volkoff2021efficient,kashif2023unified,monbroussou2023trainability}. Recently,  two independent works~\cite{fontana2023theadjoint,ragone2023unified} presented the mathematical foundations for a unified Lie algebraic theory of BPs.  As conceptually explained in~\cite{ragone2023unified} (and as conjectured in~\cite{larocca2021diagnosing}), a central object to BPs is the Dynamical Lie Algebra (DLA) $\g$ which quantifies the ultimate expressiveness of a parametrized circuit.  
While Refs.~\cite{fontana2023theadjoint,ragone2023unified} constituted a quantum leap in our understanding of BPs, the theory developed requires either the initial state or measurement operator to belong in the DLA.

In this work, we present an exact loss variance formula 
for parametrized matchgate circuits~\cite{jozsa2008matchgates,wan2022matchgate,de2013power,oszmaniec2022fermion,cherrat2023quantum} that does not require any assumptions on input states or measurements. By leveraging the decomposition of operator space into group modules, we show that the variance decomposes as a sum of terms that quantify  generalized purities and coherences in the state and measurement (see Fig.~\ref{fig:schematic}) with respect to the different modules.  Importantly, not all variance contributions vanish inverse-polynomially in the DLA's dimension, meaning that the conjecture in~\cite{larocca2021diagnosing} is not valid in this extended setting. Going beyond, we are able to prove that the generalized entanglement appearing in our extended formula for matchgate circuits constitutes fermionic entanglement 
as quantified by \cite{gigena2015entanglement, gigena2020one, gigena2021many}. In particular, this implies a generalized monogamy of entanglement relation. We then present an interpretation of the modules as containing operators with different amounts of generalized globality, and find that generalized-global observables lead to BPs.  We highlight that our formulas contain, as special cases, those obtained in Refs.~\cite{fontana2023theadjoint,ragone2023unified}. However,  by studying arbitrary input states and measurements we are able to reach regimes and insights beyond previous approaches.

\textit{Background}.
We focus on a loss function of the form
\begin{equation}\label{eq:loss}
    \ell_{\thv}(\rho,O)=\Tr[U(\thv)\rho U\ad(\thv) O]\,.
\end{equation}
Here $\rho$ is an $n$-qubit state in $\HC=(\mbb{C}^2)^{\otimes n}$, $O$ an observable (with $\norm{O}_2^2\leq 2^n)$,  and $U(\thv)=\prod_le^{-i\theta_l H_l}$ a parametrized circuit with trainable parameters $\thv$ and Hermitian generators $H_l$ taken from a set $\GC$. The presence of BPs  can be diagnosed by studying the variance 
\begin{equation}\label{eq:var}
    \Var_{\thv}[\ell_{\thv}(\rho,O)]=\mathbb{E}_{\thv}[\ell_{\thv}(\rho,O)^2]-\mathbb{E}_{\thv}[\ell_{\thv}(\rho,O)]^2\,,
\end{equation}
and we will say that  the loss exhibits a BP if  $\Var_{\thv}[\ell_{\thv}(\rho,O)]\in\OC(1/b^n)$ for $b>1$, in which case the loss is exponentially concentrated around its mean~\cite{mcclean2018barren,cerezo2020cost}.

Throughout this paper we consider $U(\thv)$ to be a parametrized matchgate circuit composed of fermionic Gaussian unitaries. 
In particular, we take the circuit generators $\GC=\{Z_i\}_{i=1}^n\cup\{X_iX_{i+1}\}_{i=1}^{n-1}$ producing a DLA $\g={\rm span}_\mathbb{R}\langle i\GC\rangle_{{\rm Lie}}$ as follows~\cite{kokcu2022fixed,wiersema2023classification}
\small
 \begin{align}
     \g&={\rm span}_{\mathbb R}i\{Z_i,\widehat{X_iX_j},\widehat{Y_iY_j},\widehat{X_iY_j},\widehat{Y_iX_j}\}_{1\leq i<j\leq n} \simeq \mathfrak{so}(2n)\nonumber\,,
 \end{align}
 \normalsize
 where  $X_i$, $Y_i$ and $Z_i$ denoting the Pauli operators acting on the $i$-th qubit, and $\widehat{A_iB_j}=A_iZ_iZ_{i+1}\cdots Z_{j-1}B_j$. Moreover, we assume that $U(\thv)$ is deep enough so that it forms a $2$-design over the Lie group $G=e^{\g}$ (see Theorem 2 in Ref.~\cite{ragone2023unified}). This allows us to compute expectation values over the parameter landscape as $\mathbb{E}_{\thv}[U(\thv)^{\otimes t}(\cdot) U\ad(\thv)^{\otimes t} ]=\int_{G}dU U(\thv)^{\otimes t}(\cdot) U\ad(\thv)^{\otimes t} $, for $t\leq 2$, where $dU$ denotes the normalized left-and-right-invariant Haar measure on $G$.

\begin{figure}[t]
    \centering
\includegraphics[width=1\columnwidth]{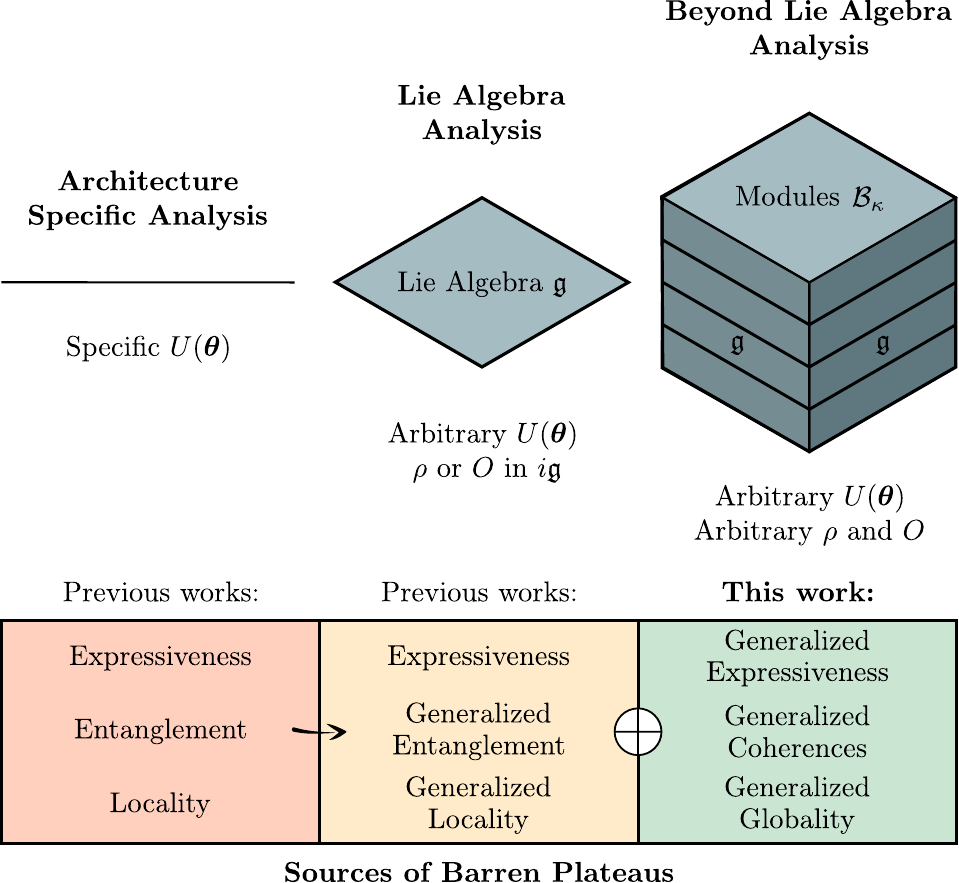}
    \caption{For the past few years researchers have studied BP in case-specific analyses, discovering different causes of loss function concentration     such as the expressiveness of $U(\thv)$, the amount of entanglement in $\rho$, or the locality of $O$. Recently, Ref.~\cite{ragone2023unified} proposed a 
    unification of these BP sources 
    under a Lie algebraic framework 
    involving generalized notions of entanglement and locality. This unified theory requires a strong assumption: either $\rho$ or $O$ must be in $i\g$.
    In this work we go beyond this assumption and develop a theory of BPs for arbitrary initial states and measurements in parametrized matchgate circuits.   
    Our results 
    uncover new sources of concentration such as the generalized module expressiveness, coherences and globality.   }
    \label{fig:schematic}
\end{figure}

 A convenient way to handle the Lie algebra $\g$ is by mapping its Pauli elements to fermionic operators via the Jordan-Wigner transformation~\cite{lieb1961two}.  In particular, we will use the (Hermitian) Majorana fermionic operators,
\begin{equation}
\begin{split}
    c_1&=XI\dots I,\; c_3= ZXI\dots I, \; c_{2n-1}=Z\dots Z X\,, \nonumber\\
        c_2&=YI\dots I,\; c_4= ZYI\dots I, \; \;\; c_{2n}\;\;\;=Z\dots Z Y\,,
\end{split}
\end{equation}
which form a Clifford algebra with quadratic form $    \{c_\mu,c_\nu\}=2\delta_{\mu\nu}$ for $\mu,\nu=1,\dots ,2n$. One can verify that $\g={\rm span}_{\mathbb R}\{c_\mu c_\nu\}_{1\leq \mu<\nu \leq 2n}$ .

\textit{Results}. In what follows we derive exact formulas for the loss function variance. Our main results are based on the following  preliminary lemmas, proved in the Supplemental Information (SI): 
\begin{lemma}\label{lem:modules}
The space of linear operators acting on $n$-qubits, denoted as $\BC$, can be decomposed into subspaces as
\begin{equation} \label{eq:modules}
\mathcal{B}=\bigoplus_{\kappa=0}^{2n} \mathcal{B}_{\kappa}\,,
\end{equation}
with each  $\mathcal{B}_\kappa$ being the linear space, of dimension $\binom{2n}{\kappa}$, spanned by a basis of products of $\kappa$ distinct Majoranas. 
\end{lemma}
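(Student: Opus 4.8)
The plan is to exhibit an explicit orthogonal basis of $\BC$ whose elements are naturally graded by the number of Majorana factors, and then to read off the decomposition by collecting basis elements of equal degree. Concretely, consider the set $\mathcal{P}=\{\,c_{\mu_1}c_{\mu_2}\cdots c_{\mu_\kappa}\ :\ 1\le\mu_1<\mu_2<\cdots<\mu_\kappa\le 2n,\ 0\le\kappa\le 2n\,\}$, where the $\kappa=0$ element is the identity. I would first note that the subspace $\mathcal{B}_\kappa:=\spn_{\mathbb{C}}\{c_{\mu_1}\cdots c_{\mu_\kappa}:\mu_1<\cdots<\mu_\kappa\}$ does not depend on the chosen ordering convention, since permuting the factors of a product of $\kappa$ distinct Majoranas changes it only by a sign, a direct consequence of the Clifford relation $\{c_\mu,c_\nu\}=2\delta_{\mu\nu}$. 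The target statement then reduces to the claim that $\mathcal{P}$ is an orthogonal (Hilbert--Schmidt) basis of $\BC$: counting its elements degree by degree gives $\dim\mathcal{B}_\kappa=\binom{2n}{\kappa}$, orthogonality between elements of different degree forces the sum $\sum_\kappa\mathcal{B}_\kappa$ to be direct, and $\sum_{\kappa=0}^{2n}\binom{2n}{\kappa}=4^n=\dim\BC$ guarantees that these subspaces exhaust $\BC$.

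To establish that $\mathcal{P}$ is an orthogonal basis I would proceed in three short steps. (i) Each $A\in\mathcal{P}$ is nonzero: since $c_\mu^\dagger=c_\mu$ and $c_\mu^2=\id$, successively cancelling factors gives $A^\dagger A=\id$, hence $\Tr[A^\dagger A]=2^n$. (ii) $|\mathcal{P}|=\sum_{\kappa=0}^{2n}\binom{2n}{\kappa}=2^{2n}=\dim\BC$, so orthogonality will immediately upgrade $\mathcal{P}$ to a basis. (iii) For $A=c_{\mu_1}\cdots c_{\mu_\kappa}$ and $B=c_{\nu_1}\cdots c_{\nu_{\kappa'}}$ in $\mathcal{P}$ with distinct index sets, using $\{c_\mu,c_\nu\}=2\delta_{\mu\nu}$ to cancel every repeated factor rewrites $A^\dagger B$ as $\pm\,c_{\lambda_1}\cdots c_{\lambda_m}$, where $\{\lambda_1,\dots,\lambda_m\}$ is the nonempty symmetric difference of the two index sets; thus $\langle A,B\rangle=\Tr[A^\dagger B]$ vanishes provided every product of $m\ge 1$ distinct Majoranas is traceless.

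The only point needing genuine care is this last tracelessness claim, which I would prove directly. Given $A=c_{\lambda_1}\cdots c_{\lambda_m}$ with $m\ge 1$ and the $\lambda_i$ distinct, one can always find an index $\sigma$ such that conjugation by $c_\sigma$ sends $A$ to $-A$: if $m$ is odd then $m\le 2n-1$, so there exists $\sigma\notin\{\lambda_1,\dots,\lambda_m\}$ and $c_\sigma A c_\sigma=(-1)^m A=-A$; if $m$ is even, take $\sigma=\lambda_1$ and $c_{\lambda_1}A c_{\lambda_1}=(-1)^{m-1}A=-A$. Cyclicity of the trace (with $c_\sigma^2=\id$) then yields $\Tr[A]=\Tr[c_\sigma A c_\sigma]=-\Tr[A]$, hence $\Tr[A]=0$. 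Alternatively, one may invoke the Jordan--Wigner transformation recalled just above the lemma: each ordered product of distinct Majoranas equals a phase times a distinct Pauli string, nontrivial as soon as $m\ge 1$, which reduces steps (ii)--(iii) to the standard fact that the $4^n$ Pauli strings form an orthogonal basis with the nonidentity ones traceless. With (i)--(iii) settled, $\mathcal{P}$ is an orthogonal basis of $\BC$; grouping its elements by the number $\kappa$ of Majorana factors produces the subspaces $\mathcal{B}_\kappa$ of dimension $\binom{2n}{\kappa}$ and the orthogonal direct-sum decomposition $\BC=\bigoplus_{\kappa=0}^{2n}\mathcal{B}_\kappa$, which is the assertion of the lemma.
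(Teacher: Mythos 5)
Your proof is correct, and its overall skeleton matches the paper's: exhibit the ordered Majorana monomials as a Hilbert--Schmidt orthogonal family, count $\binom{2n}{\kappa}$ of them in degree $\kappa$, and use $\sum_\kappa\binom{2n}{\kappa}=4^n=\dim\BC$ to get both directness and exhaustion. The one place where you genuinely diverge is the key tracelessness step. The paper proves that a product of $m\ge 1$ distinct Majoranas is never proportional to the identity by inspecting the explicit Jordan--Wigner Pauli strings qubit by qubit (two Majoranas differ at one qubit index, a third fixes that index but spoils a new one, and so on), i.e.\ it is really the statement that such a product is a nontrivial Pauli string. Your primary argument instead finds a single Majorana $c_\sigma$ whose conjugation negates the monomial ($\sigma$ outside the index set when $m$ is odd, $\sigma=\lambda_1$ when $m$ is even) and concludes $\Tr[A]=-\Tr[A]=0$ by cyclicity. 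This is representation-independent: it uses only the Clifford relations and the trace, so it would survive in any faithful representation of the Clifford algebra, and it sidesteps the slightly delicate iterative bookkeeping of qubit indices. What the paper's concrete route buys in exchange is the explicit identification of each monomial with a Pauli string (and the precise sign $(-1)^{\lfloor\kappa/2\rfloor}$ in the orthogonality relation), which it reuses later when building the commutator graph and the quadratic symmetries; you recover this identification only through your optional Jordan--Wigner remark. Both arguments are complete and correct.
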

Lemma~\ref{lem:modules} allows us to express any operator $M$ acting on $\HC$ as $M=\sum_{\kappa=0}^{2n} M_\kappa$ with $M_\kappa\in\BC_\kappa$ a homogeneous multilinear polynomial of degree $\kappa$ in the Majoranas. In particular, the $M_\kappa$ are the orthogonal projections of $M$ 
into  each $\BC_\kappa$, i.e., $M_\kappa=\sum_{j=1}^{\dim( \BC_\kappa)} \Tr[B_j M] B_j$, where $\{B_j\}_{j=1}^{\dim(\BC_\kappa)}$ is a Hermitian orthonormal basis  for $\BC_\kappa$  with respect to the standard Hilbert--Schmidt inner product. 
Here, we also note that $\BC_0={\rm span}_{\mbb{C}}\id$, while $\BC_{2n}={\rm span}_{\mbb{C}}P$ with $P=Z^{\otimes n}=(-i)^{n}c_1c_2\cdots c_{2n}$ the fermionic parity operator.  
More generally, the basis elements of $\BC_\kappa$ are related to those in $\BC_{2n-\kappa}$ via multiplication by $P$. 

In fact, as shown by the next lemma, the subspaces $\{\BC_{\kappa}\}$ are invariant under the action of $G$, i.e., are $G$-modules.  
\begin{lemma}\label{lem:inv}
Let $M_\kappa\in \BC_\kappa$, then  $\forall U\in G$, $UM_\kappa U\ad\in\BC_\kappa$. Moreover, any pair of Pauli operators in $\BC_\kappa$ are proportional to each other via commutation with elements in $\g$.
\end{lemma}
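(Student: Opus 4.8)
The plan is to deduce both parts of the lemma from the single structural fact that every $U\in G=e^{\g}$ acts on the Majorana operators by an orthogonal rotation, $U c_\mu U\ad=\sum_{\nu=1}^{2n} R_{\nu\mu}\,c_\nu$ with $R\in\mathrm{SO}(2n)$. I would first establish this by differentiating at the identity: the relation $\{c_\mu,c_\nu\}=2\delta_{\mu\nu}$ gives $[c_\alpha c_\beta,c_\rho]=2\delta_{\beta\rho}c_\alpha-2\delta_{\alpha\rho}c_\beta$, so the adjoint action of each generator $c_\alpha c_\beta\in\g$ on $\mathrm{span}_{\mathbb{C}}\{c_\rho\}_{\rho=1}^{2n}$ is represented by an antisymmetric matrix; exponentiating, and using that $G$ is connected, forces $R\in\mathrm{SO}(2n)$ (any $R\in\mathrm{O}(2n)$ would suffice for what follows).

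For the invariance $U\BC_\kappa U\ad=\BC_\kappa$, instead of expanding $U(c_{\mu_1}\cdots c_{\mu_\kappa})U\ad$ term by term I would give a basis-free description of $\BC_\kappa$. Consider the superoperator $\LC(M)=\sum_{\mu=1}^{2n}c_\mu M c_\mu$. Using $c_\mu^2=\id$ and anticommutation, a short computation gives $\LC(c_{\mu_1}\cdots c_{\mu_\kappa})=(-1)^\kappa(2n-2\kappa)\,c_{\mu_1}\cdots c_{\mu_\kappa}$ on any product of $\kappa$ distinct Majoranas. By Lemma~\ref{lem:modules} this means $\BC_\kappa$ is exactly the eigenspace of $\LC$ with eigenvalue $\lambda_\kappa:=(-1)^\kappa(2n-2\kappa)$, and the values $\lambda_0,\dots,\lambda_{2n}$ are pairwise distinct (a coincidence $\lambda_\kappa=\lambda_{\kappa'}$ with $\kappa\neq\kappa'$ would require $\kappa,\kappa'$ of opposite parity and $\kappa+\kappa'=2n$, which cannot happen). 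Finally, substituting $U\ad c_\mu U=\sum_\nu R_{\mu\nu}c_\nu$ and using $R\trp R=\id$ shows $\LC(UMU\ad)=U\LC(M)U\ad$, i.e.\ $\LC$ commutes with conjugation by $G$; hence each of its eigenspaces, in particular each $\BC_\kappa$, is $G$-invariant. (Conceptually, the ordered-monomial-to-wedge map identifies $\BC_\kappa$ with $\bigwedge^\kappa(\mathbb{C}^{2n})$, on which $\mathrm{SO}(2n)$ acts through its defining representation.)

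For the second part, I would first note that the Pauli operators lying in $\BC_\kappa$ are, up to a phase, exactly the ordered monomials $c_S:=c_{\mu_1}\cdots c_{\mu_\kappa}$ with $S=\{\mu_1<\cdots<\mu_\kappa\}$: each such monomial is a phase times a Pauli string, by Lemma~\ref{lem:modules} these form a basis of $\BC_\kappa$, and a dimension count against the $2^{2n}$ Pauli strings shows no other Pauli operator lands in $\BC_\kappa$. Since $\mathrm{ad}_{c_\alpha c_\beta}$ is a derivation of the operator algebra, the bracket above yields, whenever exactly one of $\alpha,\beta$ lies in $S$ (say $\alpha\in S$, $\beta\notin S$), the ``swap'' identity $[c_\alpha c_\beta,c_S]=\pm 2\,c_{S'}$ with $S'=(S\setminus\{\alpha\})\cup\{\beta\}$; the remaining generators either annihilate $c_S$ or map it out of $\BC_\kappa$ and are simply not used. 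It then suffices to invoke the elementary fact that any $\kappa$-element subset of $\{1,\dots,2n\}$ can be carried to any other by at most $\kappa$ such single-index swaps (pick $\alpha\in S\setminus S'$ and $\beta\in S'\setminus S$, swap, and induct on $|S\cap S'|$). Composing the corresponding commutators carries $c_S$ to a nonzero scalar multiple of $c_{S'}$, which is the claim; the cases $\kappa=0$ and $\kappa=2n$ are trivial since $\BC_0$ and $\BC_{2n}$ are one-dimensional.

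The step I expect to be the only real subtlety is the module-invariance: a brute-force expansion of $U(c_{\mu_1}\cdots c_{\mu_\kappa})U\ad$ produces terms with repeated Majorana indices that naively belong to lower-degree subspaces, and one would have to check they cancel; the eigenvalue characterization via $\LC$ makes this automatic. Everything else reduces to the bracket computation and the standard transitivity of single swaps on $\kappa$-element subsets.
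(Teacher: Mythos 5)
Your proof is correct, and for the first claim it takes a genuinely different route from the paper. The paper proves invariance by showing $Uc_\mu U\ad=\sum_\nu e^{2h}_{\mu\nu}c_\nu\in\BC_1$ and then asserting that $Uc_{\nu_1}U\ad\cdots Uc_{\nu_\kappa}U\ad\in\BC_\kappa$ ``readily follows''; as you correctly observe, that last step silently requires the repeated-index terms in the expansion $\sum_{\mu_1,\dots,\mu_\kappa}R_{\nu_1\mu_1}\cdots R_{\nu_\kappa\mu_\kappa}c_{\mu_1}\cdots c_{\mu_\kappa}$ to cancel, which ultimately rests on $RR\trp=\id$ and is not spelled out. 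Your characterization of $\BC_\kappa$ as the $\lambda_\kappa$-eigenspace of $\LC(M)=\sum_\mu c_\mu Mc_\mu$, with $\lambda_\kappa=(-1)^\kappa(2n-2\kappa)$ pairwise distinct (your parity argument ruling out $\lambda_\kappa=\lambda_{\kappa'}$ is right: opposite parity plus $\kappa+\kappa'=2n$ is self-contradictory) and $\LC$ commuting with conjugation by $G$ via $R\trp R=\id$, makes the invariance automatic and is arguably tighter than the published argument; the price is the extra computation of $\LC$ on monomials, which I checked and which is correct. Your second half is essentially the paper's proof: the single-swap commutator $[c_\alpha c_\beta,c_S]=\pm 2c_{S'}$ together with transitivity of swaps on $\kappa$-element subsets is exactly Eq.~\eqref{sup-eq:comm-algebra} and the sentence following it, with your identification of the Pauli strings in $\BC_\kappa$ as precisely the monomials $c_S$ (up to phase) supplying a detail the paper leaves implicit. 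One cosmetic slip: when neither or both of $\alpha,\beta$ lie in $S$ the commutator is simply zero, so nothing is ever ``mapped out of $\BC_\kappa$''; this does not affect your argument since you discard those generators anyway.
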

In the SI we present a general formula for a basis of the commutant of the tensor square representation of any Lie group with associated Pauli-string Lie algebra (such as $G$), which may be of independent interest for the reader. This result, along with the previous lemmas, allow us to compute the variance of the loss function  via the Weingarten calculus~\cite{ragone2022representation,garcia2023deep,mele2023introduction}, and prove the following theorem:
\begin{theorem}\label{theo:main}
The loss function in Eq.~\eqref{eq:loss} has mean
\begin{align}\label{eq:mean}
    \mathbb{E}_{\thv}[\ell_{\thv}(\rho,O)]&\!=\sum_{\kappa=0,2n}\!\langle\rho_\kappa,O_\kappa\rangle_{\id+P}\,,
\end{align}
and variance
\small
\begin{equation}\label{eq:variance}
\begin{split}
    \Var_{\thv}[\ell_{\thv}(\rho,O)]=   \sum_{\kappa=1}^{2n-1} &\frac{\PC_\kappa(\rho)\PC_\kappa(O)+\CC_\kappa(\rho)\CC_\kappa(O)}{\dim(\BC_\kappa)}\,.
\end{split}
\end{equation}
\normalsize
Here we defined the $\kappa$-purity of an operator $M\in\BC$ as $\PC_\kappa(M)=\langle M_\kappa,M_\kappa\rangle_{\id}$, and its $\kappa$-coherence  as $\CC_\kappa(M)=i^{\kappa \,{\rm mod\, 2}}\langle M_\kappa,M_{2n-\kappa}\rangle_{P}$. Moreover, for $M_1,M_2,\Gamma\in\BC$, we have $\langle M_1,M_2\rangle_{\Gamma}=\Tr[\Gamma M_1\ad M_2]$.
\end{theorem}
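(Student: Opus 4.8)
The plan is to compute the first and second moments of $\ell_{\thv}(\rho,O)$ directly via Weingarten/twirling calculus over the group $G$, using the module decomposition $\BC=\bigoplus_\kappa \BC_\kappa$ from Lemma~\ref{lem:modules} to diagonalize the averaging maps. First I would write $\mathbb{E}_{\thv}[\ell_{\thv}(\rho,O)]=\Tr\!\big[O\,\mathcal{T}_1(\rho)\big]$ with $\mathcal{T}_1(\cdot)=\int_G dU\,U(\cdot)U\ad$ the one-fold twirl. By Lemma~\ref{lem:inv} each $\BC_\kappa$ is a $G$-module, and the second part of Lemma~\ref{lem:inv} (all Pauli strings in $\BC_\kappa$ are connected by $\g$-commutators) implies each $\BC_\kappa$ for $1\le\kappa\le 2n-1$ is \emph{irreducible} with no trivial sub-representation, so $\mathcal{T}_1$ annihilates those blocks. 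The surviving blocks $\BC_0=\spn_{\Cbb}\id$ and $\BC_{2n}=\spn_{\Cbb}P$ are one-dimensional and fixed, giving $\mathcal{T}_1(\rho)=\rho_0+\rho_{2n}$ and hence $\mathbb{E}_{\thv}[\ell_{\thv}]=\langle\rho_0+\rho_{2n},O_0+O_{2n}\rangle=\sum_{\kappa=0,2n}\langle\rho_\kappa,O_\kappa\rangle_{\id}$; matching this to the stated $\langle\cdot,\cdot\rangle_{\id+P}$ form requires checking that the cross terms $\langle\rho_0,O_{2n}\rangle$ and $\langle\rho_{2n},O_0\rangle$ vanish (they do, since $\Tr[P]=0$ and $\Tr[\id\cdot P]=0$) while $\langle\rho_0,O_0\rangle+\langle\rho_{2n},O_{2n}\rangle$ reassembles as $\langle\rho_\kappa,O_\kappa\rangle_{\id}=\langle\rho_\kappa,O_\kappa\rangle_{P}$ on each of those blocks — a short bookkeeping step.

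For the variance I would write $\mathbb{E}_{\thv}[\ell_{\thv}^2]=\Tr\!\big[O^{\ot 2}\,\mathcal{T}_2(\rho^{\ot 2})\big]$ with $\mathcal{T}_2(\cdot)=\int_G dU\,U^{\ot2}(\cdot)U\ad{}^{\ot 2}$, the two-fold twirl, which is the orthogonal projector onto $\mathrm{Comm}(G^{\ot2})$, the commutant of the tensor-square representation. The key input is the explicit basis for this commutant promised in the SI (the ``general formula for a basis of the commutant of the tensor square representation of any Lie group with associated Pauli-string Lie algebra''): because each nontrivial module $\BC_\kappa$ is irreducible, Schur's lemma forces the commutant basis to consist of (i) the identity on each $\BC_\kappa\ot\BC_\kappa$ diagonal block, i.e. the ``swap-like'' operator $\frac{1}{\dim\BC_\kappa}\sum_{j}B_j^{(\kappa)}\ot B_j^{(\kappa)}$, plus (ii) off-diagonal intertwiners between $\BC_\kappa$ and $\BC_{\kappa'}$ only when those modules are isomorphic. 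The second sentence of Lemma~\ref{lem:modules} (basis elements of $\BC_\kappa$ are related to those of $\BC_{2n-\kappa}$ by multiplication by $P$) is exactly the statement that $\BC_\kappa\iso\BC_{2n-\kappa}$ as $G$-modules, and a dimension/structure check should show these are the \emph{only} nontrivial isomorphisms among the $\BC_\kappa$. So $\mathcal{T}_2$ is a sum over $\kappa$ of the within-block projectors together with, for each pair $\{\kappa,2n-\kappa\}$, the cross intertwiner $\frac{1}{\dim\BC_\kappa}\sum_j B_j^{(\kappa)}\ot (PB_j^{(\kappa)})$ (and its transpose). Plugging $\rho^{\ot2}$ in and contracting against $O^{\ot2}$ turns the within-block term into $\PC_\kappa(\rho)\PC_\kappa(O)/\dim\BC_\kappa$ and the cross term into $\CC_\kappa(\rho)\CC_\kappa(O)/\dim\BC_\kappa$, where the phase $i^{\kappa\,\mathrm{mod}\,2}$ arises from making the basis $\{B_j^{(\kappa)}\}$ Hermitian while keeping $PB_j^{(\kappa)}$ Hermitian as well (products of $\kappa$ Majoranas acquire a power of $i$ to be Hermitian, and multiplying by $P=(-i)^n c_1\cdots c_{2n}$ shifts that phase). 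Finally I would subtract $\mathbb{E}_{\thv}[\ell_{\thv}]^2$, which by the mean formula exactly removes the $\kappa=0$ and $\kappa=2n$ contributions, leaving the sum over $\kappa=1,\dots,2n-1$ in Eq.~\eqref{eq:variance}.

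The main obstacle is establishing that the commutant of $G^{\ot 2}$ contains \emph{no} intertwiners beyond the within-module identities and the $P$-multiplication maps $\BC_\kappa\leftrightarrow\BC_{2n-\kappa}$ — equivalently, that the $\BC_\kappa$ are pairwise non-isomorphic as $G$-modules except for that reflection pairing, and that each is irreducible. Irreducibility and the absence of a trivial sub-module in $\BC_\kappa$ ($1\le\kappa\le2n-1$) follow from the ``all Pauli strings connected by $\g$-commutators'' clause of Lemma~\ref{lem:inv}; non-isomorphism for $\kappa\ne\kappa',2n-\kappa'$ can be argued either by comparing dimensions $\binom{2n}{\kappa}$ (which are distinct except for the reflection symmetry) together with a Casimir-eigenvalue or highest-weight computation for $\mathfrak{so}(2n)$, since $\Lambda^\kappa(\Cbb^{2n})$ are the relevant fundamental-type representations and their invariants are classical. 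Carefully tracking the Hermiticity phases so that $\PC_\kappa$ and $\CC_\kappa$ come out real and match the stated definitions $\PC_\kappa(M)=\langle M_\kappa,M_\kappa\rangle_\id$, $\CC_\kappa(M)=i^{\kappa\,\mathrm{mod}\,2}\langle M_\kappa,M_{2n-\kappa}\rangle_P$ is the other place where the bookkeeping must be done with care, but it is routine once the commutant basis is pinned down.
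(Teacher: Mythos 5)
Your high-level strategy coincides with the paper's: compute both moments by projecting onto $\mathrm{comm}(G)$ and $\mathrm{comm}(G^{\otimes 2})$ via Weingarten calculus, organize everything by the module decomposition of Lemma~\ref{lem:modules}, and let the subtraction of the squared mean cancel the $\kappa=0,2n$ blocks. Where you diverge is in how the commutant of $G^{\otimes 2}$ is pinned down. The paper never invokes irreducibility: it proves a combinatorial theorem (valid for any Pauli-string DLA) stating that the quadratic symmetries are exactly $Q_\kappa^j=\sum_{S\in C_\kappa}S\otimes L_jS$, indexed by connected components $C_\kappa$ of a commutator graph and by the linear symmetries $L_j\in\{\id,P\}$ — connectivity alone, plus a coefficient-matching argument, fixes the commutant. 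You instead appeal to Schur's lemma, irreducibility of each $\BC_\kappa$, and the classification of isomorphisms among the $\Lambda^\kappa(\Cbb^{2n})$ modules of $\mathfrak{so}(2n)$. That route can be made to work, but it demands strictly more representation theory than the paper uses, and as written it contains a genuine gap.

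The gap is at $\kappa=n$, and more broadly in your justification of irreducibility. First, the claim that ``all Pauli strings in $\BC_\kappa$ connected by $\g$-commutators'' implies irreducibility is a non sequitur: transitivity of the adjoint action on a distinguished basis (up to proportionality) does not preclude invariant subspaces spanned by linear combinations of those basis elements. Second, the claim is in fact false for $\kappa=n$: as a $G$-module, $\BC_n\cong\Lambda^n(\Cbb^{2n})$ splits into two non-isomorphic irreducible halves (the paper exhibits this splitting explicitly via the parity-aware basis, where $\BC_n=\BC_n^e\oplus\BC_n^o$ and $Q_n^{+-}=Q_n^{-+}=0$). If you take your stated premise literally, Schur's lemma allots only a one-dimensional commutant to the $\BC_n\otimes\BC_n$ block, which would drop the $\CC_n(\rho)\CC_n(O)$ contribution and yield a variance formula that is wrong whenever $O$ has support in $\BC_n$; alternatively, including both $\sum_jB_j^{(n)}\otimes B_j^{(n)}$ and $\sum_jB_j^{(n)}\otimes PB_j^{(n)}$ (which are orthogonal, nonzero, and both in the commutant) directly contradicts the irreducibility premise. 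A correct dimension count — $\dim\mathrm{comm}(G^{\otimes2})=4n+2$, with each pair $\{\kappa,2n-\kappa\}$, $\kappa<n$, contributing four intertwiners and the middle module contributing two — forces the reducibility of $\BC_n$. To repair your argument you would need to (i) prove irreducibility of $\BC_\kappa$ for $\kappa\ne 0,n,2n$ by an actual highest-weight or Casimir computation rather than by graph connectivity, and (ii) treat $\kappa=n$ separately, showing $\BC_n=V_+\oplus V_-$ with $V_+\not\cong V_-$ so that exactly two independent intertwiners survive there and reassemble into the $\PC_n$ and $\CC_n$ terms of Eq.~\eqref{eq:variance}. The remaining bookkeeping (the mean, the Hermiticity phases $i^{\kappa\,\mathrm{mod}\,2}$, and the cancellation of the $\kappa=0,2n$ terms) matches the paper and is fine.
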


Theorem~\ref{theo:main} provides an exact closed formula for the loss variance for arbitrary input states and measurements, meaning that we can completely characterize the loss concentration phenomenon for parametrized matchgate circuits. This result shows that the mean of the loss is solely determined by the component of $\rho$ and $O$ in the $0$-th and $2n$-th modules, i.e., in the modules that commute with $G$. To unpack  Eq.~\eqref{eq:variance} we first consider the following case:

\begin{corollary}\label{cor:in-module}
Let $O\in\BC_\kappa$, with $\kappa\neq n$, then $
   \Var_{\thv}[\ell_{\thv}(\rho,O)]=  
\frac{\PC_\kappa(\rho)\PC_\kappa(O)}{\dim(\BC_\kappa)}$.
\end{corollary}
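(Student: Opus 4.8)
The plan is to specialize Theorem~\ref{theo:main} to an observable that lives entirely in a single module $\BC_\kappa$ and read off the consequences. First I would note that if $O\in\BC_\kappa$ then its module decomposition $O=\sum_{\kappa'} O_{\kappa'}$ collapses: $O_{\kappa'}=\delta_{\kappa'\kappa}O$. Hence in the variance sum~\eqref{eq:variance}, the purity factor $\PC_{\kappa'}(O)=\langle O_{\kappa'},O_{\kappa'}\rangle_{\id}$ vanishes for every $\kappa'\neq\kappa$, leaving only the $\kappa'=\kappa$ term $\PC_\kappa(\rho)\PC_\kappa(O)/\dim(\BC_\kappa)$. It remains to argue that the coherence contributions drop out as well.

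The coherence term in module $\kappa'$ is $\CC_{\kappa'}(O)=i^{\kappa'\,\mathrm{mod}\,2}\langle O_{\kappa'},O_{2n-\kappa'}\rangle_P$, which pairs the degree-$\kappa'$ component with the degree-$(2n-\kappa')$ component. Since $O$ is supported only on degree $\kappa$, this inner product is nonzero only if \emph{both} $\kappa'=\kappa$ and $2n-\kappa'=\kappa$, i.e. $\kappa=n$. The hypothesis $\kappa\neq n$ rules this out, so $\CC_{\kappa'}(O)=0$ for all $\kappa'$ in the range $1\le\kappa'\le 2n-1$, and every coherence term in~\eqref{eq:variance} vanishes. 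One should also handle the edge cases $\kappa=0$ and $\kappa=2n$: there $O$ lies in a module that commutes with $G$, the loss is constant over the landscape, and indeed $\PC_\kappa(O)$ with $\kappa\in\{0,2n\}$ does not appear in the sum~\eqref{eq:variance} (whose index runs only over $1,\dots,2n-1$), so the formula correctly gives variance zero — consistent with $\PC_\kappa(\rho)\PC_\kappa(O)/\dim(\BC_\kappa)$ being absent/irrelevant; I would simply remark that the stated formula is understood to be $0$ in those degenerate cases, or restrict attention to $1\le\kappa\le 2n-1$ as the interesting regime.

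Putting these together, only the single surviving term remains, giving $\Var_{\thv}[\ell_{\thv}(\rho,O)]=\PC_\kappa(\rho)\PC_\kappa(O)/\dim(\BC_\kappa)$, which is the claim. There is essentially no computational obstacle here; this is a direct corollary. The only point requiring a little care is the bookkeeping of which inner products survive — in particular making explicit that the coherence term couples module $\kappa'$ to module $2n-\kappa'$, so that a single-module observable can only contribute a coherence when $\kappa=n$, which is exactly the excluded case. Everything else follows by plugging $O_{\kappa'}=\delta_{\kappa'\kappa}O$ into~\eqref{eq:variance} and using the definitions of $\PC_\kappa$ and $\CC_\kappa$ from Theorem~\ref{theo:main}.
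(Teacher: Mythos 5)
Your proposal is correct and matches the paper's (implicit) argument: the corollary is a direct specialization of Theorem~\ref{theo:main}, where $O_{\kappa'}=\delta_{\kappa'\kappa}O$ kills all purity terms except $\kappa'=\kappa$, and the coherence $\CC_{\kappa'}(O)=i^{\kappa'\,\mathrm{mod}\,2}\langle O_{\kappa'},O_{2n-\kappa'}\rangle_P$ can only survive when $\kappa'=\kappa=2n-\kappa'$, i.e.\ $\kappa=n$, which is excluded. Your remark on the degenerate central modules $\kappa\in\{0,2n\}$ is a reasonable clarification the paper leaves implicit.
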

Corollary~\ref{cor:in-module} shows that when $O\in \BC_{\kappa}$ the variance depends solely on three quantities: The $\kappa$-purities of $\rho$ and $O$, and the dimension of $\BC_\kappa$. In particular, for the special case when $\kappa=2$, when $\BC_2={\rm span}_{\mathbb{C}}\,\g$,  we recover the main result of Refs.~\cite{fontana2023theadjoint,ragone2023unified}, whose interpretation we briefly recall. As discussed in~\cite{ragone2023unified}, $\PC_2(\rho)$ quantifies the generalized entanglement in the initial state~\cite{somma2004nature,somma2005quantum}, $\PC_2(O)$  the generalized locality of the measurement, and $\dim(\g)$ the expressiveness of the circuit~\cite{larocca2021diagnosing}. Here, one defines a generalization notion of entanglement and locality relative to a subspace of $\BC$, rather than the standard notion that relies on the subsystem decomposition of $\HC$. Moreover, we recall that $\dim(\g)$ is a ``local'' measure of  expressiveness of the circuit~\cite{larocca2021diagnosing,larocca2021theory}, as it quantifies the number of independent directions in the tangent plane at identity on the group's manifold. We can then
use these results to interpret our more general expression in Corollary~\ref{cor:in-module}. First, we have that the variance still depends on purities  but now with respect to the set of operators in the corresponding module $\BC_{\kappa}$.  In this sense, we will call $\rho$ generalized-unentangled if $\PC_\kappa(\rho)$ is  maximal (i.e., if $\rho$ belongs to the orbit of a simultaneous eigenstate of  a maximally commuting subspace of
$\BC_\kappa$), and generalized-entangled otherwise. 
Remarkably, there is additional operational meaning, inherited by the group defining our circuits: In the SI we show that $\PC_\mf{g}(\rho)$ is essentially the fermionic entanglement entropy introduced in \cite{gigena2015entanglement, gigena2020one}. The latter is a rigorous measure of total entanglement, developed with the aim of properly quantifying correlations between indistinguishable (fermionic) particles. 
In the SI we also discuss how  $\PC_\kappa(\rho)$ seems to provide more general measures of fermionic correlations, generalizing \cite{gigena2021many}, an extension of \cite{gigena2015entanglement, gigena2020one} itself.

\begin{figure*}[t!]
    \centering
    \includegraphics[width=\linewidth]{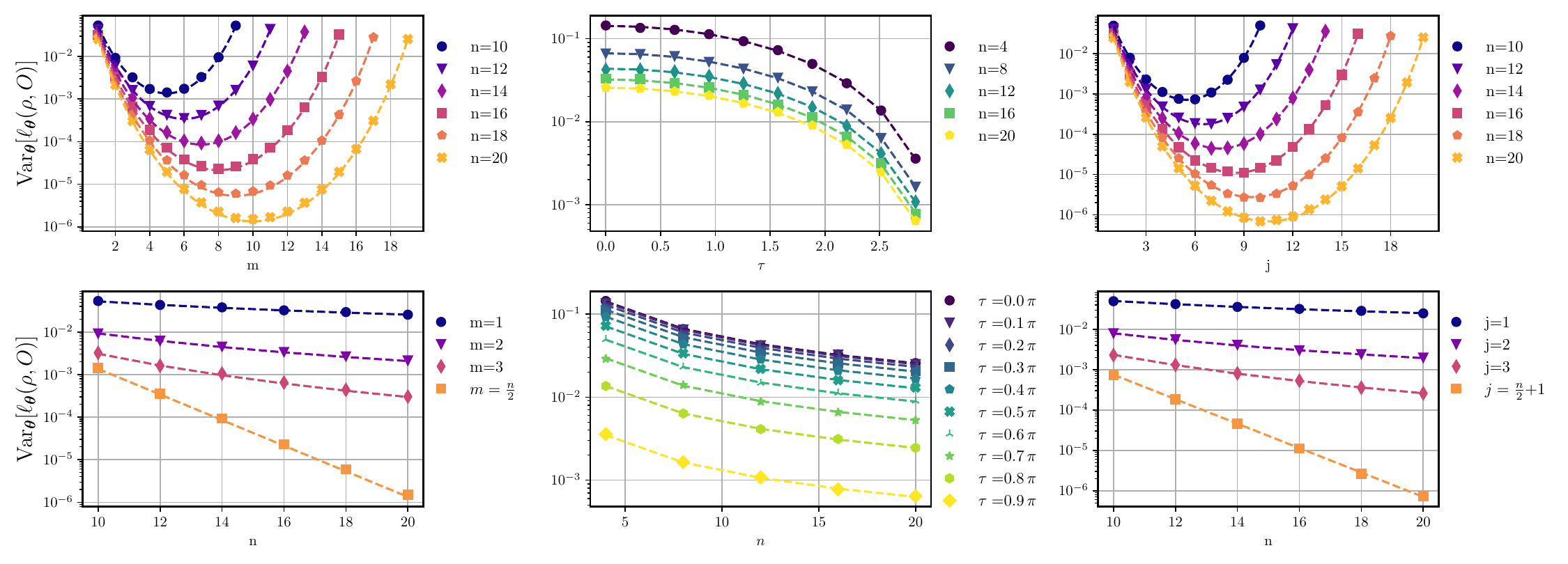}
    \caption{  Variance of the loss function for  
    three  different setups where we vary $\rho$ and $O$. Left:  $\rho=\left(|0\rangle\langle 0|\right)^{\otimes n}$ is a Gaussian state, and $O=Z^{\otimes m}\in \BC_{2m}$.  
    Middle:  $\rho=|\Psi(\tau)\rangle\langle\Psi(\tau)|$ is a magic state with $|\Psi(\tau)\rangle= \big[\frac{|0000\rangle+|0011\rangle+|1100\rangle+e^{i\tau}|1111\rangle}{2}\big]^{\otimes n/4}$, and $O=Z_1\in\BC_2$. Right: $\rho=\ket{\psi}\bra{\psi}$ is a non-fermionic state with $\ket{\psi}=\frac{1}{\sqrt{2}}\left(\ket{0}^{\otimes n} + \ket{1}\ket{0}^{\otimes n-1}\right)$, and $O=X_j\in\BC_{2j-1}$. In the top row  we respectively show the variance scaling as a function of the parameters $m$, $\tau$ and $j$ for different number of qubits while in the bottom row we fix the parameter value and study the scaling with $n$. In all cases, each dot is computed from $10^4$ independent random  parameter values uniformly drawn from $[0,2\pi]$, and the quantum circuit consists of $L=n^2$ layers to ensure good convergence to an approximate $2$-design. Simulations were performed with the open-source library \texttt{Qibo}~\cite{efthymiou2020qibo,efthymiou2022quantum}.  Dashed lines correspond to our derived  analytical expressions. }
    \label{fig:numerics}
\end{figure*}

Next, we will say that $O$ is $\kappa$-generalized pure with respect to $\BC_\kappa$ if $O_\kappa=O$. This implies that more generalized-entangled states, or less generalized-pure measurements lead to smaller variances. 
More interestingly, we now see that the variance is inversely proportional to $\dim(\BC_\kappa)=\binom{2n}{\kappa}$. This reveals that when $\kappa\not\in\{ 2,2n-2\}$, the variance does not satisfy the 
inverse-polynomial-in-$\dim(\mf{g})$ scaling conjectured in~\cite{larocca2021diagnosing}. 
Notably, $\dim(\BC_\kappa)$ quantifies the dimension of the operator subspace explored by the circuit, and hence constitutes a generalized measure of expressiveness. That is, the larger the module, the larger the subspace explored, the more expressive the circuit, and therefore the more concentrated the variance (in accordance with the results in~\cite{holmes2021connecting}). For instance, we can see that as $\kappa$ increases (up to $\kappa=n$), the dimension of the modules quickly become exponentially large, indicating that the loss functions are prone to expressiveness-induced BPs. In fact, we can also provide operational meaning to this result by defining a notion of generalized globality. In particular, recalling that the elements in $\BC_\kappa$ are expressed in terms of products  of $\kappa$ Majoranas, we can always say that an operator is ``less'' generalized-global if $\kappa\,{\rm mod\, }n\in\Theta(1)$ is small, whereas it is ``more'' generalized-global if $\kappa\,{\rm mod\, }n\in\Theta(n)$ (note that we take modulo $n$ to account for the parity symmetry). With this interpretation, measuring a generalized global operator given by a product of many Majoranas  leads to smaller variances (since $\dim(\BC_\kappa)$ is larger). Conceptually, this phenomenon generalizes the results in~\cite{cerezo2020cost} which show that for certain circuit architectures measuring global operators (in the standard sense of acting non-trivially on many qubits) leads to BPs.

Next, let us consider a slightly more general setting.
\begin{corollary}\label{cor:in-moduleskkp}
Let $O\in\BC_\kappa\oplus \BC_{\kappa'}$, with $\kappa,\kappa'\neq n$, then
\begin{equation}\label{eq:o-in-bkbkp}
\begin{split}
   \Var_{\thv}[\ell_{\thv}(\rho,O)]=&  
 \frac{\PC_\kappa(\rho)\PC_\kappa(O)}{\dim(\BC_\kappa)}+\frac{\PC_{\kappa'}(\rho)\PC_{\kappa'}(O)}{\dim(\BC_{\kappa'})}\\
 &+2\delta_{k+k',2n}\frac{\CC_\kappa(\rho)\CC_\kappa(O)}{\dim(\BC_\kappa)} \,.
 \end{split}
\end{equation}
\end{corollary}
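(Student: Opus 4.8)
\medskip

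\noindent\textbf{Proof proposal for Corollary~\ref{cor:in-moduleskkp}.} The plan is to derive the statement as a direct specialization of Theorem~\ref{theo:main}, the only genuine work being some sign bookkeeping with the parity operator $P$. Assume $\kappa\neq\kappa'$ (if $\kappa=\kappa'$ then $\BC_\kappa\oplus\BC_{\kappa'}=\BC_\kappa$ and the claim reduces to Corollary~\ref{cor:in-module}).

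First I would invoke Lemma~\ref{lem:modules} to write $O=\sum_{j=0}^{2n}O_j$ with $O_j\in\BC_j$; the hypothesis $O\in\BC_\kappa\oplus\BC_{\kappa'}$ is then exactly the statement that $O_j=0$ for all $j\notin\{\kappa,\kappa'\}$. Substituting into the variance formula of Eq.~\eqref{eq:variance}, the purity factor $\PC_j(O)=\langle O_j,O_j\rangle_\id$ is nonzero only for $j\in\{\kappa,\kappa'\}$, which at once produces the first two terms of Eq.~\eqref{eq:o-in-bkbkp}. The coherence factor $\CC_j(O)=i^{j\bmod 2}\langle O_j,O_{2n-j}\rangle_P$ requires \emph{both} $j$ and $2n-j$ to lie in $\{\kappa,\kappa'\}$. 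Since $j=2n-j$ would force $j=n$, which is excluded, this is possible only when $\{j,2n-j\}=\{\kappa,\kappa'\}$, i.e.\ only when $\kappa+\kappa'=2n$; this is precisely the Kronecker delta $\delta_{\kappa+\kappa',2n}$ appearing in the statement, and when it holds exactly the two indices $j=\kappa$ and $j=\kappa'$ survive in the sum.

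The main obstacle---and essentially the only nontrivial point---is showing that these two surviving coherence contributions are equal, which is what generates the overall factor of $2$. Because $\dim(\BC_{\kappa'})=\binom{2n}{2n-\kappa}=\binom{2n}{\kappa}=\dim(\BC_\kappa)$, it suffices to prove $\CC_{2n-\kappa}(M)=(-1)^\kappa\CC_\kappa(M)$ for Hermitian $M$. I would get this from the relation $P\,M_\kappa=(-1)^\kappa M_\kappa\,P$ for any $M_\kappa\in\BC_\kappa$, which holds because $P=(-i)^n c_1\cdots c_{2n}$ anticommutes with each Majorana $c_\mu$ and $M_\kappa$ is a combination of products of $\kappa$ Majoranas; together with Hermiticity of $M_\kappa$, $M_{2n-\kappa}$ and $P$ (so the Hilbert--Schmidt overlaps are real), cyclicity of the trace, and $i^{(2n-\kappa)\bmod 2}=i^{\kappa\bmod 2}$ (as $2n$ is even), a short computation gives the claimed identity. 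Applying it to $\rho$ and to $O$ yields $\CC_{\kappa'}(\rho)\CC_{\kappa'}(O)=(-1)^{2\kappa}\CC_\kappa(\rho)\CC_\kappa(O)=\CC_\kappa(\rho)\CC_\kappa(O)$, so each coherence term equals $\CC_\kappa(\rho)\CC_\kappa(O)/\dim(\BC_\kappa)$ and the two together reproduce the last term of Eq.~\eqref{eq:o-in-bkbkp}. Collecting the purity and coherence pieces finishes the argument. As a consistency check I would note that for $\kappa+\kappa'\neq 2n$ the formula collapses to two decoupled Corollary~\ref{cor:in-module}-type terms, and that each term carries the expected $1/\binom{2n}{\kappa}$ scaling.
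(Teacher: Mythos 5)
Your proposal is correct and follows the route the paper intends: Corollary~\ref{cor:in-moduleskkp} is stated as an immediate specialization of Theorem~\ref{theo:main}, and you correctly identify that the only substantive point is why the two surviving coherence terms coincide, which your identity $\CC_{2n-\kappa}(M)=(-1)^{\kappa}\CC_{\kappa}(M)$ (from $PM_{\kappa}=(-1)^{\kappa}M_{\kappa}P$, cyclicity, and $i^{(2n-\kappa)\bmod 2}=i^{\kappa\bmod 2}$) together with $\dim(\BC_{\kappa'})=\dim(\BC_{\kappa})$ settles, yielding the factor of $2$. No gaps.
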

The first two terms in Eq.~\eqref{eq:o-in-bkbkp} account for the $\kappa$-purities of $\rho$ and $O$, and the expressiveness in each module. However a covariance appears for the special case when $k'=2n-k$ (as these modules are isomorphic, and thus satisfy the conditions proposed in~\cite{fontana2023theadjoint} for the covariance to be non-zero).  In particular, the relative sign between the basis elements determines whether the $\kappa$-coherences are  positive or negative, and whether this term increases or decreases the variance. Note that in the special case of $\kappa=n$ both $B_j$ and $PB_j$ are in $\BC_n$, meaning that the variance of an operator in $\BC_n$ can always contain $\kappa$-coherences.  In the SI, we show that choosing different bases for the tensor square commutant leads to different interpretations for the terms in $\Var_{\thv}[\ell_{\thv}(\rho,O)]$. For instance, we find that the module $\BC_n$ does not produce coherences in a parity-aware basis, as it can be decomposed into two non-isomorphic sub-modules. We also show that in this basis,  the coherences disappear for Fermionic states and measurements $\forall \kappa$.

Taken together, the previous corollaries allow us to understand  Theorem~\ref{theo:main}. That is, Eq.~\eqref{eq:variance} indicates that the loss variance depends on the generalized purities, entanglement, and expressiveness in each $G$-module,  but that at the same time there also exist contributions arising from the coherences in the initial state and measurement operator between the isomorphic modules connected by parity. 

At this point we highlight the fact that the generalized entanglement across all $G$-modules must satisfy a monogamy of generalized-entanglement relation. Recall that the standard monogamy of entanglement results restrict how much entanglement can be shared among multi-party systems~\cite{coffman2000distributed}. That is, if two parties are highly entangled among themselves, then they cannot be too entangled with the rest. We see that since $\sum_{\kappa=0}^{2n}\PC_{\kappa}(\rho)=\Tr[\rho^2]/2^n$, then  the total amount of generalized  entanglement is bounded by  the standard purity of $\rho$. Hence, if one (or) some of the $\PC_{\kappa}(\rho)$ are small, indicating high generalized entanglement in these $G$-modules, then the rest of the $\kappa$-purities must be large, implying small amounts of generalized entanglement in the rest.

\textit{Illustrative examples}. 
We now present examples capturing different features revealed by our formalism. We first  consider  the case when $\rho$ is a pure fermionic Gaussian state and $O\in \BC_\kappa$ with $\kappa$ even, e.g., $O=Z^{\otimes \kappa/2}$. We can use Theorem \ref{theo:main} to find
$ \Var_{\thv}[\ell_{\thv}(\rho,O)]= \binom{n}{\kappa/2}  \binom{2n}{\kappa}^{-1}$ (see the SI), indicating that the variance decreases as $\kappa\,{\rm mod\, }n$  increases (see Fig.~\ref{fig:numerics} (left, top)). Indeed, we can see from Fig.~\ref{fig:numerics} (left, bottom) that if $\kappa$ does not scale with $n$, then $ \Var_{\thv}[\ell_{\thv}(\rho,O)]\in\OC(1/\poly(n))$ and the loss has no BP. However, if  $\kappa$ scales super-logarithmically with $n$, a BP is induced by the globality of the measurement operator (i.e., by the high expressiveness of the circuit in this module).  

Next, we consider a family of fermionic magic (non-Gaussian) states~ \cite{hebenstreit2019all} given by  ${|\Psi(\tau)\rangle= \big[\frac{|0000\rangle+|0011\rangle+|1100\rangle+e^{i\tau}|1111\rangle}{2}\big]^{\otimes n/4}}$. These states have been shown to be a resource in quantum computation, being able to  promote  
matchgate circuits to universal quantum computation via gadgetization~\cite{reardon2023improved}. They also have an exponential  ``extent''~\cite{dias2023classical,cudby2023gaussian} for any fixed $\tau\in (0,\pi]$, meaning that their expansion in Gaussian states requires exponentially many terms leading to nonsimulable loss functions via Wick's theorem. Interestingly, while the variance decreases for increasing magic parameter $\tau$, this model is trainable for any $\pi-\tau\in \Omega(1/\poly(n))$, which includes states with exponential extent.
In the SI we use recent fermionic entanglement techniques \cite{gigena2015entanglement} to show that for $O\in i\g$, $\mathcal{P}_2$ is 
``additive'' with respect to $n$, yielding $\Var_{\thv}[\ell_{\thv}(\rho,O)]= 
n\cos^2(\tau/2)\binom{2n}{2}^{-1}$.

To finish, we consider the non-fermionic initial state $|\psi\rangle=\alpha |0\rangle^{\otimes n}+\beta |1\rangle|0\rangle^{\otimes n-1}$ and non-fermionic measure operator $O=X_j\in\BC_{2j-1}$ (by non-fermionic we mean  not commuting with $P$). In this case we find $\Var_{\thv}[C(\thv)]=4\alpha^2\beta^2 \binom{n-1}{j-1}\binom{2n}{2j-1}^{-1}$. While the operators $X_j$ are clearly local in the conventional sense, they have generalize globality $(2j-1)\,{\rm mod\, }\,n$, again verifying that variance decreases with generalized globality (see  Fig.~\ref{fig:numerics} (right)). 
This also implies the following surprising result: A loss function with  $O=Z_{ n/2}\in\BC_2$ (assume $n$ even) has no BP, while one with $O=X_{ n/2}\in\BC_{n-1}$ has BPs. Since one can map $Z_{ n/2}\to X_{ n/2}$ via a rotation around the $Y$ axis on qubit $n/2$, we can see that adding \emph{just one} single qubit rotation at the end of the circuit can transform a loss with no-BPs into one with BPs.

\textit{Discussion and Outlook}. As variational quantum computing continues to be an increasingly popular candidate for practical quantum applications, BPs stand as a fundamental challenge that may curtail its utility. In this context, our work provides a crucial step towards a general understating of the BP phenomenon. By focusing on parametrized matchgate circuits we can go further than, and generalize, previous results requiring certain assumptions on the initial state and measurement operator. Our results not only confirm previous intuitions (like the existence of covariances from isomorphic modules proposed in Ref.~\cite{fontana2023theadjoint}), but also reveal new BP physics such as the generalized globality and the  module expressiveness. We thus hope that our results will inspire the community to take the final steps towards a more general ultimate BP theory. 

At a more fundamental level, we prove an intrinsic connection between generalized entanglement and known fermionic entanglement measures \cite{gigena2015entanglement, gigena2020one,gigena2021many}. On its own, this connection is remarkable, as the underlying physical system in our framework is composed of (distinguishable) qubits on a quantum computer, and not of actual fermions. Moreover, this connection uncovers the alarming fact that variances become smaller when the quantum resources (measured by fermionic entanglement) are increased (see remarks on the SI). 
Such observation also raises a critical question: \textit{Does the inherent structure 
that precludes the presence of BPs in a variational model (a requisite for trainability) simultaneously render it classically simulable?} Answering this question is crucial in the pursuit for variational quantum advantage. While our results provide the basis for addressing the problem rigorously, they also show that the question is deeply connected  
with challenging foundational problems, such as the proper definition and characterization of quantum correlations 
for indistinguishable particles and their connection with computational hardness.

\section*{Acknowledgments}

The authors would like to thank Nicolás Gigena, Raúl Rossignoli, Bojko N. Bakalov, Sam Slezak, and  Dylan Herman for fruitful discussions. N.L.D,  M.C. and D.G.M. were supported by the Laboratory Directed Research and Development (LDRD) program of Los Alamos National Laboratory (LANL) under project numbers 20210116DR, 20230049DR and 20230527ECR, respectively. N.L.D. was initially supported by CONICET  Argentina. M.L. also acknowledges support by the Center for Nonlinear Studies at LANL by the U.S. DOE, Office of Science, Office of Advanced Scientific Computing Research, under the Accelerated Research in Quantum Computing (ARQC) program. M.C. was also supported by LANL's ASC Beyond Moore’s Law project. S.K. was supported by the Department of Defense (DoD) through the National Defense Science \& Engineering Graduate (NDSEG) Fellowship Program.

\bibliography{quantum}

\begin{thebibliography}{68}%
\makeatletter
\providecommand \@ifxundefined [1]{%
 \@ifx{#1\undefined}
}%
\providecommand \@ifnum [1]{%
 \ifnum #1\expandafter \@firstoftwo
 \else \expandafter \@secondoftwo
 \fi
}%
\providecommand \@ifx [1]{%
 \ifx #1\expandafter \@firstoftwo
 \else \expandafter \@secondoftwo
 \fi
}%
\providecommand \natexlab [1]{#1}%
\providecommand \enquote  [1]{``#1''}%
\providecommand \bibnamefont  [1]{#1}%
\providecommand \bibfnamefont [1]{#1}%
\providecommand \citenamefont [1]{#1}%
\providecommand \href@noop [0]{\@secondoftwo}%
\providecommand \href [0]{\begingroup \@sanitize@url \@href}%
\providecommand \@href[1]{\@@startlink{#1}\@@href}%
\providecommand \@@href[1]{\endgroup#1\@@endlink}%
\providecommand \@sanitize@url [0]{\catcode `\\12\catcode `\$12\catcode
  `\&12\catcode `\#12\catcode `\^12\catcode `\_12\catcode `\%12\relax}%
\providecommand \@@startlink[1]{}%
\providecommand \@@endlink[0]{}%
\providecommand \url  [0]{\begingroup\@sanitize@url \@url }%
\providecommand \@url [1]{\endgroup\@href {#1}{\urlprefix }}%
\providecommand \urlprefix  [0]{URL }%
\providecommand \Eprint [0]{\href }%
\providecommand \doibase [0]{https://doi.org/}%
\providecommand \selectlanguage [0]{\@gobble}%
\providecommand \bibinfo  [0]{\@secondoftwo}%
\providecommand \bibfield  [0]{\@secondoftwo}%
\providecommand \translation [1]{[#1]}%
\providecommand \BibitemOpen [0]{}%
\providecommand \bibitemStop [0]{}%
\providecommand \bibitemNoStop [0]{.\EOS\space}%
\providecommand \EOS [0]{\spacefactor3000\relax}%
\providecommand \BibitemShut  [1]{\csname bibitem#1\endcsname}%
\let\auto@bib@innerbib\@empty
\bibitem [{\citenamefont {Cerezo}\ \emph
  {et~al.}(2021{\natexlab{a}})\citenamefont {Cerezo}, \citenamefont
  {Arrasmith}, \citenamefont {Babbush}, \citenamefont {Benjamin}, \citenamefont
  {Endo}, \citenamefont {Fujii}, \citenamefont {McClean}, \citenamefont
  {Mitarai}, \citenamefont {Yuan}, \citenamefont {Cincio},\ and\ \citenamefont
  {Coles}}]{cerezo2020variationalreview}%
  \BibitemOpen
  \bibfield  {author} {\bibinfo {author} {\bibfnamefont {M.}~\bibnamefont
  {Cerezo}}, \bibinfo {author} {\bibfnamefont {A.}~\bibnamefont {Arrasmith}},
  \bibinfo {author} {\bibfnamefont {R.}~\bibnamefont {Babbush}}, \bibinfo
  {author} {\bibfnamefont {S.~C.}\ \bibnamefont {Benjamin}}, \bibinfo {author}
  {\bibfnamefont {S.}~\bibnamefont {Endo}}, \bibinfo {author} {\bibfnamefont
  {K.}~\bibnamefont {Fujii}}, \bibinfo {author} {\bibfnamefont {J.~R.}\
  \bibnamefont {McClean}}, \bibinfo {author} {\bibfnamefont {K.}~\bibnamefont
  {Mitarai}}, \bibinfo {author} {\bibfnamefont {X.}~\bibnamefont {Yuan}},
  \bibinfo {author} {\bibfnamefont {L.}~\bibnamefont {Cincio}},\ and\ \bibinfo
  {author} {\bibfnamefont {P.~J.}\ \bibnamefont {Coles}},\ }\bibfield  {title}
  {\bibinfo {title} {Variational quantum algorithms},\ }\href
  {https://doi.org/10.1038/s42254-021-00348-9} {\bibfield  {journal} {\bibinfo
  {journal} {Nature Reviews Physics}\ }\textbf {\bibinfo {volume} {3}},\
  \bibinfo {pages} {625–644} (\bibinfo {year}
  {2021}{\natexlab{a}})}\BibitemShut {NoStop}%
\bibitem [{\citenamefont {Bharti}\ \emph {et~al.}(2022)\citenamefont {Bharti},
  \citenamefont {Cervera-Lierta}, \citenamefont {Kyaw}, \citenamefont {Haug},
  \citenamefont {Alperin-Lea}, \citenamefont {Anand}, \citenamefont {Degroote},
  \citenamefont {Heimonen}, \citenamefont {Kottmann}, \citenamefont {Menke}
  \emph {et~al.}}]{bharti2021noisy}%
  \BibitemOpen
  \bibfield  {author} {\bibinfo {author} {\bibfnamefont {K.}~\bibnamefont
  {Bharti}}, \bibinfo {author} {\bibfnamefont {A.}~\bibnamefont
  {Cervera-Lierta}}, \bibinfo {author} {\bibfnamefont {T.~H.}\ \bibnamefont
  {Kyaw}}, \bibinfo {author} {\bibfnamefont {T.}~\bibnamefont {Haug}}, \bibinfo
  {author} {\bibfnamefont {S.}~\bibnamefont {Alperin-Lea}}, \bibinfo {author}
  {\bibfnamefont {A.}~\bibnamefont {Anand}}, \bibinfo {author} {\bibfnamefont
  {M.}~\bibnamefont {Degroote}}, \bibinfo {author} {\bibfnamefont
  {H.}~\bibnamefont {Heimonen}}, \bibinfo {author} {\bibfnamefont {J.~S.}\
  \bibnamefont {Kottmann}}, \bibinfo {author} {\bibfnamefont {T.}~\bibnamefont
  {Menke}}, \emph {et~al.},\ }\bibfield  {title} {\bibinfo {title} {Noisy
  intermediate-scale quantum algorithms},\ }\href
  {https://doi.org/10.1103/RevModPhys.94.015004} {\bibfield  {journal}
  {\bibinfo  {journal} {Reviews of Modern Physics}\ }\textbf {\bibinfo {volume}
  {94}},\ \bibinfo {pages} {015004} (\bibinfo {year} {2022})}\BibitemShut
  {NoStop}%
\bibitem [{\citenamefont {Endo}\ \emph {et~al.}(2021)\citenamefont {Endo},
  \citenamefont {Cai}, \citenamefont {Benjamin},\ and\ \citenamefont
  {Yuan}}]{endo2021hybrid}%
  \BibitemOpen
  \bibfield  {author} {\bibinfo {author} {\bibfnamefont {S.}~\bibnamefont
  {Endo}}, \bibinfo {author} {\bibfnamefont {Z.}~\bibnamefont {Cai}}, \bibinfo
  {author} {\bibfnamefont {S.~C.}\ \bibnamefont {Benjamin}},\ and\ \bibinfo
  {author} {\bibfnamefont {X.}~\bibnamefont {Yuan}},\ }\bibfield  {title}
  {\bibinfo {title} {Hybrid quantum-classical algorithms and quantum error
  mitigation},\ }\href {https://doi.org/10.7566/JPSJ.90.032001} {\bibfield
  {journal} {\bibinfo  {journal} {Journal of the Physical Society of Japan}\
  }\textbf {\bibinfo {volume} {90}},\ \bibinfo {pages} {032001} (\bibinfo
  {year} {2021})}\BibitemShut {NoStop}%
\bibitem [{\citenamefont {Schuld}\ \emph {et~al.}(2015)\citenamefont {Schuld},
  \citenamefont {Sinayskiy},\ and\ \citenamefont
  {Petruccione}}]{schuld2015introduction}%
  \BibitemOpen
  \bibfield  {author} {\bibinfo {author} {\bibfnamefont {M.}~\bibnamefont
  {Schuld}}, \bibinfo {author} {\bibfnamefont {I.}~\bibnamefont {Sinayskiy}},\
  and\ \bibinfo {author} {\bibfnamefont {F.}~\bibnamefont {Petruccione}},\
  }\bibfield  {title} {\bibinfo {title} {An introduction to quantum machine
  learning},\ }\href {https://doi.org/10.1080/00107514.2014.964942} {\bibfield
  {journal} {\bibinfo  {journal} {Contemporary Physics}\ }\textbf {\bibinfo
  {volume} {56}},\ \bibinfo {pages} {172} (\bibinfo {year} {2015})}\BibitemShut
  {NoStop}%
\bibitem [{\citenamefont {Biamonte}\ \emph {et~al.}(2017)\citenamefont
  {Biamonte}, \citenamefont {Wittek}, \citenamefont {Pancotti}, \citenamefont
  {Rebentrost}, \citenamefont {Wiebe},\ and\ \citenamefont
  {Lloyd}}]{biamonte2017quantum}%
  \BibitemOpen
  \bibfield  {author} {\bibinfo {author} {\bibfnamefont {J.}~\bibnamefont
  {Biamonte}}, \bibinfo {author} {\bibfnamefont {P.}~\bibnamefont {Wittek}},
  \bibinfo {author} {\bibfnamefont {N.}~\bibnamefont {Pancotti}}, \bibinfo
  {author} {\bibfnamefont {P.}~\bibnamefont {Rebentrost}}, \bibinfo {author}
  {\bibfnamefont {N.}~\bibnamefont {Wiebe}},\ and\ \bibinfo {author}
  {\bibfnamefont {S.}~\bibnamefont {Lloyd}},\ }\bibfield  {title} {\bibinfo
  {title} {Quantum machine learning},\ }\href
  {https://doi.org/10.1038/nature23474} {\bibfield  {journal} {\bibinfo
  {journal} {Nature}\ }\textbf {\bibinfo {volume} {549}},\ \bibinfo {pages}
  {195} (\bibinfo {year} {2017})}\BibitemShut {NoStop}%
\bibitem [{\citenamefont {Havl{\'\i}{\v{c}}ek}\ \emph
  {et~al.}(2019)\citenamefont {Havl{\'\i}{\v{c}}ek}, \citenamefont
  {C{\'o}rcoles}, \citenamefont {Temme}, \citenamefont {Harrow}, \citenamefont
  {Kandala}, \citenamefont {Chow},\ and\ \citenamefont
  {Gambetta}}]{havlivcek2019supervised}%
  \BibitemOpen
  \bibfield  {author} {\bibinfo {author} {\bibfnamefont {V.}~\bibnamefont
  {Havl{\'\i}{\v{c}}ek}}, \bibinfo {author} {\bibfnamefont {A.~D.}\
  \bibnamefont {C{\'o}rcoles}}, \bibinfo {author} {\bibfnamefont
  {K.}~\bibnamefont {Temme}}, \bibinfo {author} {\bibfnamefont {A.~W.}\
  \bibnamefont {Harrow}}, \bibinfo {author} {\bibfnamefont {A.}~\bibnamefont
  {Kandala}}, \bibinfo {author} {\bibfnamefont {J.~M.}\ \bibnamefont {Chow}},\
  and\ \bibinfo {author} {\bibfnamefont {J.~M.}\ \bibnamefont {Gambetta}},\
  }\bibfield  {title} {\bibinfo {title} {Supervised learning with
  quantum-enhanced feature spaces},\ }\href
  {https://doi.org/10.1038/s41586-019-0980-2} {\bibfield  {journal} {\bibinfo
  {journal} {Nature}\ }\textbf {\bibinfo {volume} {567}},\ \bibinfo {pages}
  {209} (\bibinfo {year} {2019})}\BibitemShut {NoStop}%
\bibitem [{\citenamefont {McClean}\ \emph {et~al.}(2018)\citenamefont
  {McClean}, \citenamefont {Boixo}, \citenamefont {Smelyanskiy}, \citenamefont
  {Babbush},\ and\ \citenamefont {Neven}}]{mcclean2018barren}%
  \BibitemOpen
  \bibfield  {author} {\bibinfo {author} {\bibfnamefont {J.~R.}\ \bibnamefont
  {McClean}}, \bibinfo {author} {\bibfnamefont {S.}~\bibnamefont {Boixo}},
  \bibinfo {author} {\bibfnamefont {V.~N.}\ \bibnamefont {Smelyanskiy}},
  \bibinfo {author} {\bibfnamefont {R.}~\bibnamefont {Babbush}},\ and\ \bibinfo
  {author} {\bibfnamefont {H.}~\bibnamefont {Neven}},\ }\bibfield  {title}
  {\bibinfo {title} {Barren plateaus in quantum neural network training
  landscapes},\ }\href {https://doi.org/10.1038/s41467-018-07090-4} {\bibfield
  {journal} {\bibinfo  {journal} {Nature {C}ommunications}\ }\textbf {\bibinfo
  {volume} {9}},\ \bibinfo {pages} {1} (\bibinfo {year} {2018})}\BibitemShut
  {NoStop}%
\bibitem [{\citenamefont {Cerezo}\ \emph
  {et~al.}(2021{\natexlab{b}})\citenamefont {Cerezo}, \citenamefont {Sone},
  \citenamefont {Volkoff}, \citenamefont {Cincio},\ and\ \citenamefont
  {Coles}}]{cerezo2020cost}%
  \BibitemOpen
  \bibfield  {author} {\bibinfo {author} {\bibfnamefont {M.}~\bibnamefont
  {Cerezo}}, \bibinfo {author} {\bibfnamefont {A.}~\bibnamefont {Sone}},
  \bibinfo {author} {\bibfnamefont {T.}~\bibnamefont {Volkoff}}, \bibinfo
  {author} {\bibfnamefont {L.}~\bibnamefont {Cincio}},\ and\ \bibinfo {author}
  {\bibfnamefont {P.~J.}\ \bibnamefont {Coles}},\ }\bibfield  {title} {\bibinfo
  {title} {Cost function dependent barren plateaus in shallow parametrized
  quantum circuits},\ }\href {https://doi.org/10.1038/s41467-021-21728-w}
  {\bibfield  {journal} {\bibinfo  {journal} {Nature {C}ommunications}\
  }\textbf {\bibinfo {volume} {12}},\ \bibinfo {pages} {1} (\bibinfo {year}
  {2021}{\natexlab{b}})}\BibitemShut {NoStop}%
\bibitem [{\citenamefont {Marrero}\ \emph {et~al.}(2021)\citenamefont
  {Marrero}, \citenamefont {Kieferov{\'a}},\ and\ \citenamefont
  {Wiebe}}]{marrero2020entanglement}%
  \BibitemOpen
  \bibfield  {author} {\bibinfo {author} {\bibfnamefont {C.~O.}\ \bibnamefont
  {Marrero}}, \bibinfo {author} {\bibfnamefont {M.}~\bibnamefont
  {Kieferov{\'a}}},\ and\ \bibinfo {author} {\bibfnamefont {N.}~\bibnamefont
  {Wiebe}},\ }\bibfield  {title} {\bibinfo {title} {Entanglement-induced barren
  plateaus},\ }\href {https://doi.org/10.1103/PRXQuantum.2.040316} {\bibfield
  {journal} {\bibinfo  {journal} {PRX Quantum}\ }\textbf {\bibinfo {volume}
  {2}},\ \bibinfo {pages} {040316} (\bibinfo {year} {2021})}\BibitemShut
  {NoStop}%
\bibitem [{\citenamefont {Sharma}\ \emph {et~al.}(2022)\citenamefont {Sharma},
  \citenamefont {Cerezo}, \citenamefont {Cincio},\ and\ \citenamefont
  {Coles}}]{sharma2020trainability}%
  \BibitemOpen
  \bibfield  {author} {\bibinfo {author} {\bibfnamefont {K.}~\bibnamefont
  {Sharma}}, \bibinfo {author} {\bibfnamefont {M.}~\bibnamefont {Cerezo}},
  \bibinfo {author} {\bibfnamefont {L.}~\bibnamefont {Cincio}},\ and\ \bibinfo
  {author} {\bibfnamefont {P.~J.}\ \bibnamefont {Coles}},\ }\bibfield  {title}
  {\bibinfo {title} {Trainability of dissipative perceptron-based quantum
  neural networks},\ }\href {https://doi.org/10.1103/PhysRevLett.128.180505}
  {\bibfield  {journal} {\bibinfo  {journal} {Physical Review Letters}\
  }\textbf {\bibinfo {volume} {128}},\ \bibinfo {pages} {180505} (\bibinfo
  {year} {2022})}\BibitemShut {NoStop}%
\bibitem [{\citenamefont {Patti}\ \emph {et~al.}(2021)\citenamefont {Patti},
  \citenamefont {Najafi}, \citenamefont {Gao},\ and\ \citenamefont
  {Yelin}}]{patti2020entanglement}%
  \BibitemOpen
  \bibfield  {author} {\bibinfo {author} {\bibfnamefont {T.~L.}\ \bibnamefont
  {Patti}}, \bibinfo {author} {\bibfnamefont {K.}~\bibnamefont {Najafi}},
  \bibinfo {author} {\bibfnamefont {X.}~\bibnamefont {Gao}},\ and\ \bibinfo
  {author} {\bibfnamefont {S.~F.}\ \bibnamefont {Yelin}},\ }\bibfield  {title}
  {\bibinfo {title} {Entanglement devised barren plateau mitigation},\ }\href
  {https://doi.org/10.1103/PhysRevResearch.3.033090} {\bibfield  {journal}
  {\bibinfo  {journal} {Physical Review Research}\ }\textbf {\bibinfo {volume}
  {3}},\ \bibinfo {pages} {033090} (\bibinfo {year} {2021})}\BibitemShut
  {NoStop}%
\bibitem [{\citenamefont {Pesah}\ \emph {et~al.}(2021)\citenamefont {Pesah},
  \citenamefont {Cerezo}, \citenamefont {Wang}, \citenamefont {Volkoff},
  \citenamefont {Sornborger},\ and\ \citenamefont {Coles}}]{pesah2020absence}%
  \BibitemOpen
  \bibfield  {author} {\bibinfo {author} {\bibfnamefont {A.}~\bibnamefont
  {Pesah}}, \bibinfo {author} {\bibfnamefont {M.}~\bibnamefont {Cerezo}},
  \bibinfo {author} {\bibfnamefont {S.}~\bibnamefont {Wang}}, \bibinfo {author}
  {\bibfnamefont {T.}~\bibnamefont {Volkoff}}, \bibinfo {author} {\bibfnamefont
  {A.~T.}\ \bibnamefont {Sornborger}},\ and\ \bibinfo {author} {\bibfnamefont
  {P.~J.}\ \bibnamefont {Coles}},\ }\bibfield  {title} {\bibinfo {title}
  {Absence of barren plateaus in quantum convolutional neural networks},\
  }\href {https://doi.org/10.1103/PhysRevX.11.041011} {\bibfield  {journal}
  {\bibinfo  {journal} {Physical Review X}\ }\textbf {\bibinfo {volume} {11}},\
  \bibinfo {pages} {041011} (\bibinfo {year} {2021})}\BibitemShut {NoStop}%
\bibitem [{\citenamefont {Uvarov}\ and\ \citenamefont
  {Biamonte}(2021)}]{uvarov2020barren}%
  \BibitemOpen
  \bibfield  {author} {\bibinfo {author} {\bibfnamefont {A.}~\bibnamefont
  {Uvarov}}\ and\ \bibinfo {author} {\bibfnamefont {J.~D.}\ \bibnamefont
  {Biamonte}},\ }\bibfield  {title} {\bibinfo {title} {On barren plateaus and
  cost function locality in variational quantum algorithms},\ }\href
  {https://doi.org/10.1088/1751-8121/abfac7} {\bibfield  {journal} {\bibinfo
  {journal} {Journal of Physics A: Mathematical and Theoretical}\ }\textbf
  {\bibinfo {volume} {54}},\ \bibinfo {pages} {245301} (\bibinfo {year}
  {2021})}\BibitemShut {NoStop}%
\bibitem [{\citenamefont {Cerezo}\ and\ \citenamefont
  {Coles}(2021)}]{cerezo2020impact}%
  \BibitemOpen
  \bibfield  {author} {\bibinfo {author} {\bibfnamefont {M.}~\bibnamefont
  {Cerezo}}\ and\ \bibinfo {author} {\bibfnamefont {P.~J.}\ \bibnamefont
  {Coles}},\ }\bibfield  {title} {\bibinfo {title} {Higher order derivatives of
  quantum neural networks with barren plateaus},\ }\href
  {https://doi.org/10.1088/2058-9565/abf51a} {\bibfield  {journal} {\bibinfo
  {journal} {Quantum Science and Technology}\ }\textbf {\bibinfo {volume}
  {6}},\ \bibinfo {pages} {035006} (\bibinfo {year} {2021})}\BibitemShut
  {NoStop}%
\bibitem [{\citenamefont {Uvarov}\ \emph {et~al.}(2020)\citenamefont {Uvarov},
  \citenamefont {Biamonte},\ and\ \citenamefont
  {Yudin}}]{uvarov2020variational}%
  \BibitemOpen
  \bibfield  {author} {\bibinfo {author} {\bibfnamefont {A.}~\bibnamefont
  {Uvarov}}, \bibinfo {author} {\bibfnamefont {J.~D.}\ \bibnamefont
  {Biamonte}},\ and\ \bibinfo {author} {\bibfnamefont {D.}~\bibnamefont
  {Yudin}},\ }\bibfield  {title} {\bibinfo {title} {Variational quantum
  eigensolver for frustrated quantum systems},\ }\href
  {https://doi.org/10.1103/PhysRevB.102.075104} {\bibfield  {journal} {\bibinfo
   {journal} {Physical Review B}\ }\textbf {\bibinfo {volume} {102}},\ \bibinfo
  {pages} {075104} (\bibinfo {year} {2020})}\BibitemShut {NoStop}%
\bibitem [{\citenamefont {Wang}\ \emph {et~al.}(2021)\citenamefont {Wang},
  \citenamefont {Fontana}, \citenamefont {Cerezo}, \citenamefont {Sharma},
  \citenamefont {Sone}, \citenamefont {Cincio},\ and\ \citenamefont
  {Coles}}]{wang2020noise}%
  \BibitemOpen
  \bibfield  {author} {\bibinfo {author} {\bibfnamefont {S.}~\bibnamefont
  {Wang}}, \bibinfo {author} {\bibfnamefont {E.}~\bibnamefont {Fontana}},
  \bibinfo {author} {\bibfnamefont {M.}~\bibnamefont {Cerezo}}, \bibinfo
  {author} {\bibfnamefont {K.}~\bibnamefont {Sharma}}, \bibinfo {author}
  {\bibfnamefont {A.}~\bibnamefont {Sone}}, \bibinfo {author} {\bibfnamefont
  {L.}~\bibnamefont {Cincio}},\ and\ \bibinfo {author} {\bibfnamefont {P.~J.}\
  \bibnamefont {Coles}},\ }\bibfield  {title} {\bibinfo {title} {Noise-induced
  barren plateaus in variational quantum algorithms},\ }\href
  {https://doi.org/10.1038/s41467-021-27045-6} {\bibfield  {journal} {\bibinfo
  {journal} {Nature {C}ommunications}\ }\textbf {\bibinfo {volume} {12}},\
  \bibinfo {pages} {1} (\bibinfo {year} {2021})}\BibitemShut {NoStop}%
\bibitem [{\citenamefont {Abbas}\ \emph {et~al.}(2021)\citenamefont {Abbas},
  \citenamefont {Sutter}, \citenamefont {Zoufal}, \citenamefont {Lucchi},
  \citenamefont {Figalli},\ and\ \citenamefont {Woerner}}]{abbas2020power}%
  \BibitemOpen
  \bibfield  {author} {\bibinfo {author} {\bibfnamefont {A.}~\bibnamefont
  {Abbas}}, \bibinfo {author} {\bibfnamefont {D.}~\bibnamefont {Sutter}},
  \bibinfo {author} {\bibfnamefont {C.}~\bibnamefont {Zoufal}}, \bibinfo
  {author} {\bibfnamefont {A.}~\bibnamefont {Lucchi}}, \bibinfo {author}
  {\bibfnamefont {A.}~\bibnamefont {Figalli}},\ and\ \bibinfo {author}
  {\bibfnamefont {S.}~\bibnamefont {Woerner}},\ }\bibfield  {title} {\bibinfo
  {title} {The power of quantum neural networks},\ }\href
  {https://doi.org/10.1038/s43588-021-00084-1} {\bibfield  {journal} {\bibinfo
  {journal} {Nature Computational Science}\ }\textbf {\bibinfo {volume} {1}},\
  \bibinfo {pages} {403} (\bibinfo {year} {2021})}\BibitemShut {NoStop}%
\bibitem [{\citenamefont {Arrasmith}\ \emph {et~al.}(2022)\citenamefont
  {Arrasmith}, \citenamefont {Holmes}, \citenamefont {Cerezo},\ and\
  \citenamefont {Coles}}]{arrasmith2021equivalence}%
  \BibitemOpen
  \bibfield  {author} {\bibinfo {author} {\bibfnamefont {A.}~\bibnamefont
  {Arrasmith}}, \bibinfo {author} {\bibfnamefont {Z.}~\bibnamefont {Holmes}},
  \bibinfo {author} {\bibfnamefont {M.}~\bibnamefont {Cerezo}},\ and\ \bibinfo
  {author} {\bibfnamefont {P.~J.}\ \bibnamefont {Coles}},\ }\bibfield  {title}
  {\bibinfo {title} {Equivalence of quantum barren plateaus to cost
  concentration and narrow gorges},\ }\href
  {https://doi.org/10.1088/2058-9565/ac7d06} {\bibfield  {journal} {\bibinfo
  {journal} {Quantum Science and Technology}\ }\textbf {\bibinfo {volume}
  {7}},\ \bibinfo {pages} {045015} (\bibinfo {year} {2022})}\BibitemShut
  {NoStop}%
\bibitem [{\citenamefont {Larocca}\ \emph {et~al.}(2022)\citenamefont
  {Larocca}, \citenamefont {Czarnik}, \citenamefont {Sharma}, \citenamefont
  {Muraleedharan}, \citenamefont {Coles},\ and\ \citenamefont
  {Cerezo}}]{larocca2021diagnosing}%
  \BibitemOpen
  \bibfield  {author} {\bibinfo {author} {\bibfnamefont {M.}~\bibnamefont
  {Larocca}}, \bibinfo {author} {\bibfnamefont {P.}~\bibnamefont {Czarnik}},
  \bibinfo {author} {\bibfnamefont {K.}~\bibnamefont {Sharma}}, \bibinfo
  {author} {\bibfnamefont {G.}~\bibnamefont {Muraleedharan}}, \bibinfo {author}
  {\bibfnamefont {P.~J.}\ \bibnamefont {Coles}},\ and\ \bibinfo {author}
  {\bibfnamefont {M.}~\bibnamefont {Cerezo}},\ }\bibfield  {title} {\bibinfo
  {title} {Diagnosing {B}arren {P}lateaus with {T}ools from {Q}uantum {O}ptimal
  {C}ontrol},\ }\href {https://doi.org/10.22331/q-2022-09-29-824} {\bibfield
  {journal} {\bibinfo  {journal} {{Quantum}}\ }\textbf {\bibinfo {volume}
  {6}},\ \bibinfo {pages} {824} (\bibinfo {year} {2022})}\BibitemShut {NoStop}%
\bibitem [{\citenamefont {Liu}\ \emph {et~al.}(2022)\citenamefont {Liu},
  \citenamefont {Yu}, \citenamefont {Duan},\ and\ \citenamefont
  {Deng}}]{liu2021presence}%
  \BibitemOpen
  \bibfield  {author} {\bibinfo {author} {\bibfnamefont {Z.}~\bibnamefont
  {Liu}}, \bibinfo {author} {\bibfnamefont {L.-W.}\ \bibnamefont {Yu}},
  \bibinfo {author} {\bibfnamefont {L.-M.}\ \bibnamefont {Duan}},\ and\
  \bibinfo {author} {\bibfnamefont {D.-L.}\ \bibnamefont {Deng}},\ }\bibfield
  {title} {\bibinfo {title} {The presence and absence of barren plateaus in
  tensor-network based machine learning},\ }\href
  {https://doi.org/10.1103/PhysRevLett.129.270501} {\bibfield  {journal}
  {\bibinfo  {journal} {Physical Review Letters}\ }\textbf {\bibinfo {volume}
  {129}},\ \bibinfo {pages} {270501} (\bibinfo {year} {2022})}\BibitemShut
  {NoStop}%
\bibitem [{\citenamefont {Holmes}\ \emph {et~al.}(2022)\citenamefont {Holmes},
  \citenamefont {Sharma}, \citenamefont {Cerezo},\ and\ \citenamefont
  {Coles}}]{holmes2021connecting}%
  \BibitemOpen
  \bibfield  {author} {\bibinfo {author} {\bibfnamefont {Z.}~\bibnamefont
  {Holmes}}, \bibinfo {author} {\bibfnamefont {K.}~\bibnamefont {Sharma}},
  \bibinfo {author} {\bibfnamefont {M.}~\bibnamefont {Cerezo}},\ and\ \bibinfo
  {author} {\bibfnamefont {P.~J.}\ \bibnamefont {Coles}},\ }\bibfield  {title}
  {\bibinfo {title} {Connecting ansatz expressibility to gradient magnitudes
  and barren plateaus},\ }\href {https://doi.org/10.1103/PRXQuantum.3.010313}
  {\bibfield  {journal} {\bibinfo  {journal} {PRX Quantum}\ }\textbf {\bibinfo
  {volume} {3}},\ \bibinfo {pages} {010313} (\bibinfo {year}
  {2022})}\BibitemShut {NoStop}%
\bibitem [{\citenamefont {Zhao}\ and\ \citenamefont
  {Gao}(2021)}]{zhao2021analyzing}%
  \BibitemOpen
  \bibfield  {author} {\bibinfo {author} {\bibfnamefont {C.}~\bibnamefont
  {Zhao}}\ and\ \bibinfo {author} {\bibfnamefont {X.-S.}\ \bibnamefont {Gao}},\
  }\bibfield  {title} {\bibinfo {title} {Analyzing the barren plateau
  phenomenon in training quantum neural networks with the {ZX}-calculus},\
  }\href {https://doi.org/10.22331/q-2021-06-04-466} {\bibfield  {journal}
  {\bibinfo  {journal} {{Quantum}}\ }\textbf {\bibinfo {volume} {5}},\ \bibinfo
  {pages} {466} (\bibinfo {year} {2021})}\BibitemShut {NoStop}%
\bibitem [{\citenamefont {Kieferova}\ \emph {et~al.}(2021)\citenamefont
  {Kieferova}, \citenamefont {Carlos},\ and\ \citenamefont
  {Wiebe}}]{kieferova2021quantum}%
  \BibitemOpen
  \bibfield  {author} {\bibinfo {author} {\bibfnamefont {M.}~\bibnamefont
  {Kieferova}}, \bibinfo {author} {\bibfnamefont {O.~M.}\ \bibnamefont
  {Carlos}},\ and\ \bibinfo {author} {\bibfnamefont {N.}~\bibnamefont
  {Wiebe}},\ }\bibfield  {title} {\bibinfo {title} {Quantum generative training
  using r\'{e}nyi divergences},\ }\href {https://arxiv.org/abs/2106.09567}
  {\bibfield  {journal} {\bibinfo  {journal} {arXiv preprint arXiv:2106.09567}\
  } (\bibinfo {year} {2021})}\BibitemShut {NoStop}%
\bibitem [{\citenamefont {Thanaslip}\ \emph {et~al.}(2023)\citenamefont
  {Thanaslip}, \citenamefont {Wang}, \citenamefont {Nghiem}, \citenamefont
  {Coles},\ and\ \citenamefont {Cerezo}}]{thanaslip2021subtleties}%
  \BibitemOpen
  \bibfield  {author} {\bibinfo {author} {\bibfnamefont {S.}~\bibnamefont
  {Thanaslip}}, \bibinfo {author} {\bibfnamefont {S.}~\bibnamefont {Wang}},
  \bibinfo {author} {\bibfnamefont {N.~A.}\ \bibnamefont {Nghiem}}, \bibinfo
  {author} {\bibfnamefont {P.~J.}\ \bibnamefont {Coles}},\ and\ \bibinfo
  {author} {\bibfnamefont {M.}~\bibnamefont {Cerezo}},\ }\bibfield  {title}
  {\bibinfo {title} {Subtleties in the trainability of quantum machine learning
  models},\ }\href {https://doi.org/10.1007/s42484-023-00103-6} {\bibfield
  {journal} {\bibinfo  {journal} {Quantum Machine Intelligence}\ }\textbf
  {\bibinfo {volume} {5}},\ \bibinfo {pages} {21} (\bibinfo {year}
  {2023})}\BibitemShut {NoStop}%
\bibitem [{\citenamefont {Lee}\ \emph {et~al.}(2021)\citenamefont {Lee},
  \citenamefont {Magann}, \citenamefont {Rabitz},\ and\ \citenamefont
  {Arenz}}]{lee2021towards}%
  \BibitemOpen
  \bibfield  {author} {\bibinfo {author} {\bibfnamefont {J.}~\bibnamefont
  {Lee}}, \bibinfo {author} {\bibfnamefont {A.~B.}\ \bibnamefont {Magann}},
  \bibinfo {author} {\bibfnamefont {H.~A.}\ \bibnamefont {Rabitz}},\ and\
  \bibinfo {author} {\bibfnamefont {C.}~\bibnamefont {Arenz}},\ }\bibfield
  {title} {\bibinfo {title} {Progress toward favorable landscapes in quantum
  combinatorial optimization},\ }\href
  {https://doi.org/10.1103/PhysRevA.104.032401} {\bibfield  {journal} {\bibinfo
   {journal} {Physical Review A}\ }\textbf {\bibinfo {volume} {104}},\ \bibinfo
  {pages} {032401} (\bibinfo {year} {2021})}\BibitemShut {NoStop}%
\bibitem [{\citenamefont {Shaydulin}\ and\ \citenamefont
  {Wild}(2022)}]{shaydulin2021importance}%
  \BibitemOpen
  \bibfield  {author} {\bibinfo {author} {\bibfnamefont {R.}~\bibnamefont
  {Shaydulin}}\ and\ \bibinfo {author} {\bibfnamefont {S.~M.}\ \bibnamefont
  {Wild}},\ }\bibfield  {title} {\bibinfo {title} {Importance of kernel
  bandwidth in quantum machine learning},\ }\href
  {https://doi.org/10.1103/PhysRevA.106.042407} {\bibfield  {journal} {\bibinfo
   {journal} {Physical Review A}\ }\textbf {\bibinfo {volume} {106}},\ \bibinfo
  {pages} {042407} (\bibinfo {year} {2022})}\BibitemShut {NoStop}%
\bibitem [{\citenamefont {Holmes}\ \emph {et~al.}(2021)\citenamefont {Holmes},
  \citenamefont {Arrasmith}, \citenamefont {Yan}, \citenamefont {Coles},
  \citenamefont {Albrecht},\ and\ \citenamefont
  {Sornborger}}]{holmes2021barren}%
  \BibitemOpen
  \bibfield  {author} {\bibinfo {author} {\bibfnamefont {Z.}~\bibnamefont
  {Holmes}}, \bibinfo {author} {\bibfnamefont {A.}~\bibnamefont {Arrasmith}},
  \bibinfo {author} {\bibfnamefont {B.}~\bibnamefont {Yan}}, \bibinfo {author}
  {\bibfnamefont {P.~J.}\ \bibnamefont {Coles}}, \bibinfo {author}
  {\bibfnamefont {A.}~\bibnamefont {Albrecht}},\ and\ \bibinfo {author}
  {\bibfnamefont {A.~T.}\ \bibnamefont {Sornborger}},\ }\bibfield  {title}
  {\bibinfo {title} {Barren plateaus preclude learning scramblers},\ }\href
  {https://doi.org/10.1103/PhysRevLett.126.190501} {\bibfield  {journal}
  {\bibinfo  {journal} {Physical Review Letters}\ }\textbf {\bibinfo {volume}
  {126}},\ \bibinfo {pages} {190501} (\bibinfo {year} {2021})}\BibitemShut
  {NoStop}%
\bibitem [{\citenamefont {Leadbeater}\ \emph {et~al.}(2021)\citenamefont
  {Leadbeater}, \citenamefont {Sharrock}, \citenamefont {Coyle},\ and\
  \citenamefont {Benedetti}}]{leadbeater2021f}%
  \BibitemOpen
  \bibfield  {author} {\bibinfo {author} {\bibfnamefont {C.}~\bibnamefont
  {Leadbeater}}, \bibinfo {author} {\bibfnamefont {L.}~\bibnamefont
  {Sharrock}}, \bibinfo {author} {\bibfnamefont {B.}~\bibnamefont {Coyle}},\
  and\ \bibinfo {author} {\bibfnamefont {M.}~\bibnamefont {Benedetti}},\
  }\bibfield  {title} {\bibinfo {title} {F-divergences and cost function
  locality in generative modelling with quantum circuits},\ }\href
  {https://doi.org/10.3390/e23101281} {\bibfield  {journal} {\bibinfo
  {journal} {Entropy}\ }\textbf {\bibinfo {volume} {23}},\ \bibinfo {pages}
  {1281} (\bibinfo {year} {2021})}\BibitemShut {NoStop}%
\bibitem [{\citenamefont {Mart{\'\i}n}\ \emph {et~al.}(2022)\citenamefont
  {Mart{\'\i}n}, \citenamefont {Plekhanov},\ and\ \citenamefont
  {Lubasch}}]{martin2022barren}%
  \BibitemOpen
  \bibfield  {author} {\bibinfo {author} {\bibfnamefont {E.~C.}\ \bibnamefont
  {Mart{\'\i}n}}, \bibinfo {author} {\bibfnamefont {K.}~\bibnamefont
  {Plekhanov}},\ and\ \bibinfo {author} {\bibfnamefont {M.}~\bibnamefont
  {Lubasch}},\ }\bibfield  {title} {\bibinfo {title} {Barren plateaus in
  quantum tensor network optimization},\ }\href
  {https://arxiv.org/abs/2209.00292} {\bibfield  {journal} {\bibinfo  {journal}
  {arXiv preprint arXiv:2209.00292}\ } (\bibinfo {year} {2022})}\BibitemShut
  {NoStop}%
\bibitem [{\citenamefont {Grimsley}\ \emph {et~al.}(2023)\citenamefont
  {Grimsley}, \citenamefont {Mayhall}, \citenamefont {Barron}, \citenamefont
  {Barnes},\ and\ \citenamefont {Economou}}]{grimsley2022adapt}%
  \BibitemOpen
  \bibfield  {author} {\bibinfo {author} {\bibfnamefont {H.~R.}\ \bibnamefont
  {Grimsley}}, \bibinfo {author} {\bibfnamefont {N.~J.}\ \bibnamefont
  {Mayhall}}, \bibinfo {author} {\bibfnamefont {G.~S.}\ \bibnamefont {Barron}},
  \bibinfo {author} {\bibfnamefont {E.}~\bibnamefont {Barnes}},\ and\ \bibinfo
  {author} {\bibfnamefont {S.~E.}\ \bibnamefont {Economou}},\ }\bibfield
  {title} {\bibinfo {title} {Adaptive, problem-tailored variational quantum
  eigensolver mitigates rough parameter landscapes and barren plateaus},\
  }\href {https://doi.org/10.1038/s41534-023-00681-0} {\bibfield  {journal}
  {\bibinfo  {journal} {npj Quantum Information}\ }\textbf {\bibinfo {volume}
  {9}},\ \bibinfo {pages} {19} (\bibinfo {year} {2023})}\BibitemShut {NoStop}%
\bibitem [{\citenamefont {Leone}\ \emph {et~al.}(2022)\citenamefont {Leone},
  \citenamefont {Oliviero}, \citenamefont {Cincio},\ and\ \citenamefont
  {Cerezo}}]{leone2022practical}%
  \BibitemOpen
  \bibfield  {author} {\bibinfo {author} {\bibfnamefont {L.}~\bibnamefont
  {Leone}}, \bibinfo {author} {\bibfnamefont {S.~F.}\ \bibnamefont {Oliviero}},
  \bibinfo {author} {\bibfnamefont {L.}~\bibnamefont {Cincio}},\ and\ \bibinfo
  {author} {\bibfnamefont {M.}~\bibnamefont {Cerezo}},\ }\bibfield  {title}
  {\bibinfo {title} {On the practical usefulness of the hardware efficient
  ansatz},\ }\href {https://arxiv.org/abs/2211.01477} {\bibfield  {journal}
  {\bibinfo  {journal} {arXiv preprint arXiv:2211.01477}\ } (\bibinfo {year}
  {2022})}\BibitemShut {NoStop}%
\bibitem [{\citenamefont {Sack}\ \emph {et~al.}(2022)\citenamefont {Sack},
  \citenamefont {Medina}, \citenamefont {Michailidis}, \citenamefont {Kueng},\
  and\ \citenamefont {Serbyn}}]{sack2022avoiding}%
  \BibitemOpen
  \bibfield  {author} {\bibinfo {author} {\bibfnamefont {S.~H.}\ \bibnamefont
  {Sack}}, \bibinfo {author} {\bibfnamefont {R.~A.}\ \bibnamefont {Medina}},
  \bibinfo {author} {\bibfnamefont {A.~A.}\ \bibnamefont {Michailidis}},
  \bibinfo {author} {\bibfnamefont {R.}~\bibnamefont {Kueng}},\ and\ \bibinfo
  {author} {\bibfnamefont {M.}~\bibnamefont {Serbyn}},\ }\bibfield  {title}
  {\bibinfo {title} {Avoiding barren plateaus using classical shadows},\ }\href
  {https://doi.org/10.1103/PRXQuantum.3.020365} {\bibfield  {journal} {\bibinfo
   {journal} {PRX Quantum}\ }\textbf {\bibinfo {volume} {3}},\ \bibinfo {pages}
  {020365} (\bibinfo {year} {2022})}\BibitemShut {NoStop}%
\bibitem [{\citenamefont {Kashif}\ and\ \citenamefont
  {Al-Kuwari}(2023{\natexlab{a}})}]{kashif2023impact}%
  \BibitemOpen
  \bibfield  {author} {\bibinfo {author} {\bibfnamefont {M.}~\bibnamefont
  {Kashif}}\ and\ \bibinfo {author} {\bibfnamefont {S.}~\bibnamefont
  {Al-Kuwari}},\ }\bibfield  {title} {\bibinfo {title} {The impact of cost
  function globality and locality in hybrid quantum neural networks on nisq
  devices},\ }\href {https://doi.org/10.1088/2632-2153/acb12f} {\bibfield
  {journal} {\bibinfo  {journal} {Machine Learning: Science and Technology}\
  }\textbf {\bibinfo {volume} {4}},\ \bibinfo {pages} {015004} (\bibinfo {year}
  {2023}{\natexlab{a}})}\BibitemShut {NoStop}%
\bibitem [{\citenamefont {Friedrich}\ and\ \citenamefont
  {Maziero}(2023)}]{friedrich2023quantum}%
  \BibitemOpen
  \bibfield  {author} {\bibinfo {author} {\bibfnamefont {L.}~\bibnamefont
  {Friedrich}}\ and\ \bibinfo {author} {\bibfnamefont {J.}~\bibnamefont
  {Maziero}},\ }\bibfield  {title} {\bibinfo {title} {Quantum neural network
  cost function concentration dependency on the parametrization expressivity},\
  }\href {https://doi.org/10.1038/s41598-023-37003-5} {\bibfield  {journal}
  {\bibinfo  {journal} {Scientific Reports}\ }\textbf {\bibinfo {volume}
  {13}},\ \bibinfo {pages} {9978} (\bibinfo {year} {2023})}\BibitemShut
  {NoStop}%
\bibitem [{\citenamefont {Garc{\'\i}a-Mart{\'\i}n}\ \emph
  {et~al.}(2023)\citenamefont {Garc{\'\i}a-Mart{\'\i}n}, \citenamefont
  {Larocca},\ and\ \citenamefont {Cerezo}}]{garcia2023deep}%
  \BibitemOpen
  \bibfield  {author} {\bibinfo {author} {\bibfnamefont {D.}~\bibnamefont
  {Garc{\'\i}a-Mart{\'\i}n}}, \bibinfo {author} {\bibfnamefont
  {M.}~\bibnamefont {Larocca}},\ and\ \bibinfo {author} {\bibfnamefont
  {M.}~\bibnamefont {Cerezo}},\ }\bibfield  {title} {\bibinfo {title} {Deep
  quantum neural networks form gaussian processes},\ }\href
  {https://arxiv.org/abs/2305.09957} {\bibfield  {journal} {\bibinfo  {journal}
  {arXiv preprint arXiv:2305.09957}\ } (\bibinfo {year} {2023})}\BibitemShut
  {NoStop}%
\bibitem [{\citenamefont {Kulshrestha}\ and\ \citenamefont
  {Safro}(2022)}]{kulshrestha2022beinit}%
  \BibitemOpen
  \bibfield  {author} {\bibinfo {author} {\bibfnamefont {A.}~\bibnamefont
  {Kulshrestha}}\ and\ \bibinfo {author} {\bibfnamefont {I.}~\bibnamefont
  {Safro}},\ }\bibfield  {title} {\bibinfo {title} {Beinit: Avoiding barren
  plateaus in variational quantum algorithms},\ }in\ \href
  {https://doi.org/10.1109/QCE53715.2022.00039} {\emph {\bibinfo {booktitle}
  {2022 IEEE International Conference on Quantum Computing and Engineering
  (QCE)}}}\ (\bibinfo {organization} {IEEE},\ \bibinfo {year} {2022})\ pp.\
  \bibinfo {pages} {197--203}\BibitemShut {NoStop}%
\bibitem [{\citenamefont {Volkoff}(2021)}]{volkoff2021efficient}%
  \BibitemOpen
  \bibfield  {author} {\bibinfo {author} {\bibfnamefont {T.~J.}\ \bibnamefont
  {Volkoff}},\ }\bibfield  {title} {\bibinfo {title} {Efficient trainability of
  linear optical modules in quantum optical neural networks},\ }\href
  {https://doi.org/10.1007/s10946-021-09958-1} {\bibfield  {journal} {\bibinfo
  {journal} {Journal of Russian Laser Research}\ }\textbf {\bibinfo {volume}
  {42}},\ \bibinfo {pages} {250} (\bibinfo {year} {2021})}\BibitemShut
  {NoStop}%
\bibitem [{\citenamefont {Kashif}\ and\ \citenamefont
  {Al-Kuwari}(2023{\natexlab{b}})}]{kashif2023unified}%
  \BibitemOpen
  \bibfield  {author} {\bibinfo {author} {\bibfnamefont {M.}~\bibnamefont
  {Kashif}}\ and\ \bibinfo {author} {\bibfnamefont {S.}~\bibnamefont
  {Al-Kuwari}},\ }\bibfield  {title} {\bibinfo {title} {The unified effect of
  data encoding, ansatz expressibility and entanglement on the trainability of
  hqnns},\ }\href {https://doi.org/10.1080/17445760.2023.2231163} {\bibfield
  {journal} {\bibinfo  {journal} {International Journal of Parallel, Emergent
  and Distributed Systems}\ }\textbf {\bibinfo {volume} {38}},\ \bibinfo
  {pages} {362} (\bibinfo {year} {2023}{\natexlab{b}})}\BibitemShut {NoStop}%
\bibitem [{\citenamefont {Monbroussou}\ \emph {et~al.}(2023)\citenamefont
  {Monbroussou}, \citenamefont {Landman}, \citenamefont {Grilo}, \citenamefont
  {Kukla},\ and\ \citenamefont {Kashefi}}]{monbroussou2023trainability}%
  \BibitemOpen
  \bibfield  {author} {\bibinfo {author} {\bibfnamefont {L.}~\bibnamefont
  {Monbroussou}}, \bibinfo {author} {\bibfnamefont {J.}~\bibnamefont
  {Landman}}, \bibinfo {author} {\bibfnamefont {A.~B.}\ \bibnamefont {Grilo}},
  \bibinfo {author} {\bibfnamefont {R.}~\bibnamefont {Kukla}},\ and\ \bibinfo
  {author} {\bibfnamefont {E.}~\bibnamefont {Kashefi}},\ }\bibfield  {title}
  {\bibinfo {title} {Trainability and expressivity of hamming-weight preserving
  quantum circuits for machine learning},\ }\href
  {https://arxiv.org/abs/2309.15547} {\bibfield  {journal} {\bibinfo  {journal}
  {arXiv preprint arXiv:2309.15547}\ } (\bibinfo {year} {2023})}\BibitemShut
  {NoStop}%
\bibitem [{\citenamefont {Fontana}\ \emph {et~al.}(2023)\citenamefont
  {Fontana}, \citenamefont {Herman}, \citenamefont {Chakrabarti}, \citenamefont
  {Kumar}, \citenamefont {Yalovetzky}, \citenamefont {Heredge}, \citenamefont
  {Hari~Sureshbabu},\ and\ \citenamefont {Pistoia}}]{fontana2023theadjoint}%
  \BibitemOpen
  \bibfield  {author} {\bibinfo {author} {\bibfnamefont {E.}~\bibnamefont
  {Fontana}}, \bibinfo {author} {\bibfnamefont {D.}~\bibnamefont {Herman}},
  \bibinfo {author} {\bibfnamefont {S.}~\bibnamefont {Chakrabarti}}, \bibinfo
  {author} {\bibfnamefont {N.}~\bibnamefont {Kumar}}, \bibinfo {author}
  {\bibfnamefont {R.}~\bibnamefont {Yalovetzky}}, \bibinfo {author}
  {\bibfnamefont {J.}~\bibnamefont {Heredge}}, \bibinfo {author} {\bibfnamefont
  {S.}~\bibnamefont {Hari~Sureshbabu}},\ and\ \bibinfo {author} {\bibfnamefont
  {M.}~\bibnamefont {Pistoia}},\ }\bibfield  {title} {\bibinfo {title} {The
  adjoint is all you need: Characterizing barren plateaus in quantum
  ans\"atze},\ }\href {https://arxiv.org/abs/2309.07902} {\bibfield  {journal}
  {\bibinfo  {journal} {arXiv preprint arXiv:2309.07902}\ } (\bibinfo {year}
  {2023})}\BibitemShut {NoStop}%
\bibitem [{\citenamefont {Ragone}\ \emph {et~al.}(2023)\citenamefont {Ragone},
  \citenamefont {Bakalov}, \citenamefont {Sauvage}, \citenamefont {Kemper},
  \citenamefont {Marrero}, \citenamefont {Larocca},\ and\ \citenamefont
  {Cerezo}}]{ragone2023unified}%
  \BibitemOpen
  \bibfield  {author} {\bibinfo {author} {\bibfnamefont {M.}~\bibnamefont
  {Ragone}}, \bibinfo {author} {\bibfnamefont {B.~N.}\ \bibnamefont {Bakalov}},
  \bibinfo {author} {\bibfnamefont {F.}~\bibnamefont {Sauvage}}, \bibinfo
  {author} {\bibfnamefont {A.~F.}\ \bibnamefont {Kemper}}, \bibinfo {author}
  {\bibfnamefont {C.~O.}\ \bibnamefont {Marrero}}, \bibinfo {author}
  {\bibfnamefont {M.}~\bibnamefont {Larocca}},\ and\ \bibinfo {author}
  {\bibfnamefont {M.}~\bibnamefont {Cerezo}},\ }\bibfield  {title} {\bibinfo
  {title} {A unified theory of barren plateaus for deep parametrized quantum
  circuits},\ }\href {https://arxiv.org/abs/2309.09342} {\bibfield  {journal}
  {\bibinfo  {journal} {arXiv preprint arXiv:2309.09342}\ } (\bibinfo {year}
  {2023})}\BibitemShut {NoStop}%
\bibitem [{\citenamefont {Jozsa}\ and\ \citenamefont
  {Miyake}(2008)}]{jozsa2008matchgates}%
  \BibitemOpen
  \bibfield  {author} {\bibinfo {author} {\bibfnamefont {R.}~\bibnamefont
  {Jozsa}}\ and\ \bibinfo {author} {\bibfnamefont {A.}~\bibnamefont {Miyake}},\
  }\bibfield  {title} {\bibinfo {title} {Matchgates and classical simulation of
  quantum circuits},\ }\href {https://doi.org/10.1098/rspa.2008.0189}
  {\bibfield  {journal} {\bibinfo  {journal} {Proceedings of the Royal Society
  A: Mathematical, Physical and Engineering Sciences}\ }\textbf {\bibinfo
  {volume} {464}},\ \bibinfo {pages} {3089} (\bibinfo {year}
  {2008})}\BibitemShut {NoStop}%
\bibitem [{\citenamefont {Wan}\ \emph {et~al.}(2022)\citenamefont {Wan},
  \citenamefont {Huggins}, \citenamefont {Lee},\ and\ \citenamefont
  {Babbush}}]{wan2022matchgate}%
  \BibitemOpen
  \bibfield  {author} {\bibinfo {author} {\bibfnamefont {K.}~\bibnamefont
  {Wan}}, \bibinfo {author} {\bibfnamefont {W.~J.}\ \bibnamefont {Huggins}},
  \bibinfo {author} {\bibfnamefont {J.}~\bibnamefont {Lee}},\ and\ \bibinfo
  {author} {\bibfnamefont {R.}~\bibnamefont {Babbush}},\ }\bibfield  {title}
  {\bibinfo {title} {Matchgate shadows for fermionic quantum simulation},\
  }\href {https://arxiv.org/abs/2207.13723} {\bibfield  {journal} {\bibinfo
  {journal} {arXiv preprint arXiv:2207.13723}\ } (\bibinfo {year}
  {2022})}\BibitemShut {NoStop}%
\bibitem [{\citenamefont {De~Melo}\ \emph {et~al.}(2013)\citenamefont
  {De~Melo}, \citenamefont {{\'C}wikli{\'n}ski},\ and\ \citenamefont
  {Terhal}}]{de2013power}%
  \BibitemOpen
  \bibfield  {author} {\bibinfo {author} {\bibfnamefont {F.}~\bibnamefont
  {De~Melo}}, \bibinfo {author} {\bibfnamefont {P.}~\bibnamefont
  {{\'C}wikli{\'n}ski}},\ and\ \bibinfo {author} {\bibfnamefont {B.~M.}\
  \bibnamefont {Terhal}},\ }\bibfield  {title} {\bibinfo {title} {The power of
  noisy fermionic quantum computation},\ }\href
  {https://doi.org/https://doi.org/10.1088/1367-2630/15/1/013015} {\bibfield
  {journal} {\bibinfo  {journal} {New Journal of Physics}\ }\textbf {\bibinfo
  {volume} {15}},\ \bibinfo {pages} {013015} (\bibinfo {year}
  {2013})}\BibitemShut {NoStop}%
\bibitem [{\citenamefont {Oszmaniec}\ \emph {et~al.}(2022)\citenamefont
  {Oszmaniec}, \citenamefont {Dangniam}, \citenamefont {Morales},\ and\
  \citenamefont {Zimbor{\'a}s}}]{oszmaniec2022fermion}%
  \BibitemOpen
  \bibfield  {author} {\bibinfo {author} {\bibfnamefont {M.}~\bibnamefont
  {Oszmaniec}}, \bibinfo {author} {\bibfnamefont {N.}~\bibnamefont {Dangniam}},
  \bibinfo {author} {\bibfnamefont {M.~E.}\ \bibnamefont {Morales}},\ and\
  \bibinfo {author} {\bibfnamefont {Z.}~\bibnamefont {Zimbor{\'a}s}},\
  }\bibfield  {title} {\bibinfo {title} {Fermion sampling: a robust quantum
  computational advantage scheme using fermionic linear optics and magic input
  states},\ }\href {https://doi.org/10.1103/PRXQuantum.3.020328} {\bibfield
  {journal} {\bibinfo  {journal} {PRX Quantum}\ }\textbf {\bibinfo {volume}
  {3}},\ \bibinfo {pages} {020328} (\bibinfo {year} {2022})}\BibitemShut
  {NoStop}%
\bibitem [{\citenamefont {Cherrat}\ \emph {et~al.}(2023)\citenamefont
  {Cherrat}, \citenamefont {Raj}, \citenamefont {Kerenidis}, \citenamefont
  {Shekhar}, \citenamefont {Wood}, \citenamefont {Dee}, \citenamefont
  {Chakrabarti}, \citenamefont {Chen}, \citenamefont {Herman}, \citenamefont
  {Hu} \emph {et~al.}}]{cherrat2023quantum}%
  \BibitemOpen
  \bibfield  {author} {\bibinfo {author} {\bibfnamefont {E.~A.}\ \bibnamefont
  {Cherrat}}, \bibinfo {author} {\bibfnamefont {S.}~\bibnamefont {Raj}},
  \bibinfo {author} {\bibfnamefont {I.}~\bibnamefont {Kerenidis}}, \bibinfo
  {author} {\bibfnamefont {A.}~\bibnamefont {Shekhar}}, \bibinfo {author}
  {\bibfnamefont {B.}~\bibnamefont {Wood}}, \bibinfo {author} {\bibfnamefont
  {J.}~\bibnamefont {Dee}}, \bibinfo {author} {\bibfnamefont {S.}~\bibnamefont
  {Chakrabarti}}, \bibinfo {author} {\bibfnamefont {R.}~\bibnamefont {Chen}},
  \bibinfo {author} {\bibfnamefont {D.}~\bibnamefont {Herman}}, \bibinfo
  {author} {\bibfnamefont {S.}~\bibnamefont {Hu}}, \emph {et~al.},\ }\bibfield
  {title} {\bibinfo {title} {Quantum deep hedging},\ }\href
  {https://arxiv.org/abs/2303.16585} {\bibfield  {journal} {\bibinfo  {journal}
  {arXiv preprint arXiv:2303.16585}\ } (\bibinfo {year} {2023})}\BibitemShut
  {NoStop}%
\bibitem [{\citenamefont {Gigena}\ and\ \citenamefont
  {Rossignoli}(2015)}]{gigena2015entanglement}%
  \BibitemOpen
  \bibfield  {author} {\bibinfo {author} {\bibfnamefont {N.}~\bibnamefont
  {Gigena}}\ and\ \bibinfo {author} {\bibfnamefont {R.}~\bibnamefont
  {Rossignoli}},\ }\bibfield  {title} {\bibinfo {title} {Entanglement in
  fermion systems},\ }\href {https://doi.org/10.1103/PhysRevA.92.042326}
  {\bibfield  {journal} {\bibinfo  {journal} {Physical Review A}\ }\textbf
  {\bibinfo {volume} {92}},\ \bibinfo {pages} {042326} (\bibinfo {year}
  {2015})}\BibitemShut {NoStop}%
\bibitem [{\citenamefont {Gigena}\ \emph {et~al.}(2020)\citenamefont {Gigena},
  \citenamefont {Di~Tullio},\ and\ \citenamefont {Rossignoli}}]{gigena2020one}%
  \BibitemOpen
  \bibfield  {author} {\bibinfo {author} {\bibfnamefont {N.}~\bibnamefont
  {Gigena}}, \bibinfo {author} {\bibfnamefont {M.}~\bibnamefont {Di~Tullio}},\
  and\ \bibinfo {author} {\bibfnamefont {R.}~\bibnamefont {Rossignoli}},\
  }\bibfield  {title} {\bibinfo {title} {One-body entanglement as a quantum
  resource in fermionic systems},\ }\href
  {https://doi.org/10.1103/PhysRevA.102.042410} {\bibfield  {journal} {\bibinfo
   {journal} {Physical Review A}\ }\textbf {\bibinfo {volume} {102}},\ \bibinfo
  {pages} {042410} (\bibinfo {year} {2020})}\BibitemShut {NoStop}%
\bibitem [{\citenamefont {Gigena}\ \emph {et~al.}(2021)\citenamefont {Gigena},
  \citenamefont {Di~Tullio},\ and\ \citenamefont
  {Rossignoli}}]{gigena2021many}%
  \BibitemOpen
  \bibfield  {author} {\bibinfo {author} {\bibfnamefont {N.}~\bibnamefont
  {Gigena}}, \bibinfo {author} {\bibfnamefont {M.}~\bibnamefont {Di~Tullio}},\
  and\ \bibinfo {author} {\bibfnamefont {R.}~\bibnamefont {Rossignoli}},\
  }\bibfield  {title} {\bibinfo {title} {Many-body entanglement in fermion
  systems},\ }\href {https://doi.org/10.1103/PhysRevA.103.052424} {\bibfield
  {journal} {\bibinfo  {journal} {Physical Review A}\ }\textbf {\bibinfo
  {volume} {103}},\ \bibinfo {pages} {052424} (\bibinfo {year}
  {2021})}\BibitemShut {NoStop}%
\bibitem [{\citenamefont {K{\"o}kc{\"u}}\ \emph {et~al.}(2022)\citenamefont
  {K{\"o}kc{\"u}}, \citenamefont {Steckmann}, \citenamefont {Wang},
  \citenamefont {Freericks}, \citenamefont {Dumitrescu},\ and\ \citenamefont
  {Kemper}}]{kokcu2022fixed}%
  \BibitemOpen
  \bibfield  {author} {\bibinfo {author} {\bibfnamefont {E.}~\bibnamefont
  {K{\"o}kc{\"u}}}, \bibinfo {author} {\bibfnamefont {T.}~\bibnamefont
  {Steckmann}}, \bibinfo {author} {\bibfnamefont {Y.}~\bibnamefont {Wang}},
  \bibinfo {author} {\bibfnamefont {J.}~\bibnamefont {Freericks}}, \bibinfo
  {author} {\bibfnamefont {E.~F.}\ \bibnamefont {Dumitrescu}},\ and\ \bibinfo
  {author} {\bibfnamefont {A.~F.}\ \bibnamefont {Kemper}},\ }\bibfield  {title}
  {\bibinfo {title} {Fixed depth hamiltonian simulation via cartan
  decomposition},\ }\href {https://doi.org/10.1103/PhysRevLett.129.070501}
  {\bibfield  {journal} {\bibinfo  {journal} {Physical Review Letters}\
  }\textbf {\bibinfo {volume} {129}},\ \bibinfo {pages} {070501} (\bibinfo
  {year} {2022})}\BibitemShut {NoStop}%
\bibitem [{\citenamefont {Wiersema}\ \emph {et~al.}(2023)\citenamefont
  {Wiersema}, \citenamefont {K{\"o}kc{\"u}}, \citenamefont {Kemper},\ and\
  \citenamefont {Bakalov}}]{wiersema2023classification}%
  \BibitemOpen
  \bibfield  {author} {\bibinfo {author} {\bibfnamefont {R.}~\bibnamefont
  {Wiersema}}, \bibinfo {author} {\bibfnamefont {E.}~\bibnamefont
  {K{\"o}kc{\"u}}}, \bibinfo {author} {\bibfnamefont {A.~F.}\ \bibnamefont
  {Kemper}},\ and\ \bibinfo {author} {\bibfnamefont {B.~N.}\ \bibnamefont
  {Bakalov}},\ }\bibfield  {title} {\bibinfo {title} {Classification of
  dynamical lie algebras for translation-invariant 2-local spin systems in one
  dimension},\ }\href {https://arxiv.org/abs/2309.05690} {\bibfield  {journal}
  {\bibinfo  {journal} {arXiv preprint arXiv:2309.05690}\ } (\bibinfo {year}
  {2023})}\BibitemShut {NoStop}%
\bibitem [{\citenamefont {Lieb}\ \emph {et~al.}(1961)\citenamefont {Lieb},
  \citenamefont {Schultz},\ and\ \citenamefont {Mattis}}]{lieb1961two}%
  \BibitemOpen
  \bibfield  {author} {\bibinfo {author} {\bibfnamefont {E.}~\bibnamefont
  {Lieb}}, \bibinfo {author} {\bibfnamefont {T.}~\bibnamefont {Schultz}},\ and\
  \bibinfo {author} {\bibfnamefont {D.}~\bibnamefont {Mattis}},\ }\bibfield
  {title} {\bibinfo {title} {Two soluble models of an antiferromagnetic
  chain},\ }\href {https://doi.org/10.1016/0003-4916(61)90115-4} {\bibfield
  {journal} {\bibinfo  {journal} {Annals of Physics}\ }\textbf {\bibinfo
  {volume} {16}},\ \bibinfo {pages} {407} (\bibinfo {year} {1961})}\BibitemShut
  {NoStop}%
\bibitem [{\citenamefont {Ragone}\ \emph {et~al.}(2022)\citenamefont {Ragone},
  \citenamefont {Nguyen}, \citenamefont {Schatzki}, \citenamefont {Braccia},
  \citenamefont {Larocca}, \citenamefont {Sauvage}, \citenamefont {Coles},\
  and\ \citenamefont {Cerezo}}]{ragone2022representation}%
  \BibitemOpen
  \bibfield  {author} {\bibinfo {author} {\bibfnamefont {M.}~\bibnamefont
  {Ragone}}, \bibinfo {author} {\bibfnamefont {Q.~T.}\ \bibnamefont {Nguyen}},
  \bibinfo {author} {\bibfnamefont {L.}~\bibnamefont {Schatzki}}, \bibinfo
  {author} {\bibfnamefont {P.}~\bibnamefont {Braccia}}, \bibinfo {author}
  {\bibfnamefont {M.}~\bibnamefont {Larocca}}, \bibinfo {author} {\bibfnamefont
  {F.}~\bibnamefont {Sauvage}}, \bibinfo {author} {\bibfnamefont {P.~J.}\
  \bibnamefont {Coles}},\ and\ \bibinfo {author} {\bibfnamefont
  {M.}~\bibnamefont {Cerezo}},\ }\bibfield  {title} {\bibinfo {title}
  {Representation theory for geometric quantum machine learning},\ }\href
  {https://arxiv.org/abs/2210.07980} {\bibfield  {journal} {\bibinfo  {journal}
  {arXiv preprint arXiv:2210.07980}\ } (\bibinfo {year} {2022})}\BibitemShut
  {NoStop}%
\bibitem [{\citenamefont {Mele}(2023)}]{mele2023introduction}%
  \BibitemOpen
  \bibfield  {author} {\bibinfo {author} {\bibfnamefont {A.~A.}\ \bibnamefont
  {Mele}},\ }\bibfield  {title} {\bibinfo {title} {Introduction to haar measure
  tools in quantum information: A beginner's tutorial},\ }\href
  {https://arxiv.org/abs/2307.08956} {\bibfield  {journal} {\bibinfo  {journal}
  {arXiv preprint arXiv:2307.08956}\ } (\bibinfo {year} {2023})}\BibitemShut
  {NoStop}%
\bibitem [{\citenamefont {Somma}\ \emph {et~al.}(2004)\citenamefont {Somma},
  \citenamefont {Ortiz}, \citenamefont {Barnum}, \citenamefont {Knill},\ and\
  \citenamefont {Viola}}]{somma2004nature}%
  \BibitemOpen
  \bibfield  {author} {\bibinfo {author} {\bibfnamefont {R.}~\bibnamefont
  {Somma}}, \bibinfo {author} {\bibfnamefont {G.}~\bibnamefont {Ortiz}},
  \bibinfo {author} {\bibfnamefont {H.}~\bibnamefont {Barnum}}, \bibinfo
  {author} {\bibfnamefont {E.}~\bibnamefont {Knill}},\ and\ \bibinfo {author}
  {\bibfnamefont {L.}~\bibnamefont {Viola}},\ }\bibfield  {title} {\bibinfo
  {title} {Nature and measure of entanglement in quantum phase transitions},\
  }\href {https://doi.org/10.1103/PhysRevA.70.042311} {\bibfield  {journal}
  {\bibinfo  {journal} {Physical Review A}\ }\textbf {\bibinfo {volume} {70}},\
  \bibinfo {pages} {042311} (\bibinfo {year} {2004})}\BibitemShut {NoStop}%
\bibitem [{\citenamefont {Somma}(2005)}]{somma2005quantum}%
  \BibitemOpen
  \bibfield  {author} {\bibinfo {author} {\bibfnamefont {R.~D.}\ \bibnamefont
  {Somma}},\ }\bibfield  {title} {\bibinfo {title} {Quantum computation,
  complexity, and many-body physics},\ }\href
  {https://arxiv.org/abs/quant-ph/0512209} {\bibfield  {journal} {\bibinfo
  {journal} {arXiv preprint quant-ph/0512209}\ } (\bibinfo {year}
  {2005})}\BibitemShut {NoStop}%
\bibitem [{\citenamefont {Larocca}\ \emph {et~al.}(2023)\citenamefont
  {Larocca}, \citenamefont {Ju}, \citenamefont {García-Martín}, \citenamefont
  {Coles},\ and\ \citenamefont {Cerezo}}]{larocca2021theory}%
  \BibitemOpen
  \bibfield  {author} {\bibinfo {author} {\bibfnamefont {M.}~\bibnamefont
  {Larocca}}, \bibinfo {author} {\bibfnamefont {N.}~\bibnamefont {Ju}},
  \bibinfo {author} {\bibfnamefont {D.}~\bibnamefont {García-Martín}},
  \bibinfo {author} {\bibfnamefont {P.~J.}\ \bibnamefont {Coles}},\ and\
  \bibinfo {author} {\bibfnamefont {M.}~\bibnamefont {Cerezo}},\ }\bibfield
  {title} {\bibinfo {title} {Theory of overparametrization in quantum neural
  networks},\ }\href
  {https://doi.org/https://doi.org/10.1038/s43588-023-00467-6} {\bibfield
  {journal} {\bibinfo  {journal} {Nature Computational Science}\ }\textbf
  {\bibinfo {volume} {3}},\ \bibinfo {pages} {542} (\bibinfo {year}
  {2023})}\BibitemShut {NoStop}%
\bibitem [{\citenamefont {Efthymiou}\ \emph {et~al.}(2021)\citenamefont
  {Efthymiou}, \citenamefont {Ramos-Calderer}, \citenamefont {Bravo-Prieto},
  \citenamefont {P{\'e}rez-Salinas}, \citenamefont {Garc{\'\i}a-Mart{\'\i}n},
  \citenamefont {Garcia-Saez}, \citenamefont {Latorre},\ and\ \citenamefont
  {Carrazza}}]{efthymiou2020qibo}%
  \BibitemOpen
  \bibfield  {author} {\bibinfo {author} {\bibfnamefont {S.}~\bibnamefont
  {Efthymiou}}, \bibinfo {author} {\bibfnamefont {S.}~\bibnamefont
  {Ramos-Calderer}}, \bibinfo {author} {\bibfnamefont {C.}~\bibnamefont
  {Bravo-Prieto}}, \bibinfo {author} {\bibfnamefont {A.}~\bibnamefont
  {P{\'e}rez-Salinas}}, \bibinfo {author} {\bibfnamefont {D.}~\bibnamefont
  {Garc{\'\i}a-Mart{\'\i}n}}, \bibinfo {author} {\bibfnamefont
  {A.}~\bibnamefont {Garcia-Saez}}, \bibinfo {author} {\bibfnamefont {J.~I.}\
  \bibnamefont {Latorre}},\ and\ \bibinfo {author} {\bibfnamefont
  {S.}~\bibnamefont {Carrazza}},\ }\bibfield  {title} {\bibinfo {title} {Qibo:
  a framework for quantum simulation with hardware acceleration},\ }\href
  {https://doi.org/10.1088/2058-9565/ac39f5} {\bibfield  {journal} {\bibinfo
  {journal} {Quantum Science and Technology}\ }\textbf {\bibinfo {volume}
  {7}},\ \bibinfo {pages} {015018} (\bibinfo {year} {2021})}\BibitemShut
  {NoStop}%
\bibitem [{\citenamefont {Efthymiou}\ \emph {et~al.}(2022)\citenamefont
  {Efthymiou}, \citenamefont {Lazzarin}, \citenamefont {Pasquale},\ and\
  \citenamefont {Carrazza}}]{efthymiou2022quantum}%
  \BibitemOpen
  \bibfield  {author} {\bibinfo {author} {\bibfnamefont {S.}~\bibnamefont
  {Efthymiou}}, \bibinfo {author} {\bibfnamefont {M.}~\bibnamefont {Lazzarin}},
  \bibinfo {author} {\bibfnamefont {A.}~\bibnamefont {Pasquale}},\ and\
  \bibinfo {author} {\bibfnamefont {S.}~\bibnamefont {Carrazza}},\ }\bibfield
  {title} {\bibinfo {title} {Quantum simulation with just-in-time
  compilation},\ }\href {https://doi.org/10.22331/q-2022-09-22-814} {\bibfield
  {journal} {\bibinfo  {journal} {Quantum}\ }\textbf {\bibinfo {volume} {6}},\
  \bibinfo {pages} {814} (\bibinfo {year} {2022})}\BibitemShut {NoStop}%
\bibitem [{\citenamefont {Coffman}\ \emph {et~al.}(2000)\citenamefont
  {Coffman}, \citenamefont {Kundu},\ and\ \citenamefont
  {Wootters}}]{coffman2000distributed}%
  \BibitemOpen
  \bibfield  {author} {\bibinfo {author} {\bibfnamefont {V.}~\bibnamefont
  {Coffman}}, \bibinfo {author} {\bibfnamefont {J.}~\bibnamefont {Kundu}},\
  and\ \bibinfo {author} {\bibfnamefont {W.~K.}\ \bibnamefont {Wootters}},\
  }\bibfield  {title} {\bibinfo {title} {Distributed entanglement},\ }\href
  {https://doi.org/10.1103/PhysRevA.61.052306} {\bibfield  {journal} {\bibinfo
  {journal} {Physical Review A}\ }\textbf {\bibinfo {volume} {61}},\ \bibinfo
  {pages} {052306} (\bibinfo {year} {2000})}\BibitemShut {NoStop}%
\bibitem [{\citenamefont {Hebenstreit}\ \emph {et~al.}(2019)\citenamefont
  {Hebenstreit}, \citenamefont {Jozsa}, \citenamefont {Kraus}, \citenamefont
  {Strelchuk},\ and\ \citenamefont {Yoganathan}}]{hebenstreit2019all}%
  \BibitemOpen
  \bibfield  {author} {\bibinfo {author} {\bibfnamefont {M.}~\bibnamefont
  {Hebenstreit}}, \bibinfo {author} {\bibfnamefont {R.}~\bibnamefont {Jozsa}},
  \bibinfo {author} {\bibfnamefont {B.}~\bibnamefont {Kraus}}, \bibinfo
  {author} {\bibfnamefont {S.}~\bibnamefont {Strelchuk}},\ and\ \bibinfo
  {author} {\bibfnamefont {M.}~\bibnamefont {Yoganathan}},\ }\bibfield  {title}
  {\bibinfo {title} {All pure fermionic non-gaussian states are magic states
  for matchgate computations},\ }\href
  {https://doi.org/10.1103/PhysRevLett.123.080503} {\bibfield  {journal}
  {\bibinfo  {journal} {Physical review letters}\ }\textbf {\bibinfo {volume}
  {123}},\ \bibinfo {pages} {080503} (\bibinfo {year} {2019})}\BibitemShut
  {NoStop}%
\bibitem [{\citenamefont {Reardon-Smith}\ \emph {et~al.}(2023)\citenamefont
  {Reardon-Smith}, \citenamefont {Oszmaniec},\ and\ \citenamefont
  {Korzekwa}}]{reardon2023improved}%
  \BibitemOpen
  \bibfield  {author} {\bibinfo {author} {\bibfnamefont {O.}~\bibnamefont
  {Reardon-Smith}}, \bibinfo {author} {\bibfnamefont {M.}~\bibnamefont
  {Oszmaniec}},\ and\ \bibinfo {author} {\bibfnamefont {K.}~\bibnamefont
  {Korzekwa}},\ }\bibfield  {title} {\bibinfo {title} {Improved simulation of
  quantum circuits dominated by free fermionic operations},\ }\href
  {https://doi.org/10.48550/arXiv.2307.12702} {\bibfield  {journal} {\bibinfo
  {journal} {arXiv preprint arXiv:2307.12702}\ } (\bibinfo {year}
  {2023})}\BibitemShut {NoStop}%
\bibitem [{\citenamefont {Dias}\ and\ \citenamefont
  {Koenig}(2023)}]{dias2023classical}%
  \BibitemOpen
  \bibfield  {author} {\bibinfo {author} {\bibfnamefont {B.}~\bibnamefont
  {Dias}}\ and\ \bibinfo {author} {\bibfnamefont {R.}~\bibnamefont {Koenig}},\
  }\bibfield  {title} {\bibinfo {title} {Classical simulation of non-gaussian
  fermionic circuits},\ }\href {https://arxiv.org/abs/2307.12912} {\bibfield
  {journal} {\bibinfo  {journal} {arXiv preprint arXiv:2307.12912}\ } (\bibinfo
  {year} {2023})}\BibitemShut {NoStop}%
\bibitem [{\citenamefont {Cudby}\ and\ \citenamefont
  {Strelchuk}(2023)}]{cudby2023gaussian}%
  \BibitemOpen
  \bibfield  {author} {\bibinfo {author} {\bibfnamefont {J.}~\bibnamefont
  {Cudby}}\ and\ \bibinfo {author} {\bibfnamefont {S.}~\bibnamefont
  {Strelchuk}},\ }\bibfield  {title} {\bibinfo {title} {Gaussian decomposition
  of magic states for matchgate computations},\ }\href
  {https://arxiv.org/abs/2307.12654} {\bibfield  {journal} {\bibinfo  {journal}
  {arXiv preprint arXiv:2307.12654}\ } (\bibinfo {year} {2023})}\BibitemShut
  {NoStop}%
\bibitem [{\citenamefont {Bratteli}\ and\ \citenamefont
  {Robinson}(2012)}]{bratteli2012operator}%
  \BibitemOpen
  \bibfield  {author} {\bibinfo {author} {\bibfnamefont {O.}~\bibnamefont
  {Bratteli}}\ and\ \bibinfo {author} {\bibfnamefont {D.~W.}\ \bibnamefont
  {Robinson}},\ }\href@noop {} {\emph {\bibinfo {title} {Operator algebras and
  quantum statistical mechanics: Volume 1: C*-and W*-Algebras. Symmetry Groups.
  Decomposition of States}}}\ (\bibinfo  {publisher} {Springer Science \&
  Business Media},\ \bibinfo {year} {2012})\BibitemShut {NoStop}%
\bibitem [{\citenamefont {De~Pasquale}\ and\ \citenamefont
  {Facchi}(2009)}]{de2009x}%
  \BibitemOpen
  \bibfield  {author} {\bibinfo {author} {\bibfnamefont {A.}~\bibnamefont
  {De~Pasquale}}\ and\ \bibinfo {author} {\bibfnamefont {P.}~\bibnamefont
  {Facchi}},\ }\bibfield  {title} {\bibinfo {title} {X y model on the circle:
  Diagonalization, spectrum, and forerunners of the quantum phase transition},\
  }\href {https://doi.org/10.1103/PhysRevA.80.032102} {\bibfield  {journal}
  {\bibinfo  {journal} {Physical Review A}\ }\textbf {\bibinfo {volume} {80}},\
  \bibinfo {pages} {032102} (\bibinfo {year} {2009})}\BibitemShut {NoStop}%
\bibitem [{\citenamefont {Bravyi}(2005)}]{bravyi2004lagrangian}%
  \BibitemOpen
  \bibfield  {author} {\bibinfo {author} {\bibfnamefont {S.}~\bibnamefont
  {Bravyi}},\ }\bibfield  {title} {\bibinfo {title} {Lagrangian representation
  for fermionic linear optics},\ }\href
  {https://doi.org/10.5555/2011637.2011640} {\bibfield  {journal} {\bibinfo
  {journal} {Quantum Info. Comput.}\ }\textbf {\bibinfo {volume} {5}},\
  \bibinfo {pages} {216–238} (\bibinfo {year} {2005})}\BibitemShut {NoStop}%
\bibitem [{\citenamefont {Goh}\ \emph {et~al.}(2023)\citenamefont {Goh},
  \citenamefont {Larocca}, \citenamefont {Cincio}, \citenamefont {Cerezo},\
  and\ \citenamefont {Sauvage}}]{goh2023lie}%
  \BibitemOpen
  \bibfield  {author} {\bibinfo {author} {\bibfnamefont {M.~L.}\ \bibnamefont
  {Goh}}, \bibinfo {author} {\bibfnamefont {M.}~\bibnamefont {Larocca}},
  \bibinfo {author} {\bibfnamefont {L.}~\bibnamefont {Cincio}}, \bibinfo
  {author} {\bibfnamefont {M.}~\bibnamefont {Cerezo}},\ and\ \bibinfo {author}
  {\bibfnamefont {F.}~\bibnamefont {Sauvage}},\ }\bibfield  {title} {\bibinfo
  {title} {Lie-algebraic classical simulations for variational quantum
  computing},\ }\href {https://arxiv.org/abs/2308.01432} {\bibfield  {journal}
  {\bibinfo  {journal} {arXiv preprint arXiv:2308.01432}\ } (\bibinfo {year}
  {2023})}\BibitemShut {NoStop}%
\end{thebibliography}%

\clearpage
\newpage

\onecolumngrid
\renewcommand\figurename{Supplemental Figure}

\section*   {Supplemental Information for ``Showcasing a Barren Plateau Theory Beyond the Dynamical Lie Algebra'' }

\section{Preliminaries}

In this section, we introduce definitions and useful results that will be needed throughout the rest of this Supplemental Information. 
We begin by recalling that we will be dealing with parametrized quantum circuits of the form
\begin{equation} \label{sup-eq:pqc}
    U(\thv) = \prod_l e^{-i\theta_l H_l}\,,
\end{equation}
where $\thv$ is a vector containing real parameters, and the $H_l$ are Hermitian operators, called generators, that are taken from a set $\GC$. These generators determine the Dynamical Lie Algebra 
 (DLA) $\mathfrak{g}$ of the circuit, defined below.

\begin{supdefinition}[Dynamical Lie Algebra]\label{sup-def:dynamical_lie_algebra} Given a set of Hermitian generators $\GC$, the dynamical Lie algebra $\mathfrak{g}$ is the subspace of operator space spanned by the repeated nested commutators of the elements in $i\GC$. That is
\begin{equation}\label{sup-eq:liea_def}
\liea={\rm span}_\mathbb{R}\left\langle i\GC \right\rangle_{Lie}\,,
\end{equation}
where $\left\langle i\GC \right\rangle_{Lie}$ denotes the Lie closure of $i\GC$.
\end{supdefinition}

The importance of the DLA stems from the fact that all unitaries of the form~\eqref{sup-eq:pqc} belong to a Lie group $G$ that is obtained via exponentiation of the algebra, i.e. $G=e^{\liea}$. Hence, the dimension of $\liea$ determines the ultimate expressivity of the parametrized quantum circuit~\cite{larocca2021diagnosing}. 

In this work, we are concerned with the DLA of parametrized matchgate circuits, i.e., circuits whose gates can be mapped to fermionic Gaussian unitaties. In particular, we take the  Lie algebra  $\g$ spanned by the nested commutators of the circuit's generators $H_l\in\GC=\{Z_1,\dots, Z_n, X_1 X_2,\dots, X_{n-1} X_{n}\}$,  where $n$ is the number of qubits and we use $X_i, Y_i, Z_i$ to denote the corresponding Pauli operator acting on the $i$-th qubit. This Lie algebra is given by~\cite{kokcu2022fixed,wiersema2023classification}
 \begin{equation} \label{sup-eq:dla}
     \g={\rm span}_{\mathbb R}i\{Z_i,\widehat{X_iX_j},\widehat{Y_iY_j},\widehat{X_iY_j},\widehat{Y_iX_j}\}_{1\leq i<j\leq n} \simeq \mathfrak{so}(2n)\,,
 \end{equation}
 \normalsize
where $\widehat{A_iB_j}=A_iZ_iZ_{i+1}\cdots Z_{j-1}B_j$ and $ \mathfrak{so}(2n)$ is the orthogonal Lie algebra. In order to work with this algebra, it is convenient to introduce the following $2n$ Majorana operators~\cite{jozsa2008matchgates}
\begin{equation}\begin{split} \label{sup-eq:majoranas}
    c_1&=XI\dots I,\; c_3= ZXI\dots I, \; c_{2n-1}=Z\dots Z X \\
        c_2&=YI\dots I,\; c_4= ZYI\dots I, \; \;\; c_{2n}\;\;\;=Z\dots Z Y\,. \end{split}
\end{equation}
We note that the Majoranas  are all Pauli strings that anti-commute among themselves, i.e.,
\begin{equation}\label{sup-eq:majo-anit-comm}
    \{c_\mu,c_\nu\}=2\delta_{\mu\nu}\;\;\;\; \mu,\nu=1,\dots ,2n\,.
\end{equation}
Via these operators, a free-fermionic Hamiltonian is defined as $H=i\sum_{\mu<\nu}h_{\mu\nu}c_\mu c_\nu$ with $h_{\mu\nu}$ a real anti-symmetric matrix of size $2n\times 2n$. A Gaussian unitary can then be written as $U=e^{iH}$ and satisfies that $U^\dag c_\mu U=\sum_\nu R_{\mu \nu}c_\nu$, with $R_{\mu \nu}=e^{2h_{\mu\nu}}$ a matrix in the fundamental representation of $\mathfrak{so}(2n)$, in agreement with the aforementioned isomorphism (this statement is proven in Sec.~\ref{sup-sec:lemma2}).

Under this map, we can obtain all the basis elements of $\g$ in~\eqref{sup-eq:dla} via the product of two Majoranas. More formally, we can prove the following Supplemental Proposition.

\begin{supproposition} \label{sup-prop:algebra} The dynamical Lie algebra in Eq.~\eqref{sup-eq:dla} is given by
\begin{equation}
    \g = {\rm span}_{\mathbb{R}}\{c_\mu c_\nu\}_{1\leq\mu<\nu\leq2n}\,.
\end{equation}

\end{supproposition}

\setcounter{figure}{0}
\begin{figure}[t!]
    \centering
\includegraphics[width=0.8\linewidth]{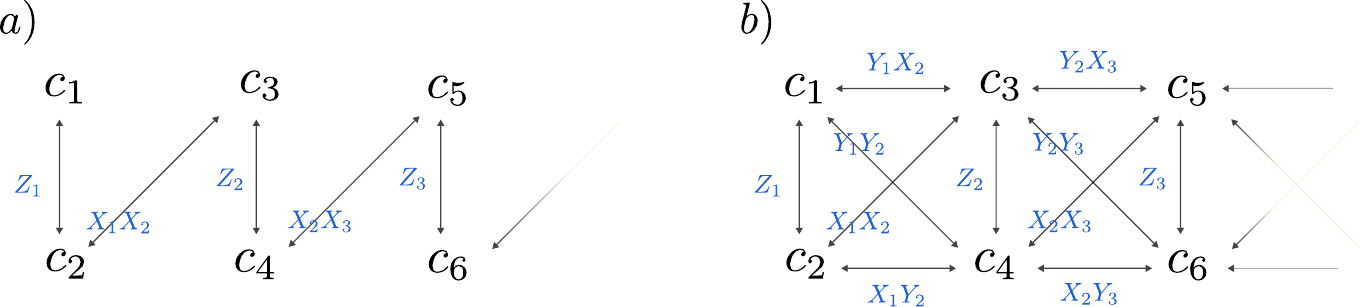}
    \caption{{\bf Majorana graph}. The vertices of the Majorana graph represent all the $c_\mu$ operators, and the edges represent the product of the operators at the connected vertices (ignoring phase factors for simplicity). As explained in the main text,  we can ``move'' through paths in the graph, and we interpret such movement as equivalent to taking the nested commutator of the generators associated to the traversed edges. In this way, a basis for the dynamical Lie algebra in Eq.~\eqref{sup-eq:dla} can be obtained. For example, by employing the generators in $a)$ one can construct all the local elements in $b)$. 
    Moreover, this procedure allows us to show that the basis of the dynamical Lie algebra consists of all possible products of two distinct Majorana operators, namely the complete graph.} 
    \label{sup-fig:majorana-graph}
\end{figure}

\begin{proof}
Let us first notice that the generators in $i\GC$ can be represented as follows,
\begin{equation}\label{sup-eq:maj-products} \begin{split}
    iZ_i&=c_{2i-1}c_{2i} \,, \\
    iX_iX_{i+1}&=c_{2i}c_{2i+1}\,.
  \end{split}
\end{equation}
In order to show how to obtain the rest of the basis elements in $\g$, we find it useful to introduce a pictorial representation where we arrange all the $c_\mu$ in a graph with two rows, the upper one containing as vertices the Majorana operators with odd $\mu$, and the lower one containing those with even $\mu$, as shown in Supp. Fig.~\ref{sup-fig:majorana-graph}. Then we draw edges between those vertices connecting two Majoranas that appear in the products in~\eqref{sup-eq:maj-products}. These edges represent the product of the two Majorana operators. We refer to this graph as the ``Majorana graph''. We now note that the following commutation relation holds,
\begin{equation} \label{sup-eq:maj-commutator}
    [ c_{\nu_1} c_\mu,  c_\mu c_{\nu_2}]=2\, c_{\nu_1}c_{\nu_2}\,,
\end{equation}
where the indices $\mu,\nu_1,\nu_2$ are all different from each other, and we used that $c_\mu^2=\id$ . Furthermore, we have
\begin{equation}\label{sup-eq:maj-zerocommutator}
    [ c_{\nu_1} c_{\nu_2},  c_{\nu_3} c_{\nu_4}]=0\,,
\end{equation}
whenever $\nu_1,\nu_2,\nu_3,\nu_4$ are all different. Next, let us take the path determined by two adjacent edges in the Majorana graph (e.g., $(c_2,c_3)$ and $(c_3,c_4)$) and say that traversing this path is equivalent to taking the commutator between the products of two Majoranas that each edge represent (e.g. $[c_2c_3,c_3c_4]$). From Eq.~\eqref{sup-eq:maj-commutator}, we know that this commutator is proportional to the product of the operators at the beginning and at the end of the path (e.g. $ c_2c_4$). Hence, this product belongs to $\g$ since, by definition, $\g$ is the span over the  real numbers of the nested commutators of the generators in $i\GC$ (which are precisely the edges in the Majorana graph). Now, we take a path consisting of an arbitrary number of adjacent edges, and we say that traversing it is equivalent to taking the nested commutators between the operators that each edge represent. But using~\eqref{sup-eq:maj-commutator} iteratively, we find that the nested commutator represented by the path is again proportional to the product of the Majoranas located at the beginning and at the end of the path. Moreover, if we were to select two non-adjacent edges and take the commutator between the operators attached to the edges, Eq.~\eqref{sup-eq:maj-zerocommutator} tells us that this commutator is zero.
Hence, we conclude that all nested commutators of the generators in $i\GC$ are proportional to products of two distinct Majoranas. Furthermore, since the Majorana graph is connected, there always exists a path between any two vertices, so all possible products of two Majoranas belong to $\g$. We thus obtain that $\dim(\g)=\binom{2n}{2}=n(2n-1)$.

\end{proof}

\section{Proof of Lemma I}

We here prove  Lemma~\ref{lem:modules}, for which we first recall the definition of the direct sum of vector spaces.

\begin{supdefinition}[Direct sum of vector spaces] \label{sup-def:direct_sum} Let $V$ be a vector space over a field $F$, and let $V_1, V_2,\dots,V_m$ be subspaces of $V$. Then, $V$ is said to be the direct sum of $V_1,V_2,\dots,V_m$, i.e.
\begin{equation}
    V=\bigoplus_{s=1}^m V_s\,,
\end{equation}
if and only if $V=\sum_{s=1}^m V_s$ and $V_s\cap V_{s'}=\{0\}$ whenever $s\neq s'$.
\end{supdefinition}
We now restate  Lemma~\ref{lem:modules} for convenience. We also refer the reader to ~\cite{wan2022matchgate} for an alternative version of the proof. 
\setcounter{lemma}{0}
\begin{lemma}\label{sup-lem:modules}
The space of linear operators acting on $n$-qubits, denoted as $\BC$, can be decomposed into subspaces as
\begin{equation} 
\mathcal{B}=\bigoplus_{\kappa=0}^{2n} \mathcal{B}_{\kappa}\,,
\end{equation}
with each  $\mathcal{B}_\kappa$ being the linear space, of dimension $\binom{2n}{\kappa}$, spanned by a basis of products of $\kappa$ distinct Majoranas.
 \end{lemma}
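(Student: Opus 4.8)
The plan is to establish the three parts of the claim in order: (i) that the proposed products of Majoranas span all of $\BC$, (ii) that the $2n+1$ subspaces $\BC_\kappa$ are genuinely independent (so the sum is direct), and (iii) the dimension count. First I would fix the natural candidate basis: for each subset $S=\{\mu_1<\mu_2<\cdots<\mu_\kappa\}\subseteq\{1,\dots,2n\}$ with $|S|=\kappa$, set $c_S := c_{\mu_1}c_{\mu_2}\cdots c_{\mu_\kappa}$ (with $c_\emptyset=\id$), and let $\BC_\kappa:=\operatorname{span}_{\mathbb C}\{c_S : |S|=\kappa\}$. There are $\binom{2n}{\kappa}$ such subsets, so once independence is shown the dimension statement follows immediately, and since $\sum_{\kappa=0}^{2n}\binom{2n}{\kappa}=2^{2n}=\dim\BC$, spanning and independence are in fact equivalent here — so it suffices to prove either one plus a counting remark.

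My preferred route is to prove \emph{linear independence} of the full family $\{c_S\}_{S\subseteq\{1,\dots,2n\}}$ via the Hilbert--Schmidt inner product, which simultaneously gives the direct-sum decomposition. The key computational fact, following directly from the Clifford relations $\{c_\mu,c_\nu\}=2\delta_{\mu\nu}$ of Eq.~\eqref{sup-eq:majo-anit-comm}, is that each $c_S$ is a Pauli string (up to a phase), hence unitary up to a phase and traceless unless $S=\emptyset$; more precisely $c_S\ad c_T$ is, up to a sign/phase, equal to $c_{S\triangle T}$ (the symmetric difference), after cancelling the repeated Majoranas using $c_\mu^2=\id$ and anticommuting them past one another. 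Therefore $\Tr[c_S\ad c_T]=2^n\delta_{S,T}$, which shows the $\{c_S\}$ form an orthogonal (hence linearly independent) set of $2^{2n}$ operators in the $2^{2n}$-dimensional space $\BC$, so they are a basis. Grouping the basis by $\kappa=|S|$ then yields $\BC=\bigoplus_{\kappa=0}^{2n}\BC_\kappa$ in the sense of Supplemental Definition~\ref{sup-def:direct_sum}: the sum is all of $\BC$, and $\BC_\kappa\cap\BC_{\kappa'}=\{0\}$ for $\kappa\neq\kappa'$ because a nonzero element of the intersection would give a nontrivial linear dependence among distinct $c_S$'s. The dimension of $\BC_\kappa$ is the number of size-$\kappa$ subsets, namely $\binom{2n}{\kappa}$.

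I would also record the small reordering lemma needed to make the inner-product computation airtight: any product $c_{\nu_1}\cdots c_{\nu_m}$ can be brought to the canonical increasing-index form $\pm c_S$ where $S$ is the set of indices appearing an odd number of times, by repeatedly (a) using $c_\mu c_\nu = -c_\nu c_\mu$ for $\mu\neq\nu$ to sort, and (b) using $c_\mu^2=\id$ to delete adjacent equal pairs; this is just the standard normal-ordering in a Clifford algebra and each step only changes the overall scalar. This guarantees $c_S\ad c_T \propto c_{S\triangle T}$ with a unit-modulus scalar, so its trace is $2^n$ precisely when $S\triangle T=\emptyset$, i.e. $S=T$, and $0$ otherwise. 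I do not expect a serious obstacle here; the only thing to be careful about is bookkeeping of phases (the $c_S$ are Hermitian only up to a power of $i$, e.g. $P=(-i)^n c_1\cdots c_{2n}$), but since linear independence and dimension are insensitive to overall scalars, the argument goes through regardless. The mild subtlety worth a sentence is why a \emph{spanning} set of the right size is automatically a basis — this is the elementary fact that a spanning set of cardinality $\dim\BC$ in a finite-dimensional space is linearly independent, combined with $\sum_\kappa\binom{2n}{\kappa}=2^{2n}=\dim\BC$.
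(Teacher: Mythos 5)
Your proposal is correct and follows essentially the same route as the paper: both establish the Hilbert--Schmidt orthogonality $\Tr[c_S\ad c_T]\propto \delta_{S,T}$ of the normal-ordered Majorana products and then combine it with the count $\sum_{\kappa}\binom{2n}{\kappa}=2^{2n}=\dim\BC$ to get the direct-sum decomposition and the dimensions. The only place the paper is more explicit is the step you assert in passing — that a product of a nonempty set of \emph{distinct} Majoranas is never proportional to the identity (and hence is traceless) — which the paper verifies by tracking the qubit index at which successive Majoranas differ; this check is not entirely vacuous, since products of distinct Pauli strings can in general be proportional to $\id$.
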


\begin{proof}
 Let us consider two distinct products of Majorana operators, i.e., $c_{\nu_1}\cdots c_{\nu_\kappa}$ and $c_{\nu'_1}\cdots c_{\nu'_{\kappa'}}$, where $\nu_1<\dots<\nu_\kappa$ and $\nu'_1<\dots<\nu'_{\kappa'}$. We ask ourselves the question of when can $c_{\nu_1}\cdots c_{\nu_\kappa} c_{\nu'_1}\cdots c_{\nu'_{\kappa'}}$ be proportional to the identity. 
 We begin by making the following observation: Given two different Majoranas from~\eqref{sup-eq:majoranas}, they exactly differ at a single qubit index. Hence, the product of two different Majoranas cannot be the identity matrix. Now, can the product of three different Majoranas be proportional to the identity? It is straightforward to see that we can choose a third Majorana such that we obtain $\pm\id$ at the qubit index where the former two Majoranas differ. However, this third Majorana is different from the previous two at a new qubit index. So the product of three distinct Majoranas cannot be proportional to the identity. Iterating this line of reasoning leads to the conclusion that the product of any number of different Majorana operators cannot be proportional to the identity matrix.
Thus, it follows that
 \begin{equation} \label{sup-eq:ortho-maj-prod}
     \Tr[c_{\nu_1}\cdots c_{\nu_\kappa} c_{\nu'_1}\cdots c_{\nu'_{\kappa'}}] =(-1)^{\lfloor \frac{\kappa}{2}\rfloor} d \, \delta_{\kappa \kappa'} \delta_{\nu_1 \nu'_1}\cdots\delta_{\nu_\kappa \nu'_\kappa}\,,
 \end{equation}
 where $d=2^n$ is the dimension of the Hilbert space of $n$ qubits.
Equation~\eqref{sup-eq:ortho-maj-prod} implies that $\BC_\kappa\cap\BC_{\kappa'}=\{0\}$ whenever $\kappa\neq\kappa'$ (this follows from the orthogonality implied by the $\delta_{\kappa\kappa'}$ factor). Furthermore, since $\BC_\kappa$ is the subspace of $\BC$ spanned by a basis of products of $\kappa$ different Majoranas (and all these products are orthogonal among themselves according to~\eqref{sup-eq:ortho-maj-prod}), its dimension is given by the number of all possible such (ordered) products, i.e., $\dim(\BC_\kappa)=\binom{2n}{\kappa}$. We then have $\sum_{\kappa=0}^{2n} \dim(\BC_\kappa)= 2^{2n}=d^2$, where $d=2^n$. This means that $\BC= \sum_{\kappa=0}^{2n} \BC_\kappa$, concluding the proof.
     
\end{proof}

\section{Proof of Lemma 2}
\label{sup-sec:lemma2}

In this Section, we prove Lemma~\ref{lem:inv}, which we recall for convenience.

\begin{lemma}\label{sup-lem:inv}
Let $M_\kappa\in \BC_\kappa$, then  $\forall U\in G$, $UM_\kappa U\ad\in\BC_\kappa$. Moreover, any pair of Pauli operators in $\BC_\kappa$ are proportional to each other via commutation with elements in $\g$.
\end{lemma}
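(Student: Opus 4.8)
The plan is to reduce everything to the Majorana anticommutation relation~\eqref{sup-eq:majo-anit-comm}, using Supplemental Proposition~\ref{sup-prop:algebra} (so that $\g=\mathrm{span}_{\mathbb R}\{c_\alpha c_\beta\}_{\alpha<\beta}$) and Lemma~\ref{sup-lem:modules} (so that $\BC_\kappa=\mathrm{span}_{\mathbb C}\{c_S\}_{|S|=\kappa}$, writing $c_S:=c_{\nu_1}\cdots c_{\nu_\kappa}$ for an ordered set $S=\{\nu_1<\cdots<\nu_\kappa\}$). Both assertions follow from a single computation: for $\alpha\neq\beta$ and $|S|=\kappa$ I would work out $[c_\alpha c_\beta,c_S]$ by a short case analysis on $m:=|\{\alpha,\beta\}\cap S|$, finding that it vanishes when $m\in\{0,2\}$ and equals a \emph{nonzero} scalar times $c_{(S\setminus\{\alpha\})\cup\{\beta\}}$ when $m=1$ (with $\alpha$ the common index), the scalar being $\pm 2$ and read off from $c_\mu^2=\id$ and the sign of reordering.

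For the first sentence of the lemma, the $m$-analysis already gives $[c_\alpha c_\beta,c_S]\in\BC_\kappa$ in every case, hence $[A,M_\kappa]\in\BC_\kappa$ for all $A\in\g$ and $M_\kappa\in\BC_\kappa$, i.e.\ $\mathrm{ad}_A$ preserves $\BC_\kappa$; since $SO(2n)$ is compact and connected, $G=e^\g$ and $\mathrm{Ad}_U=e^{\mathrm{ad}_A}$ maps $\BC_\kappa$ to itself for every $U=e^A\in G$. Along the way I would also record the conceptual picture promised earlier in the section: differentiating $c_\mu(t):=e^{-tA}c_\mu e^{tA}$ and using $[c_\alpha c_\beta,c_\mu]=2(\delta_{\mu\beta}c_\alpha-\delta_{\mu\alpha}c_\beta)$ yields the linear ODE $\dot R(t)=2aR(t)$, $R(0)=\id$, so $U\ad c_\mu U=\sum_\nu R_{\mu\nu}c_\nu$ with $R=e^{2a}\in SO(2n)$; extending multiplicatively, $U\ad c_S U=\sum_{|T|=\kappa}\det(R[S,T])\,c_T$, the $\kappa$-th exterior power of $R$, which makes $G$-invariance of $\BC_\kappa$ transparent. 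The step I expect to be delicate is justifying this determinantal formula directly from the expansion $\prod_j(\sum_\mu R_{\nu_j\mu}c_\mu)$, namely showing that the terms with repeated Majorana indices — which a priori land in $\BC_{\kappa-2}\oplus\BC_{\kappa-4}\oplus\cdots$ — cancel; the infinitesimal argument above avoids this bookkeeping, so I would make that the logical backbone and present the exterior-power formula as a remark.

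For the second sentence, first note that the Pauli operators lying in $\BC_\kappa$ are exactly the $c_S$ with $|S|=\kappa$, up to phases: the $c_S$ are mutually orthogonal Pauli strings spanning $\BC_\kappa$, and since distinct Pauli strings are linearly independent, any Pauli string inside $\BC_\kappa$ must be proportional to one of them. Now from the $m=1$ case, commuting $c_S$ with $c_\alpha c_\beta\in\g$ (choosing $\alpha\in S$, $\beta\notin S$) produces a nonzero multiple of $c_{(S\setminus\{\alpha\})\cup\{\beta\}}$, i.e.\ a single index-swap on $S$. Given two $\kappa$-subsets $S,S'$, I would list $S\setminus S'=\{a_1,\dots,a_m\}$ and $S'\setminus S=\{b_1,\dots,b_m\}$ (equal sizes because $|S|=|S'|$) and apply the swaps $a_j\mapsto b_j$ successively; at the $j$-th step $a_j$ is still present and $b_j$ is still absent, so each swap is admissible, and after $m$ of them $c_S$ has been turned into a nonzero multiple of $c_{S'}$. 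Hence every pair of Pauli operators in $\BC_\kappa$ is related by a finite sequence of commutators with single elements of $\g$, as claimed (and this remains valid for $\kappa=n$, where $\BC_n$ is a reducible $G$-module, because the $c_S$ are not aligned with that reduction).
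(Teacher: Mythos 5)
Your proof is correct and rests on the same computational core as the paper's --- the commutator of a quadratic element $c_\alpha c_\beta\in\g$ with a degree-$\kappa$ Majorana monomial, which is exactly the paper's Eq.~\eqref{sup-eq:comm-algebra} --- but you organize the first half differently, and in a way that closes a small gap. The paper proves $G$-invariance by first showing $U c_\mu U^{\dagger}=\sum_\nu R_{\mu\nu}c_\nu$ and then writing $U M_\kappa U^{\dagger}$ as a product of $\kappa$ such rotated Majoranas, asserting that membership in $\BC_\kappa$ ``readily follows''; strictly speaking, the expansion of that product also contains repeated-index terms lying in $\BC_{\kappa-2}\oplus\BC_{\kappa-4}\oplus\cdots$, which cancel only because $R$ is orthogonal (so that $\sum_\mu R_{\nu_i\mu}R_{\nu_j\mu}=\delta_{\nu_i\nu_j}=0$ for $\nu_i\neq\nu_j$). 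You sidestep this bookkeeping by establishing $[\g,\BC_\kappa]\subseteq\BC_\kappa$ directly from the case analysis on $|\{\alpha,\beta\}\cap S|$ and then exponentiating the adjoint action, which is the tighter logical backbone; your exterior-power remark $U^{\dagger}c_S U=\sum_{|T|=\kappa}\det\big(R[S,T]\big)\,c_T$ is precisely the statement the paper's route implicitly needs. For the second half the two arguments are essentially identical single-index-swap walks on $\kappa$-subsets; you merely make explicit the admissibility of each successive swap $a_j\mapsto b_j$ and the (needed but unstated) observation that the only Pauli strings inside $\BC_\kappa$ are the monomials $c_S$ up to phase. I see no gaps.
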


\begin{proof}
Let us first choose $M_1=c_\mu$. Since $U\in G$, then it can always be written as $e^{-it H}$, where $iH\in\g$ and $t$ is a real parameter. We then have
\begin{equation}
    \frac{d (U c_\mu U\ad)}{dt} = [c_\mu, iH]\,.
\end{equation}
In order to evaluate the commutator $[c_\mu, iH]$, we note that
\begin{equation}
    [c_\mu, c_{\nu_1}c_{\nu_2}] =  \begin{cases} &0 \qquad\quad\; {\rm if \;\mu\neq \nu_1,\nu_2} \\ &2c_{\nu_2} \qquad {\rm if \;\mu=\nu_1} \\ &\!\!\!\!-2 c_{\nu_1} \qquad {\rm if\; \mu=\nu_2}  \end{cases}\,,
\end{equation}
where $c_{\nu_1}c_{\nu_2}\in\g$ (i.e., $\nu_1\neq\nu_2$). Therefore, since $iH=\sum_{\mu<\nu} h_{\mu\nu} c_\mu c_\nu$ with $h_{\mu\nu}$ a real anti-symmetric tensor (as proven in Supplemental Proposition~\ref{sup-prop:algebra}), we obtain
\begin{equation}
    \frac{d (U c_\mu U\ad)}{dt} = 2 \sum_{\nu=1}^{2n} h_{\mu\nu} U c_\nu U\ad \,,
\end{equation}
and therefore it follows that for $t=1$,
\begin{equation}
    U c_\mu U\ad = \sum_{\nu=1}^{2n} e^{2t\, h_{\mu\nu}} c_\nu\,.
\end{equation}
Thus, we conclude that $U c_\mu U\ad\in\BC_1$ for every $U\in G$.
Now, let us look at an operator $M_\kappa\in\BC_\kappa$. Then,
\begin{equation}
    U M_\kappa U\ad = U \left(\sum_{\nu_1< \cdots<\nu_\kappa} M_{\nu_1,\dots,\nu_\kappa} c_{\nu_1}\cdots c_{\nu_\kappa} \right) U\ad = \sum_{\nu_1< \cdots<\nu_\kappa}  M_{\nu_1\dots\nu_\kappa} U c_{\nu_1}U\ad U c_{\nu_2} U\ad\cdots U c_\kappa U\ad\,,
\end{equation}
where $M_{\nu_1,\dots,\nu_\kappa}$ is an anti-symmetric tensor.
Using that $U c_\mu U\ad\in\BC_1$ for every $U\in G$, it readily follows that $UM_\kappa U\ad\in\BC_\kappa$.

In order to prove the second part of the lemma, we simply need to compute the commutators between a product of $\kappa$ distinct Majoranas and a Pauli element of $\g$. These are given by
\begin{equation} \label{sup-eq:comm-algebra}
    [c_\mu c_\nu, c_{\nu_1}\cdots c_{\nu_\kappa}]= \begin{cases}
        &\!\!\!\!0 \qquad\qquad\qquad\qquad\qquad\qquad\quad {\rm if \;\mu,\nu\notin \{\nu_1,\dots,\nu_\kappa\} \; or\; \mu,\nu\in \{\nu_1,\dots,\nu_\kappa\}}\\ \pm & \!\!\!\!2 c_\mu  c_{\nu_1}\cdots c_{\nu_{i-1}}c_{\nu_{i+1}}\cdots c_{\nu_\kappa} \qquad\; {\rm if \;\mu\notin \{\nu_1,\dots,\nu_\kappa\} \; and \; \nu=\nu_i} \\ \pm&  \!\!\!\!2 c_\nu c_{\nu_1}\cdots c_{\nu_{i-1}}c_{\nu_{i+1}}\cdots c_{\nu_\kappa} \qquad\; {\rm if \;\mu=\nu_i \; and \; \nu\notin \{\nu_1,\dots,\nu_\kappa\}}
    \end{cases}\,.
\end{equation}
From Eq.~\eqref{sup-eq:comm-algebra} it is clear that given an operator $c_{\nu_1}\cdots c_{\nu_\kappa}\in\BC_\kappa$, it can be transformed into any other product of $\kappa$ distinct Majoranas via commutation with elements in $\g$.

\end{proof}

\section{Basis of the  commutant for the tensor square of a Lie group with associated Pauli-string Lie algebra}

As discussed in the main text, computing the variance of the loss function requires the evaluation of quantities such as $\mathbb{E}_{\thv}[U(\thv)^{\otimes t}(\cdot) U\ad(\thv)^{\otimes t} ]$, for $t\leq 2$. If we assume that the circuit is deep enough so that it forms a $2$-design over the Lie group $G$ (see Theorem 2 in~\cite{ragone2022representation}), then we can estimate such expectation values as 
\begin{equation}
    \mathbb{E}_{\thv}[U(\thv)^{\otimes t}(\cdot) U\ad(\thv)^{\otimes t} ]=\int_{G}dU U(\thv)^{\otimes t}(\cdot) U\ad(\thv)^{\otimes t} \,,
\end{equation}
where $dU$ denotes the normalized left-and-right-invariant Haar measure on $G$.  Moreover, it is well known that a map of the form $\EC^{(t)}(\cdot)=\int_{G}dU U(\thv)^{\otimes t}(\cdot) U\ad(\thv)^{\otimes t}$ is a projector into the commutant of the $t$-th fold tensor representation of $G$. That is, 
\begin{equation}
    \EC^{(t)}:\BC\rightarrow {\rm comm}(G^{\otimes t})\,,
\end{equation}
where 
\begin{equation}
    {\rm comm}(G^{\otimes t})=\{A\in \BC^{\otimes t}\,\,|\,\, [A,U^{\otimes t}]=0\,\,\forall U\in G\}\,.
\end{equation}
Moreover, given a Hermitian and orthonormal basis  $\{B^{(t)}_\eta\}_{\eta=1}^{\dim({\rm comm}(G^{\otimes t}))}$ of ${\rm comm}(G^{\otimes t})$ (with respect to the standard Hilbert--Schmidt inner product) we can use the Weingarten calculus to find that for any operator $A\in\BC^{\otimes t}$~\cite{ragone2022representation,garcia2023deep,mele2023introduction}
\begin{equation}\label{eq:weingarten}
    \EC^{(t)}(A)=\sum_{\eta=1}^{\dim({\rm comm}(G^{\otimes t}))}\Tr[B^{(t)}_\eta A]B^{(t)}_\eta\,.
\end{equation}
In what follows, we will refer to the elements of ${\rm comm}(G)$ as \textit{linear symmetries}, while to those of ${\rm comm}(G^{\otimes 2})$ as \textit{quadratic symmetries}. 

From the previous, we can see that evaluating Haar random averages over a group, requires knowledge of a basis for the tensor-fold representation's commutant. In this section, we present a method to find a basis for ${\rm comm}(G^{\otimes 2})$, i.e., for all the quadratic symmetries of any Pauli string DLA. We recall that the DLA associated to the matchgate circuit is precisely of this form, and hence, a special case of our results. The only assumption that we will make, is that we already know  the basis for  ${\rm comm}(G)$, i.e., for all the linear symmetries. For $\g$ being the matchgate circuit's DLA, then we know that ${\rm comm}(G)={\rm span}_{\mathbb{C}}\{\id,P\}$.  We begin by defining some basic terminology and observing some simple facts.

\begin{supdefinition}[Linear symmetries]  Given a set of generators $\GC$ of Lie algebra $\g$, as in $\eqref{sup-eq:liea_def}$, its linear symmetries are all the linear operators $L$ such that $[H_l, L]=0 \quad\forall H_l\in\GC$.
\end{supdefinition}

We note that the linear symmetries form a vector subspace of operators that commute with the generators of the algebra (and hence, with the algebra itself and the Lie group), i.e., if $L_1$ and $L_2$ are linear symmetries it follows that $\alpha L_1 + \beta L_2$ is also a linear symmetry for arbitrary $\alpha,\beta\in\mathbb{C}$.

\begin{supdefinition}[Quadratic symmetries] Given a set of generators $\GC$ of Lie algebra $\g$, as in $\eqref{sup-eq:liea_def}$, its quadratic symmetries are all the linear operators $Q$ such that $[H_l\otimes\id+\id\otimes H_l, Q]=0 \quad\forall H_l\in\GC$.
    
\end{supdefinition}

We remark that if $Q$ is a quadratic symmetry, then it follows that $[U^{\otimes 2},Q]=0\quad\forall U\in G$. Hence, the subspace spanned by the quadratic symmetries is also referred to as the commutant of the tensor-two representation of $G$. We now introduce the concept of a Pauli-string DLA.

\begin{supdefinition}[Pauli-string DLA]
\label{sup-def:pauli-str-dla}
A Lie algebra $\mf{g}$ is a Pauli-string DLA if $\mf{g}$ has a set of generators that are all Pauli strings.
    
\end{supdefinition}

A Pauli-string DLA $\mf{g}$ is much easier to understand than most DLAs. For example, the commutator of any two Pauli strings is either zero or another Pauli string (up to a factor of $\pm 2i$), so it immediately follows that any non-zero nested commutator will be proportional to a single Pauli string. Hence, there exists a basis of $\mf{g}$ consisting entirely of Pauli strings. Similarly, it is easy to see that the space of linear symmetries of a Pauli-string DLA has a basis consisting only of Pauli strings. We make another useful observation with a trivial lemma.

\begin{suplemma}
\label{sup-lem:pauli-str-commutator-bidirectional}
Suppose that $S_1, S_2, S_3$ are three Pauli strings. Then $[S_1, S_3] = \pm 2iS_2$ if and only if $[S_2, S_3] = \mp 2iS_1$.
\end{suplemma}

\begin{proof}
We proceed as follows,
\begin{equation}\begin{split}
    [S_1, S_3] = \pm 2iS_2 &\iff S_1S_3 = \pm iS_2\\
    &\iff S_1 = \pm iS_2S_3 \\
    &\iff S_2S_3 = \mp iS_1 \\
    &\iff [S_2, S_3] = \mp 2iS_1\,. \end{split}
\end{equation}
\end{proof}

Supplemental Lemma~\ref{sup-lem:pauli-str-commutator-bidirectional} shows that the action of taking the commutator with $S_3$ is bi-directional, as it maps $S_1$ to $S_2$ and $S_2$ to $S_1$ (up to constant factors). This observation motivates the following definition.

\begin{supdefinition}[Commutator graph]
\label{sup-def:commutator-graph}
Given a set of generators $\GC$ of Pauli strings on $n$ qubits, the commutator graph associated with $\GC$ is a graph with vertices given by the $4^n$ Pauli strings on $n$ qubits, and edges connecting two Pauli strings $S_1, S_2$ if and only if there exists a Pauli string $S_3\in\GC$ such that $[S_1, S_3]\propto S_2$, or equivalently, $[S_2, S_3]\propto S_1$.
\end{supdefinition}

In order to better grasp the concept of a commutator graph, we provide an example of what this graph looks like for the matchgate circuit DLA~\eqref{sup-eq:dla} on $n=3$ qubits.

\begin{supexample}[Commutator graph]
\label{ex:P3-commutator-graph}
The generating set for the matchgate circuit DLA~\eqref{sup-eq:dla} on $n=3$ qubits consists of the Pauli strings $ZII, IZI, IIZ, XXI, IXX$, where, e.g., $ZII$ stands for $\Z\otimes \id \otimes\id$. The commutator graph has $4^3 = 64$ vertices corresponding to all the $3$-qubit Pauli strings. To label the edges, we color them \textbf{\textcolor{red}{red}} for $ZII$, \textbf{\textcolor{orange}{orange}} for $IZI$, \textbf{\textcolor{green}{green}} for $IIZ$, \textbf{\textcolor{blue}{blue}} for $XXI$, and \textbf{\textcolor{violet}{violet}} for $IXX$. Then the commutator graph looks as follows:

\begin{center}  
    \begin{tikzpicture}[node distance={15mm}, thick, main/.style = {circle, draw}]
        \node[main] (ZZY) [scale=0.75] {$ZZY$};
        \node[main] (ZZX) [left of = ZZY, scale=0.75] {$ZZX$};
        \node[main] (ZYI) [left of = ZZX, scale=0.75] {$ZYI$};
        \node[main] (ZXI) [left of = ZYI, scale=0.75] {$ZXI$};
        \node[main] (YII) [left of = ZXI, scale=0.75] {$YII$};
        \node[main] (XII) [left of = YII, scale=0.75] {$XII$};
        \draw[green] (ZZY) -- (ZZX);
        \draw[violet] (ZZX) -- (ZYI);
        \draw[orange] (ZYI) -- (ZXI);
        \draw[blue] (ZXI) -- (YII);
        \draw[red] (YII) -- (XII);
        \node[main] (III) [left of = XII, xshift=-50mm, scale=0.75] {$III$};
    \end{tikzpicture}
\end{center}

\begin{center}
    \begin{tikzpicture}[node distance={15mm}, thick, main/.style = {circle, draw}]
        \node[main] (IIX) [scale=0.75] {$IIX$};
        \node[main] (IIY) [left of = IIX, scale=0.75] {$IIY$};
        \node[main] (IXZ) [left of = IIY, scale=0.75] {$IXZ$};
        \node[main] (IYZ) [left of = IXZ, scale=0.75] {$IYZ$};
        \node[main] (XZZ) [left of = IYZ, scale=0.75] {$XZZ$};
        \node[main] (YZZ) [left of = XZZ, scale=0.75] {$YZZ$};
        \draw[green] (IIX) -- (IIY);
        \draw[violet] (IIY) -- (IXZ);
        \draw[orange] (IXZ) -- (IYZ);
        \draw[blue] (IYZ) -- (XZZ);
        \draw[red] (XZZ) -- (YZZ);
        \node[main] (ZZZ) [left of = YZZ, , xshift=-50mm, scale=0.75] {$ZZZ$};
    \end{tikzpicture}
\end{center}
\vspace{.2cm}

\begin{center}
    \begin{tikzpicture}[node distance={15mm}, thick, main/.style = {circle, draw}]
        \node[main] (ZII) [scale=0.75] {$ZII$};
        \node[main] (YXI) [below of = ZII, scale=0.75] {$YXI$};
        \node[main] (XXI) [right of = YXI, scale=0.75] {$XXI$};
        \node[main] (YYI) [below of = YXI, scale=0.75] {$YYI$};
        \node[main] (XYI) [right of = YYI, scale=0.75] {$XYI$};
        \node[main] (IZI) [right of = XYI, scale=0.75] {$IZI$};
        \node[main] (YZX) [below of = YYI, scale=0.75] {$YZX$};
        \node[main] (XZX) [right of = YZX, scale=0.75] {$XZX$};
        \node[main] (IYX) [right of = XZX, scale=0.75] {$IYX$};
        \node[main] (IXX) [right of = IYX, scale=0.75] {$IXX$};
        \node[main] (YZY) [below of = YZX, scale=0.75] {$YZY$};
        \node[main] (XZY) [right of = YZY, scale=0.75] {$XZY$};
        \node[main] (IYY) [right of = XZY, scale=0.75] {$IYY$};
        \node[main] (IXY) [right of = IYY, scale=0.75] {$IXY$};
        \node[main] (IIZ) [right of = IXY, scale=0.75] {$IIZ$};
        \draw[blue] (ZII) -- (YXI);
        \draw[red] (YXI) -- (XXI);
        \draw[orange] (YXI) -- (YYI);
        \draw[orange] (XXI) -- (XYI);
        \draw[red] (YYI) -- (XYI);
        \draw[blue] (XYI) -- (IZI);
        \draw[violet] (YYI) -- (YZX);
        \draw[violet] (XYI) -- (XZX);
        \draw[violet] (IZI) -- (IYX);
        \draw[red] (YZX) -- (XZX);
        \draw[blue] (XZX) -- (IYX);
        \draw[orange] (IYX) -- (IXX);
        \draw[green] (YZX) -- (YZY);
        \draw[green] (XZX) -- (XZY);
        \draw[green] (IYX) -- (IYY);
        \draw[green] (IXX) -- (IXY);
        \draw[red] (YZY) -- (XZY);
        \draw[blue] (XZY) -- (IYY);
        \draw[orange] (IYY) -- (IXY);
        \draw[violet] (IXY) -- (IIZ);
    \end{tikzpicture}\hspace{1cm}
    \begin{tikzpicture}[node distance={15mm}, thick, main/.style = {circle, draw}]
        \node[main] (IZZ) [scale=0.75] {$IZZ$};
        \node[main] (XYZ) [below of = IZZ, scale=0.75] {$XYZ$};
        \node[main] (YYZ) [right of = XYZ, scale=0.75] {$YYZ$};
        \node[main] (XXZ) [below of = XYZ, scale=0.75] {$XXZ$};
        \node[main] (YXZ) [right of = XXZ, scale=0.75] {$YXZ$};
        \node[main] (ZIZ) [right of = YXZ, scale=0.75] {$ZIZ$};
        \node[main] (XIY) [below of = XXZ, scale=0.75] {$XIY$};
        \node[main] (YIY) [right of = XIY, scale=0.75] {$YIY$};
        \node[main] (ZXY) [right of = YIY, scale=0.75] {$ZXY$};
        \node[main] (ZYY) [right of = ZXY, scale=0.75] {$ZYY$};
        \node[main] (XIX) [below of = XIY, scale=0.75] {$XIX$};
        \node[main] (YIX) [right of = XIX, scale=0.75] {$YIX$};
        \node[main] (ZXX) [right of = YIX, scale=0.75] {$ZXX$};
        \node[main] (ZYX) [right of = ZXX, scale=0.75] {$ZYX$};
        \node[main] (ZZI) [right of = ZYX, scale=0.75] {$ZZI$};
        \draw[blue] (IZZ) -- (XYZ);
        \draw[red] (XYZ) -- (YYZ);
        \draw[orange] (XYZ) -- (XXZ);
        \draw[orange] (YYZ) -- (YXZ);
        \draw[red] (XXZ) -- (YXZ);
        \draw[blue] (YXZ) -- (ZIZ);
        \draw[violet] (XXZ) -- (XIY);
        \draw[violet] (YXZ) -- (YIY);
        \draw[violet] (ZIZ) -- (ZXY);
        \draw[red] (XIY) -- (YIY);
        \draw[blue] (YIY) -- (ZXY);
        \draw[orange] (ZXY) -- (ZYY);
        \draw[green] (XIY) -- (XIX);
        \draw[green] (YIY) -- (YIX);
        \draw[green] (ZXY) -- (ZXX);
        \draw[green] (ZYY) -- (ZYX);
        \draw[red] (XIX) -- (YIX);
        \draw[blue] (YIX) -- (ZXX);
        \draw[orange] (ZXX) -- (ZYX);
        \draw[violet] (ZYX) -- (ZZI);
    \end{tikzpicture}
\end{center}

\begin{center}
    \begin{tikzpicture}[node distance={15mm}, thick, main/.style = {circle, draw}]
        \node[main] (IXI) [scale=0.75] {$IXI$};
        
        \node[main] (IYI) [right of = IXI, scale=0.75] {$IYI$};
        
        \node[main] (IZX) [above right of = IYI, scale=0.75]  {$IZX$};
        \node[main] (XZI) [below right of = IYI, scale=0.75] {$XZI$};
        
        \node[main] (IZY) [above right of = IZX, scale=0.75] {$IZY$};
        \node[main] (XYX) [below right of = IZX, scale=0.75] {$XYX$};
        \node[main] (YZI) [below right of = XZI, scale=0.75] {$YZI$};
        
        \node[main] (XYY) [right of = IZY, scale=0.75] {$XYY$};
        \node[main] (XXX) [right of = XYX, scale=0.75] {$XXX$};
        \node[main] (YYX) [right of = YZI, scale=0.75] {$YYX$};
        
        \node[main] (XXY) [right of = XYY, scale=0.75] {$XXY$};
        \node[main] (YYY) [right of = XXX, scale=0.75] {$YYY$};
        \node[main] (YXX) [right of = YYX, scale=0.75] {$YXX$};
        
        \node[main] (XIZ) [right of = XXY, scale=0.75] {$XIZ$};
        \node[main] (YXY) [right of = YYY, scale=0.75] {$YXY$};
        \node[main] (ZIX) [right of = YXX, scale=0.75] {$ZIX$};
        
        \node[main] (YIZ) [above right of = YXY, scale=0.75] {$YIZ$};
        \node[main] (ZIY) [below right of = YXY, scale=0.75] {$ZIY$};
        
        \node[main] (ZXZ) [below right of = YIZ, scale=0.75] {$ZXZ$};
        
        \node[main] (ZYZ) [right of = ZXZ, scale=0.75] {$ZYZ$};
        
        \draw[orange] (IXI) -- (IYI);
        \draw[violet] (IYI) -- (IZX);
        \draw[blue] (IYI) -- (XZI);
        \draw[blue] (IZX) -- (XYX);
        \draw[green] (IZX) -- (IZY);
        \draw[violet] (XZI) -- (XYX);
        \draw[red] (XZI) -- (YZI);
        \draw[blue] (IZY) -- (XYY);
        \draw[green] (XYX) -- (XYY);
        \draw[orange] (XYX) -- (XXX);
        \draw[red] (XYX) -- (YYX);
        \draw[violet] (YZI) -- (YYX);
        \draw[red] (XXX) -- (YXX);
        \draw[green] (XXX) -- (XXY);
        \draw[orange] (XYY) -- (XXY);
        \draw[red] (XYY) -- (YYY);
        \draw[orange] (YYX) -- (YXX);
        \draw[green] (YYX) -- (YYY);
        \draw[red] (XXY) -- (YXY);
        \draw[violet] (XXY) -- (XIZ);
        \draw[orange] (YYY) -- (YXY);
        \draw[green] (YXX) -- (YXY);
        \draw[blue] (YXX) -- (ZIX);
        \draw[red] (XIZ) -- (YIZ);
        \draw[violet] (YXY) -- (YIZ);
        \draw[blue] (YXY) -- (ZIY);
        \draw[green] (ZIX) -- (ZIY);
        \draw[blue] (YIZ) -- (ZXZ);
        \draw[violet] (ZIY) -- (ZXZ);
        \draw[orange] (ZXZ) -- (ZYZ);
    \end{tikzpicture}
\end{center}

In particular, notice that the commutator graph has $7$ connected components, with sizes $1, 6, 15, 20, 15, 6, 1$. One can observe that the connected components can be indexed from $0$ to $6$, and that component $\kappa$ contains all the $\binom{6}{\kappa}$ Pauli strings that are the product of exactly $\kappa$ distinct Majorana operators. These correspond precisely to the $G$-modules.
\end{supexample}

We are now ready to state the key theorem, which provides a basis for the quadratic symmetries of an arbitrary Pauli-string DLA.

\begin{suptheorem}
\label{sup-thm:pauli-str-dla-quad-sym}
Consider a Pauli-string DLA with generating set $\GC$, Pauli-string basis of linear symmetries $\{L_j\}_{j=1}^{J}$, and commutator graph with connected components $\{C_\kappa\}_{\kappa=1}^{K}$. Then the space of quadratic symmetries of $\GC$ has dimension $JK$ and has a basis given by the operators
\begin{equation}\label{sup-eq:quadratic-Pauli-symmetries}
    Q_\kappa^j = \sum_{S\in C_\kappa}S\otimes L_jS
\end{equation}
for $1\le j\le J$ and $1\le \kappa\le K$\,.
\end{suptheorem}

To prove Supplemental Theorem~\ref{sup-thm:pauli-str-dla-quad-sym}, we first show that the $Q_\kappa^j$ operators are indeed quadratic symmetries.

\begin{suplemma}
\label{sup-lem:quad-sym-verification}
Each $Q_\kappa^j$ is a quadratic symmetry of $\GC$. In other words, $Q_\kappa^j$ commutes with $H_l\otimes \id + \id\otimes H_l$ for every generating Pauli string $H_l\in\GC$.
\end{suplemma}

\begin{proof}
Let $S$ be some $n$-qubit Pauli string. If $S$ commutes with $H_l$, then $S\otimes L_jS$ commutes with both $H_l\otimes \id$ and $\id\otimes H_l$. If $S$ anti-commutes with $H_l$, then let $\phi(S)$ denote the Pauli string such that $[H_l,S] = \pm 2i\phi(S)$. Then by Supplemental Lemma~\ref{sup-lem:pauli-str-commutator-bidirectional}, $[H_l,\phi(S)] = \mp 2iS$, so
\begin{equation} \begin{split} \label{sup-eq:hl-comm}
    [H_l\otimes \id+ \id\otimes H_l, S\otimes L_jS + \phi(S)\otimes L_j\phi(S)] 
    = \,\, & \pm 2i\, \phi(S)\otimes L_jS \mp 2i\, S\otimes L_j\phi(S) \\
    & \pm 2i\, S\otimes L_j\phi(S) \mp 2i\, \phi(S)\otimes L_jS \\
    = \,\, & 0\,. \end{split}
\end{equation}

Now, for any generator $H_l$, we can split $Q_\kappa^j$ as
\begin{equation} \label{sup-eq:Q-split}
    Q_\kappa^j = \sum_{\substack{S\in C_\kappa\\ [S,H_l]=0}} S\otimes L_jS + \sum_{\substack{S\in C_\kappa\\ \{S,H_l\}=0}} S\otimes L_jS\,.
\end{equation}
The first sum in~\eqref{sup-eq:Q-split} obviously commutes with $H_l\otimes \id+ \id\otimes H_l$, while the second sum contains both $S$ and $\phi(S)$ for every anti-commuting $S$ (this follows from Supplemental Lemma \ref{sup-lem:pauli-str-commutator-bidirectional} and Supplemental Definition~\ref{sup-def:commutator-graph}). Hence, using~\eqref{sup-eq:hl-comm} we have that $Q_\kappa^j$ commutes with $H_l\otimes \id + \id\otimes H_l$ for any generator $H_l\in\GC$.
\end{proof}

We will now prove that there are no other quadratic symmetries. To do this, let us assume assume that the general operator $Q=\sum_{S,S'} \tilde{c}(S,S') \, S\otimes S'$  (with $\tilde{c}(S,S')$ some complex coefficients) is a quadratic symmetry. We will prove that $Q$ necessarily lies in the span of the $Q^j_\kappa$ operators. To accomplish this, it suffices to prove the following:
\begin{enumerate}[i)]
    \item If a $2n$-qubit Pauli string $S\otimes S'$ is such that $\tilde{c}(S,S')\neq 0$ for a quadratic symmetry $Q$, then $S'\propto L_jS$ for some linear symmetry $L_j$. This allows us to write $\tilde{c}(S,S') \, S\otimes S' = c(S,j)  \,S\otimes L_j S$.
    \item If $S_1$ and $S_2$ are in the same component $C_\kappa$ of the commutator graph of $\GC$, then $c(S_1,j) = c(S_2,j)$.
\end{enumerate}
We begin by proving the first of these items, which establishes which $2n$-qubit Pauli strings are allowed to appear in the quadratic symmetries at all.

\begin{suplemma}
\label{sup-lem:quad-sym-zero-coeffs}
Suppose that $Q=\sum_{S,S'} \tilde{c}(S,S')\, S\otimes S'$ is a quadratic symmetry of $\GC$, and that $\tilde{c}(S,S')\neq 0$ for $n$-qubit Pauli strings $S$ and $S'$. Then $S'\propto L_j S$ for some linear symmetry $L_j$.
\end{suplemma}

\begin{proof}
Suppose that $\tilde{c}(S,S')\neq 0$ for some $S$, and suppose for the sake of contradiction that there exists a generator $H_l$ that commutes with $S$ but anti-commutes with $S'$. Then,
\begin{equation}
    [H_l\otimes \id + \id\otimes H_l, Q] = 2\tilde{c}(S,S')\,S\otimes H_l S' + \text{(other Pauli-string terms)}\,.
\end{equation}
In order for the above commutator to be zero, an additional term proportional to $S\otimes H_l S'$ must appear to cancel the non-zero contribution $2\tilde{c}(S,S')\,S\otimes H_l S'$. The only possibility is that $Q$ contains a term proportional to $H_l S\otimes H_l S'$ (so that multiplication by $H_l\otimes \id$ leads to $S\otimes H_l S'$). But $H_lS\otimes H_l S'$ commutes with $H_l\otimes \id$, so it results in a null contribution to the commutator.
Hence the above commutator is non-zero, which contradicts $Q$ being a quadratic symmetry. An analogous reasoning produces a contradiction if there were a generator $H_l$ that anti-commutes with $S$ but commutes with $S'$.

We conclude that every generator $H_l$ either commutes with both $S$ and $S'$ or anti-commutes with both $S$ and $S'$. But this means that the product $S'S$ commutes with every generator $H_l$ and thus must be proportional to a linear symmetry $L_j$ (where $L_j$ is itself a Pauli string). We can rearrange this to obtain $S'\propto L_j S$, as desired.
\end{proof}

We now proceed to prove the second item. From the previous result lemma, we can rewrite  $Q=\sum_{S,S'} \tilde{c}(S,S')\, S\otimes S'=\sum_{S,j} \tilde{c}(S,j)\, S\otimes L_j S$, where we defined $c(S,j)$ such that $\tilde{c}(S,S') \, S\otimes S' = c(S,j)  \,S\otimes L_j S$. We begin by showing a restriction on the coefficients $c(S,j)$ for Pauli strings that are adjacent in the commutator graph.

\begin{suplemma}
\label{sup-lem:quad-sym-adjacent-equal-coeffs}
Suppose that $S_1$ and $S_2$ are adjacent Pauli strings in the commutator graph of $\GC$, and suppose that $Q=\sum_{S,S'} \tilde{c}(S,S')\, S\otimes S'$ is a quadratic symmetry of $\GC$. We write $\tilde{c}(S,S') \, S\otimes S' = c(S,j)  \,S\otimes L_j S$. Then, $c(S_1,j) = c(S_2,j)$.
\end{suplemma}

\begin{proof}
Since $S_1$ and $S_2$ are adjacent Pauli strings in the commutator graph of $\GC$, then there is a Pauli-string generator $H_l\in\GC$ such that $[H_l,S_1] = \pm 2iS_2$. Then Supplemental Lemma~\ref{sup-lem:pauli-str-commutator-bidirectional} implies that $[H_l,S_2] = \mp 2i S_1$. We now write $Q = c(S_1,j)\, S_1\otimes L_jS_1+ c(S_2,j)\,S_2\otimes L_jS_2 + \text{(other terms)}$, and we wish to show that $c(S_1,j) = c(S_2,j)$. We can directly compute
\begin{equation}
    [H_l\otimes \id + \id\otimes H_l, Q] = \pm 2i(c(S_1,j) - c(S_2,j))S_2\otimes L_jS_1 \pm 2i(c(S_1,j) - c(S_2,j))S_1\otimes L_jS_2 + \text{(other Pauli-string terms)}.
\end{equation}
Let us focus on the $S_2\otimes L_j S_1$ term in the above expression (the argument can be carried out in the same way for the $S_1\otimes L_jS_2$ term). In particular, $S_2\otimes L_jS_1$ can appear in the above expression in exactly two ways: from the $S_1\otimes L_jS_1$ term in $Q$ (after multiplication by $H_l\otimes I$), and from the $S_2\otimes L_jS_2$ term in $Q$ (after multiplication by $I\otimes H_l$). Thus it is not possible for the above expression to be zero unless $c(S_1,j) = c(S_2,j)$.
\end{proof}

We can now extend Supplemental Lemma \ref{sup-lem:quad-sym-adjacent-equal-coeffs} to relate coefficients for Pauli strings that are in the same connected component, which finalizes the proof of the second item.

\begin{suplemma}
\label{sup-lem:quad-sym-component-equal-coeffs}
Suppose that $S_1$ and $S_2$ are Pauli strings in the same connected component of the commutator graph of $\GC$ (which we denote with index $\kappa$), and suppose that $Q=\sum_{S,S'} \tilde{c}(S,S')\, S\otimes S'$ is a quadratic symmetry of $\GC$. We write $\tilde{c}(S,S') \, S\otimes S' = c(S,j)  \,S\otimes L_j S$. Then, $c(S_1,j) = c(S_2,j)=c(\kappa,j)$.
\end{suplemma}

\begin{proof}
Since $S_1$ and $S_2$ are in the same connected component, there exists a path of length $L$ connecting them in the commutator graph: $S_1 P_1 P_2\dots P_{L-1} S_2$ (with $P_0 = S_1$ and $P_L = S_2$). Now we just repeatedly apply Supplemental Lemma~\ref{sup-lem:quad-sym-adjacent-equal-coeffs}. For $P_l$ a Pauli string in such path and $0\le l\le L-1$, the coefficient of $P_l\otimes L_j P_l$ equals the coefficient of $P_{l+1}\otimes L_j P_{l+1}$. Together these imply that the coefficient of $S_1\otimes L_j S_1$ must equal the coefficient of $S_2\otimes L_j S_2$, i.e. $c(S_1,j)=c(S_2,j)$.
\end{proof}

This completes the proof of Supplemental Theorem~\ref{sup-thm:pauli-str-dla-quad-sym}, as it shows that the following chain of equalities hold:
\begin{align}
    Q&=\sum_{S,S'} \tilde{c}(S,S')\, S\otimes S'\nonumber\\
    &=\sum_{S,j} c(S,j)\, S\otimes L_j S \quad \quad (\text{Supplemental Lemma~\ref{sup-lem:quad-sym-zero-coeffs}})\nonumber\\
    &=\sum_{S\in C_\kappa,j} c(\kappa,j)\, S\otimes L_j S \quad \quad (\text{Supplemental Lemma~\ref{sup-lem:quad-sym-component-equal-coeffs}})\nonumber\\
    &=\sum_{\kappa,j} c(\kappa,j)\, Q_\kappa^j\,.\nonumber
\end{align}

We conclude this section with a trivial example to show how the theorem works.

\begin{supexample}[Quadratic symmetries]
\label{ex:trivial-quad-sym}
Consider the $1$-dimensional Lie algebra on $n=1$ qubit, $\mf{g} = \text{span}_{\mathbb{R}}\{iX\}$. The linear symmetries are $L_0 = I$ and $L_1 = X$. The commutator graph has the $4$ single-qubit Pauli strings $I,X,Y,Z$. We connect $Y$ and $Z$ with an edge, but $I$ and $X$ are isolated. Thus the connected components are $C_0 = \{I\}$, $C_1 = \{Y,Z\}$, and $C_2 = \{X\}$. We now use Supplemental Theorem~\ref{sup-thm:pauli-str-dla-quad-sym} to write down the quadratic symmetries:
\begin{equation}\begin{split}
    Q^0_0 &= I\otimes I \,,\\
    Q^0_1 &= Y\otimes Y + Z\otimes Z \,,\\
    Q^0_2 &= X\otimes X \,,\\
    Q^1_0 &= I\otimes(XI) = I\otimes X \,,\\
    Q^1_1 &= Y\otimes(XY) + Z\otimes(XZ) = i(Y\otimes Z - Z\otimes Y) \,,\\
    Q^1_2 &= X\otimes(XX) = X\otimes I\,.\end{split}
\end{equation}
It is easy to verify that these $6$ operators indeed form a basis for the quadratic symmetries of $\mf{g}$, which in this case is simply the commutant of $X\otimes\id + \id\otimes X$.
\end{supexample}

\section{Quadratic symmetries of parametrized matchgate circuits}

Equipped with Supplemental Theorem \ref{sup-thm:pauli-str-dla-quad-sym}, we return to the special case of the matchgate circuit's DLA. We first introduce some necessary notation. We denote by $[2n]$ the set of integers $\{1,\dots,2n\}$. Then, we write $\vec{s}\in\binom{[2n]}{\kappa}$ to refer to a subset $\vec{s}\subseteq [2n]$ of size $\kappa$, and $\vec{\bar{s}}\subseteq [2n]$ to refer to its complement. Let $\nu_1,\dots,\nu_\kappa$ (with $\nu_1<\cdots < \nu_\kappa$) be the elements in $\vec{s}$. Then, we write $c^{\vec{s}}$ to indicate the product of $\kappa$ distinct Majoranas $c_{\nu_1}\cdots c_{\nu_\kappa}$. Furthermore, we employ the notation $\pi(\vec{s})=\sum_{j=1}^\kappa \nu_{j}$.
We can now prove the following Supplemental Theorem.

\begin{suptheorem}\label{sup-eq:ff-quadratic}
    A Hermitian orthonormal basis for the quadratic symmetries of the matchgate circuit's DLA in Eq.~\eqref{sup-eq:dla} is given by
    \begin{equation} \label{sup-eq:comm-basis} \begin{split}
        &Q_\kappa^0 = \NC_\kappa \sum_{\vec{s}\in\binom{[2n]}{\kappa}} c^{\vec{s}}\otimes c^{\vec{s}} \,,\\ &Q_\kappa^1 = \NC_\kappa\,  (-i)^n i^{\kappa\; {\rm mod}\;2} \sum_{\vec{s}\in\binom{[2n]}{\kappa}} (-1)^{\pi(\vec{s})}c^{\vec{s}}\otimes c^{\vec{\bar{s}}} \,,\end{split}
    \end{equation}
    for integers $0\le \kappa\le 2n$, and $\NC_\kappa =  \left(d \sqrt{\binom{2n}{\kappa}}\right)^{-1}  $.
\end{suptheorem}

\begin{proof}
    We can leverage Lemma~\ref{sup-lem:inv} to  show that the connected components $C_\kappa$ of the commutator graph of the generators $\GC$ of the matchgate circuit's DLA correspond to the Pauli operators in the $\BC_\kappa$ subspaces introduced in Lemma~\ref{sup-lem:modules}. Indeed, Lemma~\ref{sup-lem:inv} states that any pair of Pauli operators in $\BC_\kappa$ can be transformed into each other via commutation with elements in $\g$. Hence all Pauli operators in $\BC_\kappa$ belong to the same connected component of the commuator graph. Moreover, Eq.~\eqref{sup-eq:comm-algebra} implies that Pauli operators in different $\BC_\kappa$ modules cannot be connected via commutation with elements in $\g$.
    On the other hand, it is well known that the only linear symmetries of the matchgate generators $\GC$  are spanned by $I^{\otimes n}$ and the parity operator $P=Z^{\otimes n}$. Hence, Eq.~\eqref{sup-eq:comm-basis} follows from directly applying Eq.~\eqref{sup-eq:quadratic-Pauli-symmetries} and adding a normalization factor $\NC_\kappa$ and a global factor $(-1)^n i^{\kappa\; {\rm mod}\;2}$ to render $Q_\kappa^1$ Hermitian. Indeed, we have
    \begin{equation}
    \Tr[Q^0_\kappa Q^0_{\kappa'}] = \NC_\kappa^2 \Tr\left[\sum_{\vec{s}\in\binom{[2n]}{\kappa}}\sum_{\vec{s}'\in\binom{[2n]}{\kappa}}  c^{\vec{s}} c^{\vec{s}'} \otimes c^{\vec{s}} c^{\vec{s}'}\right]  = \NC_\kappa^2 \,d^2 \sum_{\vec{s}\in\binom{[2n]}{\kappa}}\sum_{\vec{s}'\in\binom{[2n]}{\kappa}}\delta_{\kappa \kappa'} \delta_{\vec{s} \vec{s}'} = \delta_{\kappa \kappa'} \NC_\kappa^2 \,d^2 \sum_{\vec{s}\in\binom{[2n]}{\kappa}} 1 = \delta_{\kappa \kappa'}\,,
\end{equation}
\begin{equation} \begin{split}
    \Tr[Q^1_\kappa Q^1_{\kappa'}] &=(-1)^{n+(\kappa\;{\rm mod\;}2)} \NC_\kappa^2 \Tr\left[\sum_{\vec{s}\in\binom{[2n]}{\kappa}}\sum_{\vec{s}'\in\binom{[2n]}{\kappa}} (-1)^{\pi(\vec{s})+\pi(\vec{s}')}c^{\vec{s}} c^{\vec{s}'} \otimes c^{\vec{\bar{s}}} c^{\vec{\bar{s}}'}\right]  \\& = (-1)^{n+(\kappa\;{\rm mod\;}2)} \NC_\kappa^2 \,d^2 \sum_{\vec{s}\in\binom{[2n]}{\kappa}}\sum_{\vec{s}'\in\binom{[2n]}{\kappa}} (-1)^{\lfloor \frac{\kappa}{2}\rfloor}(-1)^{\lfloor \frac{2n-\kappa}{2}\rfloor}\delta_{\kappa \kappa'} \delta_{\vec{s} \vec{s}'} \\&= \delta_{\kappa \kappa'} \NC_\kappa^2 \,d^2 \sum_{\vec{s}\in\binom{[2n]}{\kappa}} 1 \\&= \delta_{\kappa \kappa'}\,,\end{split}
\end{equation}
\begin{equation}
    \Tr[Q^0_\kappa Q^1_{\kappa'}]  = (-i)^n i^{\kappa\; {\rm mod}\;2} \NC_\kappa^2 \Tr\left[\sum_{\vec{s}\in\binom{[2n]}{\kappa}}\sum_{\vec{s}'\in\binom{[2n]}{\kappa}} (-1)^{\pi(\vec{s}')} c^{\vec{s}} c^{\vec{s}'} \otimes c^{\vec{s}} c^{\vec{\bar{s}}'}\right] =0 \,,
\end{equation}
where we used Eq.~\eqref{sup-eq:ortho-maj-prod}. Moreover, it holds that
\begin{equation}
    \left(Q^0_\kappa\right)^\dagger = \NC_\kappa \sum_{\vec{s}\in\binom{[2n]}{\kappa}} (c^{\vec{s}})^\dagger\otimes (c^{\vec{s}})^\dagger = \NC_\kappa \sum_{\vec{s}\in\binom{[2n]}{\kappa}} (-1)^{\lfloor \frac{\kappa}{2}\rfloor} c^{\vec{s}}\otimes (-1)^{\lfloor \frac{\kappa}{2}\rfloor} c^{\vec{s}} = Q^0_\kappa\,,
\end{equation}
\begin{equation} \begin{split}
    \left(Q^1_\kappa\right)^\dagger &= i^n(-i)^{(\kappa\; {\rm mod}\;2)}\NC_\kappa \sum_{\vec{s}\in\binom{[2n]}{\kappa}} (-1)^{\pi(\vec{s})}(c^{\vec{s}})^\dagger\otimes (c^{\vec{\bar{s}}})^\dagger \\&= i^n (-i)^{(\kappa\; {\rm mod}\;2)}\NC_\kappa \sum_{\vec{s}\in\binom{[2n]}{\kappa}} (-1)^{\pi(\vec{s})} (-1)^{\lfloor \frac{\kappa}{2}\rfloor}c^{\vec{s}}\otimes (-1)^{\lfloor \frac{2n-\kappa}{2}\rfloor}c^{\vec{\bar{s}}} = Q_\kappa^1\,.\end{split}
\end{equation}
\end{proof}

Supplemental Theorem~\ref{sup-eq:ff-quadratic} provides the quadratic symmetries for parametrized matchgate circuits. As indicated above, this is a crucial result that will allow us to compute the exact loss variance via Weingarten calculus (see Section~\ref{sup-sec:th1}). We find it convenient to also work in another basis of the quadratic symmetries, related to the subspaces of well-defined parity in operator space $\BC$. In order to characterize said subspaces, we begin by proving the following Supplemental Lemma.

\begin{suplemma}\label{lem:sep-modules} For even $\kappa$, the subspaces $\BC_\kappa \oplus \BC_{2n-\kappa}$ can be decomposed as
    \begin{equation}
    \BC_\kappa \oplus \BC_{2n-\kappa}=\BC_{\kappa}^{e}\oplus \BC_{\kappa}^{o}\,,
\end{equation}
where $\BC_{\kappa}^{e}$ and $\BC_{\kappa}^{o}$ respectively define the subspaces of operators of even and odd parity spanned by
\begin{equation}\label{eq:operators-definite}
    \begin{split}
      &B_{\kappa,\vec{s}}^e=\frac{i^{\lfloor \frac{\kappa}{2}\rfloor}}{\sqrt{2d}}\left(c^{\vec{s}} + P\cdot c^{\vec{s}}\right)=\frac{i^{\lfloor \frac{\kappa}{2}\rfloor}}{\sqrt{2d}}\left(c^{\vec{s}} +  (-1)^{\pi(\vec{s})} c^{\bar{\vec{s}}}\right)\,,\\
      &B_{\kappa,\vec{s}}^o=\frac{i^{\lfloor \frac{\kappa}{2}\rfloor}}{\sqrt{2d}}\left(c^{\vec{s}} - P\cdot c^{\vec{s}}\right) =\frac{i^{\lfloor \frac{\kappa}{2}\rfloor}}{\sqrt{2d}}\left(c^{\vec{s}} -  (-1)^{\pi(\vec{s})} c^{\bar{\vec{s}}}\right)\,,
  \end{split}
\end{equation}
for $\vec{s}\in\binom{[2n]}{\kappa}$ and $\kappa\in\{2,4,\dots,K\}$, where $K=n-1$ or $K=n-2$ depending on whether $n$ is odd or even, respectively.
\end{suplemma}

\begin{proof}
    First, let us show that the operators $B_{\kappa,\vec{s}}^e$ and $B_{\kappa,\vec{s}}^o$ are orthonormal (it is straightforward to show that they are Hermitian when $\kappa$ is even). We begin by writing these operators as $B^p _{\kappa,\vec{s}}=\frac{i^{\lfloor \frac{\kappa}{2}\rfloor}}{\sqrt{2d}}\left(c^{\vec{s}} +(-1)^{\lambda_p}P\cdot c^{\vec{s}}\right)$, for $\lambda_p=0,1$ corresponding to the even ($p=e$) and odd ($p=o$) operators, respectively. We have
\begin{align}\label{eq:Borth}
    \Tr[B^{\lambda_p} _{\kappa,\vec{s}}B^{\lambda_{p'}} _{\kappa',\vec{s}'} ]
    &=\frac{\left(i^{\lfloor \frac{\kappa}{2}\rfloor}\right)^2}{2d}\left(\Tr[c^{\vec{s}} c^{\vec{s}'}]+(-1)^{\lambda_p +\lambda_{p'}}\Tr[P c^{\vec{s}} P c^{\vec{s}'}]+((-1)^{\lambda_p}+(-1)^{\lambda_{p'}}) \Tr[c^{\vec{s}} P c^{\vec{s}'}]\right)\nonumber\\
    &=\frac{1}{2d}\left(d\, \delta_{\vec{s}\vec{s}'}\left(1+(-1)^{\lambda_p +\lambda_{p'}}\right)\right)=\delta_{\vec{s}\vec{s}'} \delta_{\lambda_p\lambda_{p'}}\,.
\end{align}
Here we have used the fact that $[P,c^{\vec{s}}]=0$ if $\kappa$ is even, $P^2=\id$, as well as Eq.~\eqref{sup-eq:ortho-maj-prod}.  This implies that $\BC_\kappa^e \cap \BC_\kappa^o = \{0\}$. Also, it is straightforward to show that $\BC_\kappa\oplus\BC_{2n-\kappa} =\BC_\kappa^e + \BC_\kappa^o $, which follows from
\begin{equation}
    c^{\vec{s}} \propto B_{\kappa,\vec{s}}^e +B_{\kappa,\vec{s}}^o \,,\quad\; c^{\bar{\vec{s}}}\propto B_{\kappa,\vec{s}}^e -B_{\kappa,\vec{s}}^o\,,
\end{equation}
which indicates that we can recover a basis for $B_\kappa$ (consisting of all different products of $\kappa$ distinct Majoranas) and a basis for $B_{2n-\kappa}$ (consisting of all different products of $2n-\kappa$ distinct Majoranas) from the $\BC_\kappa^e, \BC_\kappa^o$ operators. Hence, using Supplemental Definition~\ref{sup-def:direct_sum} we find that $\BC_\kappa\oplus\BC_{2n-\kappa} =\BC_\kappa^e \oplus \BC_\kappa^o $.
Finally, it is easy to show that these operators have definite parity as $P B_{\kappa,\vec{s}}^e = B_{\kappa,\vec{s}}^e P = B_{\kappa,\vec{s}}^e$ and $P B_{\kappa,\vec{s}}^o = B_{\kappa,\vec{s}}^o P = -B_{\kappa,\vec{s}}^o$.
\end{proof}

We are now ready to prove the following result, which provides an alternative basis for the quadratic symmetries of parametrized matchgate circuits.
\begin{suplemma}
    The operators 
\begin{equation} \label{sup-eq:Q-parity-basis}\begin{split}
Q^{++}_{\kappa}&:=\MC_\kappa\sum_{\vec{s}\in\binom{[2n]}{\kappa}} \left(c^{\vec{s}}+Pc^{\vec{s}}\right) \otimes \left(c^{\vec{s}}+ P c^{\vec{s}}\right)\,, \\
Q^{+-}_{\kappa}&:=\MC_\kappa\sum_{\vec{s}\in\binom{[2n]}{\kappa}} \left(c^{\vec{s}}+Pc^{\vec{s}}\right) \otimes \left(c^{\vec{s}}- Pc^{\vec{s}}\right)\,, \\ 
Q^{-+}_{\kappa}&:=\MC_\kappa\sum_{\vec{s}\in\binom{[2n]}{\kappa}} \left(c^{\vec{s}}-Pc^{\vec{s}}\right) \otimes \left(c^{\vec{s}}+ Pc^{\vec{s}}\right)\,, \\
Q^{--}_{\kappa}&:=\MC_\kappa\sum_{\vec{s}\in\binom{[2n]}{\kappa}} \left(c^{\vec{s}}-Pc^{\vec{s}}\right) \otimes \left(c^{\vec{s}}- Pc^{\vec{s}}\right)\,,\end{split}
\end{equation}
where $\kappa\in\{0,1,\dots,n\}$ and $\MC_\kappa=\NC_\kappa / 2^{1+\delta_{\kappa n}/2}$, are an orthonormal basis of the quadratic symmetries of the DLA in Eq.~\eqref{sup-eq:dla}.
\end{suplemma}
\begin{proof}

Let us first rewrite the operators in~\eqref{sup-eq:Q-parity-basis} as 
\begin{equation}
    Q_\kappa^{\lambda\gamma} =\MC_\kappa\sum_{\vec{s}\in\binom{[2n]}{\kappa}}  \left(c^{\vec{s}}+(-1)^\lambda Pc^{\vec{s}}\right) \otimes \left(c^{\vec{s}}+ (-1)^\gamma Pc^{\vec{s}}\right)\,.
\end{equation}
We now prove that these operators 
are orthonormal. Indeed, it follows that when $\kappa\neq n$, we have
\begin{equation}  \label{sup-eq:parity-basis-ortho}\begin{split}
        \Tr[(Q_\kappa^{\lambda\gamma})^\dag Q_{\kappa'}^{\lambda'\gamma'}] &=\MC_\kappa^2 \Tr\left[\sum_{\vec{s}\in\binom{[2n]}{\kappa}} \sum_{\vec{s}'\in\binom{[2n]}{\kappa}}   \left((1+(-1)^{\lambda +\lambda'}) c^{\vec{s}} c^{\vec{s}'}  +( (-1)^\lambda + (-1)^{\lambda'} )c^{\vec{s}}P c^{\vec{s}'} \right)\right. \\ &\qquad \qquad \qquad \qquad\qquad\; \otimes \left.\left((1+(-1)^{\gamma +\gamma'})  c^{\vec{s}} c^{\vec{s}'}   + ( (-1)^\gamma + (-1)^{\gamma'} ) c^{\vec{s}}P c^{\vec{s}'} \right)\right] \\ &= \MC_\kappa^2 \,d^2 \sum_{\vec{s}\in\binom{[2n]}{\kappa}} \sum_{\vec{s}'\in\binom{[2n]}{\kappa}} 4\,\delta_{\kappa \kappa'} \delta_{\vec{s}\vec{s}'} \delta_{\lambda \lambda'} \delta_{\gamma \gamma'} = \delta_{\kappa \kappa'} \delta_{\lambda \lambda'} \delta_{\gamma \gamma'} \NC_\kappa^2 \,d^2 \sum_{\vec{s}\in\binom{[2n]}{\kappa}}  1 = \delta_{\kappa \kappa'} \delta_{\lambda \lambda'} \delta_{\gamma \gamma'} \,. \end{split}
    \end{equation}
Furthermore, when $\kappa= \kappa'=n$ there is an additional non-zero contribution in~\eqref{sup-eq:parity-basis-ortho}, namely
\begin{equation}
   \MC_\kappa^2\Tr\left[\sum_{\vec{s}\in\binom{[2n]}{\kappa}}  ( (-1)^\lambda + (-1)^{\lambda'} )c^{\vec{s}}P c^{\vec{\bar{s}}}\otimes ( (-1)^\gamma + (-1)^{\gamma'} ) c^{\vec{s}}P c^{\vec{\bar{s}}}\right] = \NC_\kappa^2 d^2 \delta_{\lambda \lambda'} \delta_{\gamma \gamma'} (-1)^{\lambda+\gamma}\sum_{\vec{s}\in\binom{[2n]}{\kappa}}  1 =\frac{(-1)^{\lambda+\gamma}}{2}\delta_{\lambda \lambda'} \delta_{\gamma \gamma'}\,.\nonumber
\end{equation}
This completes the proof of the orthonormality of the $Q_\kappa^{\lambda \gamma}$ operators. We note that it also implies that the operators $Q_n^{+-}$ and $Q_n^{-+}$ are equal to the zero operator.

Next, we prove that the $Q_\kappa^{\lambda \gamma}$ are a basis of the commutant of $G^{\otimes 2}$. We have
    \begin{equation} \label{sup-eq:change-of-basis} \begin{split}
        Q_\kappa^0 &= 2^{\delta_{\kappa n}/2}\, \frac{Q_\kappa^{++} + Q_\kappa^{+-} + Q_\kappa^{-+} + Q_\kappa^{--}}{2} \,, \\ Q_{2n-\kappa}^0 &=2^{\delta_{\kappa n}/2}\, \frac{Q_\kappa^{++} - Q_\kappa^{+-} - Q_\kappa^{-+} + Q_\kappa^{--}}{2}\,,  \\ Q_\kappa^1 &= i^{\kappa\; {\rm mod}\;2} \,2^{\delta_{\kappa n}/2}\, \frac{Q_\kappa^{++} - Q_\kappa^{+-} + Q_\kappa^{-+} - Q_\kappa^{--}}{2}\,, \\ Q_{2n-\kappa}^1 &= (-1)^n\, i^{\kappa\; {\rm mod}\;2}\, 2^{\delta_{\kappa n}/2}\,\frac{Q_\kappa^{++} + Q_\kappa^{+-} - Q_\kappa^{-+} - Q_\kappa^{--}}{2}\,. \end{split}
    \end{equation}
That is, we can obtain the $Q_\kappa^0$, $Q_\kappa^1$ operators in Eq.~\eqref{sup-eq:comm-basis} as linear combinations of the $Q_\kappa^{\lambda \gamma}$. Since both the former and the latter sets are orthonormal sets, and they both contain the same number of operators, it immediately follows that their span is identical, i.e., they are both bases for the commutant of $G^{\otimes 2}$.

\end{proof}

\section{Proof of the general Variance expression (Theorem 1)} \label{sup-sec:th1}
In this section, we provide a proof for our main result, that is, Theorem~\ref{theo:main}.
Let us first recall a few basic concepts.
Variational quantum computing schemes employ parameterized quantum circuits and aim at minimizing a loss function of the form
\begin{equation}\label{eq:loss-SI}
    \ell_{\thv}(\rho,O)=\Tr[U(\thv)\rho U\ad(\thv) O]\,,
\end{equation}
where $\rho$ is a quantum state acting on an $n$-qubit  Hilbert space $\HC=(\mathbb{C})^{\otimes 2}$, and $O$ is a Hermitian observable (with $\norm{O}_2^2\leq 2^n)$. As discussed in the main text, in order to study the presence of barren plateaus, we need to be able to compute the variance
\begin{equation}
    \Var_{\thv}[\ell_{\thv}(\rho,O)]=\mathbb{E}_{\thv}[\ell_{\thv}(\rho,O)^2]-\mathbb{E}_{\thv}[\ell_{\thv}(\rho,O)]^2\,.
\end{equation}
In this section we use the quadratic symmetries (i.e. the orthonormal bases of ${\rm comm}(G^{\otimes 2})$) obtained above to compute the variance via the Weingarten calculus. Since we have obtained two different bases for the quadratic symmetries (namely~\eqref{sup-eq:comm-basis} and~\eqref{sup-eq:Q-parity-basis}, which will  give us different insights),  we will  present two separated but equivalent expressions for the variance. First, we recover the result in the main text, and then we present the alternative version.

 \subsection{Variance expression, proof of Theorem 1}
 Let us start with the derivation of Theorem~\ref{theo:main}, which we restate for  convenience  here.
\setcounter{theorem}{0}

\begin{theorem}\label{sup-th:theorem1}
The loss function in Eq.~\eqref{eq:loss-SI} has mean
\begin{align}
    \mathbb{E}_{\thv}[\ell_{\thv}(\rho,O)]&\!=\sum_{\kappa=0,2n}\!\langle\rho_\kappa,O_\kappa\rangle_{\id+P}\,,
\end{align}
and variance
\small
\begin{equation} \label{sup-eq:theo-1}
\begin{split}
    \Var_{\thv}[\ell_{\thv}(\rho,O)]=   \sum_{\kappa=1}^{2n-1} &\frac{\PC_\kappa(\rho)\PC_\kappa(O)+\CC_\kappa(\rho)\CC_\kappa(O)}{\dim(\BC_\kappa)}\,.
\end{split}
\end{equation}
\normalsize
Here we defined the $\kappa$-purity of an operator $M\in\BC$ as $\PC_\kappa(M)=\langle M_\kappa,M_\kappa\rangle_{\id}$, and its $\kappa$-coherence  as $\CC_\kappa(M)=i^{\kappa \,{\rm mod\, 2}}\langle M_\kappa,M_{2n-\kappa}\rangle_{P}$. Moreover, for $M_1,M_2,\Gamma\in\BC$, we have $\langle M_1,M_2\rangle_{\Gamma}=\Tr[\Gamma M_1\ad M_2]$.
\end{theorem}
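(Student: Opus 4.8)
The plan is to obtain both moments of $\ell_{\thv}$ by Weingarten calculus. Under the $2$-design hypothesis, $\mathbb{E}_{\thv}$ equals the normalized Haar integral over $G=e^{\g}$, so $\mathcal{E}^{(t)}(A)=\int_G dU\,U^{\otimes t}A\,(U\ad)^{\otimes t}$ is the orthogonal projector onto $\mathrm{comm}(G^{\otimes t})$ and admits the expansion of Eq.~\eqref{eq:weingarten}, $\mathcal{E}^{(t)}(A)=\sum_\eta\Tr[B^{(t)}_\eta A]\,B^{(t)}_\eta$ for any Hermitian orthonormal basis $\{B^{(t)}_\eta\}$ of the commutant. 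Writing $\ell_{\thv}=\Tr[U\rho U\ad O]$ and $\ell_{\thv}^2=\Tr[U^{\otimes 2}\rho^{\otimes 2}(U\ad)^{\otimes 2}O^{\otimes 2}]$, this reduces the computation to $\mathbb{E}_{\thv}[\ell_{\thv}]=\Tr[\mathcal{E}^{(1)}(\rho)\,O]$ and $\mathbb{E}_{\thv}[\ell_{\thv}^2]=\sum_\eta\Tr[B^{(2)}_\eta\rho^{\otimes 2}]\,\Tr[B^{(2)}_\eta O^{\otimes 2}]$.

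For the mean I would use $\mathrm{comm}(G)=\mathrm{span}_{\mathbb{C}}\{\id,P\}$, with Hermitian orthonormal basis $\{\id/\sqrt{d},P/\sqrt{d}\}$; then $\mathcal{E}^{(1)}(\rho)=\rho_0+\rho_{2n}$, the projection of $\rho$ onto the only $G$-commuting modules $\BC_0\oplus\BC_{2n}$. Hilbert--Schmidt orthogonality of the modules gives $\Tr[(\rho_0+\rho_{2n})O]=\Tr[\rho_0O_0]+\Tr[\rho_{2n}O_{2n}]$, and since $\rho_0O_0,\rho_{2n}O_{2n}\propto\id$ while $\Tr[P]=0$, each term coincides with the corresponding $\langle\cdot,\cdot\rangle_{\id+P}$ pairing, yielding the stated mean.

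For the variance I would substitute the Hermitian orthonormal basis $\{Q^0_\kappa,Q^1_\kappa\}_{\kappa=0}^{2n}$ of $\mathrm{comm}(G^{\otimes 2})$ from Supplemental Theorem~\ref{sup-eq:ff-quadratic} (Eq.~\eqref{sup-eq:comm-basis}). The crux is to rewrite $\Tr[Q^0_\kappa M^{\otimes 2}]$ and $\Tr[Q^1_\kappa M^{\otimes 2}]$, for $M\in\{\rho,O\}$, in terms of the Majorana coefficients $\Tr[c^{\vec{s}}M]$. With $B_{\kappa,\vec{s}}=i^{\lfloor\kappa/2\rfloor}c^{\vec{s}}/\sqrt{d}$ the Hermitian orthonormal basis element of $\BC_\kappa$ (so that $\PC_\kappa(M)=\sum_{\vec{s}}\Tr[B_{\kappa,\vec{s}}M]^2$), the orthogonality relation Eq.~\eqref{sup-eq:ortho-maj-prod} gives $\Tr[Q^0_\kappa M^{\otimes 2}]=\NC_\kappa\sum_{\vec{s}}\Tr[c^{\vec{s}}M]^2=(-1)^{\lfloor\kappa/2\rfloor}\binom{2n}{\kappa}^{-1/2}\PC_\kappa(M)$. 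Likewise, using $P\,c^{\vec{s}}\propto(-1)^{\pi(\vec{s})}c^{\bar{\vec{s}}}$ (from the proof of Supplemental Lemma~\ref{lem:sep-modules}) and $\CC_\kappa(M)=i^{\kappa\,\bmod\,2}\langle M_\kappa,M_{2n-\kappa}\rangle_P$, one gets $\Tr[Q^1_\kappa M^{\otimes 2}]=\pm\binom{2n}{\kappa}^{-1/2}\CC_\kappa(M)$, where the phase factors $(-i)^n$ and $i^{\kappa\,\bmod\,2}$ built into $Q^1_\kappa$, together with the $i^{\kappa\,\bmod\,2}$ in $\CC_\kappa$, are exactly what render both sides real and make the overall sign immaterial in the product. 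Hence $\Tr[Q^0_\kappa\rho^{\otimes 2}]\Tr[Q^0_\kappa O^{\otimes 2}]=\PC_\kappa(\rho)\PC_\kappa(O)/\dim(\BC_\kappa)$ and $\Tr[Q^1_\kappa\rho^{\otimes 2}]\Tr[Q^1_\kappa O^{\otimes 2}]=\CC_\kappa(\rho)\CC_\kappa(O)/\dim(\BC_\kappa)$, so $\mathbb{E}_{\thv}[\ell_{\thv}^2]=\sum_{\kappa=0}^{2n}\big(\PC_\kappa(\rho)\PC_\kappa(O)+\CC_\kappa(\rho)\CC_\kappa(O)\big)/\dim(\BC_\kappa)$.

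To conclude, I would isolate $\kappa\in\{0,2n\}$, where $\dim(\BC_\kappa)=1$, $\BC_0=\mathrm{span}\,\id$, $\BC_{2n}=\mathrm{span}\,P$. Evaluating those four terms explicitly in terms of $\Tr\rho,\Tr O,\Tr[P\rho],\Tr[PO]$ (the $Q^1_0$ and $Q^1_{2n}$ pieces supplying the cross term) reorganizes them into $\big(\langle\rho_0,O_0\rangle_{\id+P}+\langle\rho_{2n},O_{2n}\rangle_{\id+P}\big)^2=\mathbb{E}_{\thv}[\ell_{\thv}]^2$, so subtracting $\mathbb{E}_{\thv}[\ell_{\thv}]^2$ annihilates precisely the $\kappa\in\{0,2n\}$ contributions and leaves the sum over $\kappa=1,\dots,2n-1$ of Eq.~\eqref{sup-eq:theo-1}. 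The main obstacle I anticipate is the phase bookkeeping in the $Q^1_\kappa$ traces---nailing the exact power of $i$ in $P\,c^{\vec{s}}\propto i^{?}(-1)^{\pi(\vec{s})}c^{\bar{\vec{s}}}$ for both parities of $\kappa$ and verifying that $\CC_\kappa(M)$ is real---with everything else following mechanically from the commutant bases established in the earlier supplemental results.
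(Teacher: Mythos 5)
Your proposal is correct and follows essentially the same route as the paper's proof: Weingarten projection onto $\mathrm{comm}(G)=\mathrm{span}\{\id,P\}$ for the mean, expansion of the second moment in the $\{Q^0_\kappa,Q^1_\kappa\}$ basis with the overlaps $\Tr[Q^0_\kappa M^{\otimes 2}]=(-1)^{\lfloor\kappa/2\rfloor}\PC_\kappa(M)/\sqrt{\dim(\BC_\kappa)}$ and $\Tr[Q^1_\kappa M^{\otimes 2}]=(-1)^{\lfloor\kappa/2\rfloor}\CC_\kappa(M)/\sqrt{\dim(\BC_\kappa)}$ (obtained in the paper via $Q^1_\kappa=i^{\kappa\bmod 2}(\id\otimes P)Q^0_\kappa$, which settles the phase bookkeeping you flag), and cancellation of the $\kappa\in\{0,2n\}$ terms against $\mathbb{E}_{\thv}[\ell_{\thv}]^2$. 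No gaps.
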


To prove Theorem \ref{sup-th:theorem1} we are going to introduce several partial results and Supplemental Lemmas.
Let us begin by computing the expectation value of the loss function.
    In this case, knowing that the linear symmetries are spanned by $\id$ and $P$, we can use the Weingarten calculus (see Eq.~\eqref{eq:weingarten}) to find 
\begin{equation}\label{eq:exp-proof}
    \mathbb{E}_{\thv}[\ell_{\thv}(\rho,O)]=\Tr[\rho\frac{\id}{\sqrt{d}}]\Tr[O \frac{\id}{\sqrt{d}}]+\Tr[\rho \frac{P}{\sqrt{d}}]\Tr[O \frac{P}{\sqrt{d}}]=\sum_{\kappa=0,2n} \langle \rho_\kappa, O_\kappa\rangle_{\id+P}\,,
\end{equation}
 where  in the second equality we used that $O_0=\Tr[O \id]\frac{\id}{d}$, $O_{2n}=\Tr[O P]\frac{P}{d}$, and analogously for $\rho$.

 In order to determine the expression for the variance, we first present two useful Supplemental Lemmas.  
\begin{suplemma}
The mean value squared of the loss function~\eqref{eq:loss-SI} takes the form
    \begin{equation} \label{sup-eq:lemma8}
    \mathbb{E}_{\thv}[\ell_{\thv}(\rho,O)]^2= \mathcal{P}_0(\rho)\mathcal{P}_0(O)+\mathcal{C}_0(\rho)\mathcal{C}_0(O)+\mathcal{P}_{2n}(\rho)\mathcal{P}_{2n}(O)+\mathcal{C}_{2n}(\rho)\mathcal{C}_{2n}(O)\,.
\end{equation}
\end{suplemma}
\begin{proof}
From Eq.~\eqref{eq:exp-proof} we have
\begin{equation}
 \mathbb{E}_{\thv}[\ell_{\thv}(\rho,O)]^2=   \frac{(\Tr[\rho\id])^2(\Tr[O \id])^2}{d^2}+\frac{(\Tr[\rho P])^2(\Tr[O P])^2}{d^2}+2 \frac{\Tr[\rho\id]\Tr[\rho P]\Tr[O \id]\Tr[O P]}{d^2}\,.
\end{equation}
Notably, we can relate each term to either generalized purities or coherences. Notice that in fact $\mathcal{P}_0(O)=\Tr[O_0^2]=\Tr[\left(\Tr[O\id]\frac{\id}{d}\right)^2]=\frac{\left(\Tr[O\id]\right)^2}{d}$, with a similar expression for $\kappa=2n$ but replacing $\id$ with $P$ instead. On the other hand, $$\mathcal{C}_{0}(O)=\Tr[P O_{2n}O_0]=\Tr[P\frac{P}{d}\frac{\id}{d}]\Tr[O\id]\Tr[OP]=\frac{\Tr[O\id]\Tr[OP]}{d}=\mathcal{C}_{2n}(O)\,.$$ 
Putting all together one recovers Eq.~\eqref{sup-eq:lemma8}.
\end{proof}

Additional results are needed to evaluate $\mathbb{E}_{\thv}[\ell_{\thv}(\rho,O)^2]$. This motivates the following Supplemental Lemma.

\begin{suplemma}
   The Hilbert-Schmidt inner product between $O^{\otimes 2}$ and the quadratic symmetry  $Q^\lambda_\kappa$ in Eq.~\eqref{sup-eq:comm-basis} can be rewritten as
    \begin{equation}
        \Tr[O^{\otimes 2}Q_\kappa^0]= \frac{(-1)^{\lfloor \frac{\kappa}{2}\rfloor}}{\sqrt{\dim(\BC_\kappa})}  \mathcal{P}_\kappa(O)\,,\;\;\;\;\; \Tr[O^{\otimes 2}Q_\kappa^1]= \frac{(-1)^{\lfloor \frac{\kappa}{2}\rfloor}}{\sqrt{\dim(\BC_\kappa})}  \mathcal{C}_\kappa(O)\,.
    \end{equation}
\end{suplemma}

\begin{proof}
Notice first that the projection of $O$ onto a module $\kappa$ has the form $ O_\kappa=i^{\lfloor \frac{\kappa}{2}\rfloor}\sum_{\nu_1<\dots <\nu_\kappa} O_{\nu_1\dots \nu_\kappa}\frac{c_{\nu_1}\cdots c_{\nu_\kappa}}{\sqrt{d}}$ with $O_{\nu_1\dots \nu_\kappa}\in \mathbb{R}$ an anti-symmetric tensor under the permutation of its indices. This means that
   $\Tr [O_\kappa^{\otimes 2} Q^\lambda_{\kappa'}]=0$ for $\kappa'\neq \kappa$ as a consequence of the orthogonality condition between different modules. This allows us to write 
   \begin{equation} \label{sup-eq:tracec_O2_Q0} \begin{split}
    \Tr[O^{\otimes 2} Q^0_{\kappa}] &=(-1)^{\lfloor \frac{\kappa}{2}\rfloor}\mathcal{N}_\kappa \sum_{\substack{\nu_1< \dots< \nu_\kappa\\ \nu'_1< \dots< \nu'_\kappa}} O_{\nu_1\dots \nu_\kappa}  O_{\nu'_1\dots \nu'_\kappa} \sum_{\vec{s}\in\binom{[2n]}{\kappa}} \Tr[\frac{c_{\nu_1}\cdots c_{\nu_\kappa} \,c^{\vec{s}}}{\sqrt{d}}]  \Tr[\frac{c_{\nu'_1}\cdots c_{\nu'_\kappa} \,c^{\vec{s}}}{\sqrt{d}}] \\ &=(-1)^{\lfloor \frac{\kappa}{2}\rfloor}\mathcal{N}_\kappa \sum_{\substack{\nu_1< \dots< \nu_\kappa\\ \nu'_1< \dots< \nu'_\kappa}} O_{\nu_1\dots \nu_\kappa}  O_{\nu'_1\dots \nu'_\kappa} \sum_{\vec{s}\in\binom{[2n]}{\kappa}} d \,\delta_{\nu_1, \dots, \nu_\kappa; \vec{s}} \delta_{\nu'_1, \dots, \nu'_\kappa; \vec{s}} \\ &= (-1)^{\lfloor \frac{\kappa}{2}\rfloor} d\, \mathcal{N}_\kappa \sum_{\nu_1< \dots< \nu_\kappa} O_{\nu_1\dots \nu_\kappa}^2 = (-1)^{\lfloor \frac{\kappa}{2}\rfloor} d\, \mathcal{N}_\kappa \mathcal{P}_\kappa(O)=\frac{(-1)^{\lfloor \frac{\kappa}{2}\rfloor}}{\sqrt{\dim(\BC_\kappa})}  \mathcal{P}_\kappa(O)\,.
\end{split}    
\end{equation}
This proves the first equality. 
In order to prove the second let us notice first that $Q^{1}_{\kappa}$ can be related to $Q^{0}_{\kappa}$ as follows
\begin{equation}
Q^{1}_{\kappa}=\mathcal{N}_{\kappa}i^{\kappa \,{\rm mod\, 2}}\sum_{\vec{s}\in\binom{[2n]}{\kappa}}  c^{\vec{s}} \otimes P c^{\vec{s}}=i^{\kappa \,{\rm mod\, 2}}(\id \otimes P)Q^{0}_{\kappa}\,.
\end{equation}
We can use this and the previous result to write
\begin{equation} \begin{split}
    \Tr[O^{\otimes 2} Q^1_{\kappa}]&=i^{\kappa \,{\rm mod\, 2}}\Tr[(O\otimes O)(\id \otimes P) Q^0_{\kappa}]=i^{\kappa \,{\rm mod\, 2}}\Tr[(O\otimes OP) Q^0_{\kappa}]=\frac{(-1)^{\lfloor \frac{\kappa}{2}\rfloor}i^{\kappa \,{\rm mod\, 2}}}{\sqrt{\dim(\BC_\kappa})} \Tr[PO^2]\\
    &=\frac{(-1)^{\lfloor \frac{\kappa}{2}\rfloor}}{\sqrt{\dim(\BC_\kappa})} \mathcal{C}_\kappa(O)\,,
\end{split}    
\end{equation}
where we used that $\Tr[O\otimes O' Q^0_{\kappa}]=\frac{(-1)^{\lfloor \frac{\kappa}{2}\rfloor}}{\sqrt{\dim(\BC_\kappa})} \Tr[OO']$ which can be proven in complete analogy with the previous. The last equality follows from the definition of $\mathcal{C}_\kappa(O)$.
\end{proof}

We are now in a position to readily prove the second part of Theorem~\ref{sup-th:theorem1}.
By using the orthonormal basis $Q_\kappa^\lambda$ of ${\rm comm}(G^{\otimes 2})$ from Eq.~\eqref{sup-eq:comm-basis}, we know that the variance takes the simple form
\begin{equation}
      \Var_{\thv}[C(\thv)]=\sum_{\kappa=0}^{2n} \left(\Tr[O^{\otimes 2}Q_\kappa^0]\Tr[\rho^{\otimes 2}Q_\kappa^0]+\Tr[O^{\otimes 2}Q_\kappa^1]\Tr[\rho^{\otimes 2}Q_\kappa^1]\right)-\mathbb{E}_{\thv}[\ell_{\thv}(\rho,O)]^2\,.
\end{equation}
All of these terms have been related to generalized purities and/or coherences by the previous Supplemental Lemmas. Notice that the terms with $\kappa=0,2n$ are precisely canceled by the mean value squared. 
The final expression of Theorem~\ref{sup-th:theorem1} is then automatically recovered, with the first term in the sum giving rise to purities while the second is responsible for the coherences.

\subsection{Alternative variance expression in the parity-preserving basis}

Let us re-derive the formula for the loss variance using the operators in~\eqref{sup-eq:Q-parity-basis} (with even $\kappa$) as quadratic symmetries. As explained above, this basis is directly related to the parity-preserving subspaces in operator space. That is, from Supplemental Lemma~\ref{lem:sep-modules} we know that the modules with $\kappa$ even can be decomposed as a direct sum, as $\BC_\kappa \oplus \BC_{2n-\kappa}=\BC_{\kappa}^{e}\oplus \BC_{\kappa}^{o}$. This implies that the operator space $\BC$ can be broken down in the following way,
\begin{equation} \label{sup-eq:parity-modules}
    \BC =\bigoplus_{q=0}^{\lfloor \frac{n}{2}\rfloor}\left( \BC_{2q}^{e}\oplus \BC_{2q}^{o}\right) \bigoplus_{l=0}^{n-1} \BC_{2l+1}\,.
\end{equation}
The  basis of the quadratic symmetries associated with this decomposition can be re-written for convenience as
\begin{equation} \label{sup-eq:comm-parity-basis}
    Q_\kappa^{\lambda \gamma}=\frac{(-1)^{\lfloor \frac{\kappa}{2}\rfloor} }{2^{\delta_{\kappa n}/2}\sqrt{\dim(\BC_\kappa)}}\sum_{\vec{s}\in\binom{[2n]}{\kappa}}  B_{\kappa,\vec{s}}^\lambda\otimes B_{\kappa,\vec{s}}^\gamma\,,
\end{equation}
where $k\in \{0,1,\dots, n-1\}$ (the case $\kappa=n$ requires a special treatment, as shown below).
We recall that the $B_{\kappa,\vec{s}}^\lambda$ are the elements of the basis of $\BC_{\kappa}^p$ (see~\eqref{eq:operators-definite}), and that we are only interested in computing the contributions coming from even values of $\kappa$ (i.e. $\kappa \equiv 2q$).
This follows from the fact that since we know that covariance terms can only appear between modules $\kappa$ and $\kappa'=2n-\kappa$, we can use~\eqref{sup-eq:theo-1} to express the contributions corresponding to the modules with odd $\kappa$. Then, we will employ~\eqref{sup-eq:comm-parity-basis} to compute the contributions from the modules of even $\kappa$. 
Notably,  we find  that the expression for the variance has the same structure as before. However,
one motivation for using this alternative basis is that fermionic states and operators are such that they have definite parity, i.e. they only have support on either $\{\BC_{2q}^e\}$ or $\{\BC_{2q}^o\}$, which implies vanishing coherences, as we will show below.

We will now prove the following Supplemental Theorem.

\begin{suptheorem} \label{sup-th:parity-basis}
    Under the module separation~\eqref{sup-eq:parity-modules}, the mean value of the loss function in Eq.~\eqref{eq:loss-SI} takes the form
\begin{align} \label{sup-eq:parity-basis-mean}
    \mathbb{E}_{\thv}[\ell_{\thv}(\rho,O)]&\!=\sum_{p=e,o}\!\langle\rho_0^p,O_0^p\rangle_{\id}\,.
\end{align}
Moreover, if either the initial state $\rho$ or the measurement $O$ are fermionic (i.e. either $[\rho,P]=0$ or $[O,P]=0$),
the variance is given by
\begin{equation}
    \Var_{\thv}[\ell_{\thv}(\rho,O)]=\sum_{q=1}^{\lfloor \frac{n}{2} \rfloor} \,\sum_{p=e,o} \frac{\PC_{2q,p}(\rho)\PC_{2q,p}(O)}{\dim(\BC_{2q}^p)}\,,
\end{equation}
where we defined the generalized $(\kappa,p)$-purity of and operator $M\in\BC$ as $\mathcal{P}_{\kappa,p}(M)=\langle M_{\kappa,p}, M_{\kappa,p}\rangle_{\id}$.
\end{suptheorem}

In order to prove Supplemental Theorem~\ref{sup-th:parity-basis}, we introduce several Supplemental Lemmas and partial results.
First, let us consider the expression for the mean of the loss function. 
  As before, we know that the mean value projects onto the commutant of the dynamical Lie group. An orthonormal basis of the commutant which preserves the module structure is provided by the  two operators $B_0^p=\frac{\id +(-1)^{\lambda_p} P}{\sqrt{2d}}$, with $\lambda_p=0,1$ for $p=e,o$ respectively (we omit the index $\vec{s}$ since $k=0$).   Then we have
\begin{equation}
     \mathbb{E}_{\thv}[\ell_{\thv}(\rho,O)]=\Tr[\rho B_0^e]\Tr[O B_0^e]+\Tr[\rho B_0^o]\Tr[O B_0^o]\,.
\end{equation}
On the other hand, since $\rho^p_0=\Tr[\rho B_0^p]B_0^p$ and $O^p_0=\Tr[O B_0^p]B_0^p$, we have $\langle \rho^p_0, O^p_0\rangle_{\id}=\Tr[\rho B_0^p]\Tr[O B_0^p]$, thus leading to the desired expression~\eqref{sup-eq:parity-basis-mean}.

We find it convenient to introduce two additional Supplemental Lemmas before proceeding to the evaluation of the variance.

\begin{suplemma}\label{suplemma:meannewbasis}
The mean value squared of the loss function takes the form
    \begin{equation}
    \mathbb{E}_{\thv}[\ell_{\thv}(\rho,O)]^2= \sum_{p=e,o} \mathcal{P}_{0,p}(\rho)\mathcal{P}_{0,p}(O)+\mathcal{C}_{0,p}(\rho)\mathcal{C}_{0,p}(O)\,,
\end{equation} 
where we defined the generalized coherences as 
\begin{equation}
    \mathcal{C}_{\kappa,p}(O)= \Tr[T_\kappa(O_{\kappa,p})O_{\kappa,\bar{p}}]\,,
\end{equation}
and we introduced the linear map $T_\kappa$ acting on a Hermitian operator $O$ as follows,
    \begin{equation}\label{sup-eq:T-map}
    T_\kappa(O):= (-1)^{\lfloor \frac{\kappa}{2}\rfloor}\sqrt{\dim(\BC_\kappa)}\Tr_1\left[(O\otimes \id) (Q^{+-}_\kappa + Q^{-+}_\kappa)\right]\,.
    \end{equation}
\end{suplemma}

\begin{proof}
    We first notice that the square of the mean is given by
    \begin{equation}
     \mathbb{E}_{\thv}[\ell_{\thv}(\rho,O)]^2=\Tr[\rho B_0^e]^2\Tr[O B_0^e]^2+\Tr[\rho B_0^o]^2\Tr[O B_0^o]^2+2\Tr[\rho B_0^e]\Tr[\rho B_0^o]\Tr[O B_0^e]\Tr[O B_0^o]\,.
\end{equation}
The squared terms have a very clear interpretation: we recall that $\mathcal{P}_{0,p}=\Tr\left[O B_0^p\right]^2$, i.e., all the squared terms are generalized purities.
The cross terms require a little more analysis. We start with
\begin{equation}
     \Tr_1\left[(O_\kappa\otimes \id) Q^0_\kappa\right]=\frac{ (-1)^{\lfloor \frac{\kappa}{2}\rfloor}}{\sqrt{\dim(\BC_\kappa)}} \, O_\kappa\,,
\end{equation}
which follows from a very similar calculation to that in Eq.~\eqref{sup-eq:tracec_O2_Q0}. Indeed, we have
\begin{equation} \begin{split}
    \Tr_1\left[(O_\kappa\otimes \id) Q^0_\kappa\right] &=i^{\lfloor \frac{\kappa}{2}\rfloor}\mathcal{N}_\kappa \sum_{\nu_1< \dots< \nu_\kappa} O_{\nu_1\dots \nu_\kappa}  \sum_{\vec{s}\in\binom{[2n]}{\kappa}} \Tr[\frac{c_{\nu_1}\cdots c_{\nu_\kappa} \,c^{\vec{s}}}{\sqrt{d}}]  c^{\vec{s}} \\ &=(-i)^{\lfloor \frac{\kappa}{2}\rfloor}\mathcal{N}_\kappa \sum_{\nu_1< \dots< \nu_\kappa} O_{\nu_1\dots \nu_\kappa}  \sum_{\vec{s}\in\binom{[2n]}{\kappa}} \sqrt{d} \,\delta_{\nu_1, \dots, \nu_\kappa; \vec{s}} \;c^{\vec{s}} \\ &= \frac{(-i)^{\lfloor \frac{\kappa}{2}\rfloor}}{\sqrt{\dim(\BC_\kappa})}  \sum_{\nu_1< \dots< \nu_\kappa} O_{\nu_1\dots \nu_\kappa} \frac{c_{\nu_1\dots\nu_\kappa}}{\sqrt{d}} = \frac{(-1)^{\lfloor \frac{\kappa}{2}\rfloor}}{\sqrt{\dim(\BC_\kappa})}\, O_\kappa \,.
\end{split}    
\end{equation}
We then recall~\eqref{sup-eq:change-of-basis}, which tells us that $Q^{+-}_\kappa+Q^{-+}=Q^0_\kappa-Q^0_{2n-\kappa}$ (when $\kappa\neq n$). Hence, it holds that
\begin{equation}
    (-1)^{\lfloor \frac{\kappa}{2}\rfloor}\sqrt{\dim(\BC_\kappa)}\,\Tr_1\left[(O\otimes \id) (Q^{+-}_\kappa + Q^{-+}_\kappa)\right]=O_\kappa-O_{2n-\kappa}\,.
\end{equation}
Using the definition~\eqref{sup-eq:T-map},
we therefore arrive at
\begin{equation}
    T_\kappa(O)=O_\kappa-O_{2n-\kappa}\,.
\end{equation}
We notice that in this intermediate step we are using the previous module decomposition, $\BC=\bigoplus_\kappa \BC_\kappa$.
We conclude that the action of the linear map $T_\kappa$ on the basis elements $B_{\kappa,\vec{s}}^p$~\eqref{eq:operators-definite} of the parity-preserving modules is
\begin{equation}\label{sup-eq:T-action}
    T_\kappa(B_{\kappa,\vec{s}}^p)=\frac{i^{\lfloor \frac{\kappa}{2}\rfloor}}{\sqrt{2d}}\big(T_\kappa(c^{\vec{s}}) + (-1)^{\lambda_p} T_\kappa(P c^{\vec{s}})\big)=\frac{i^{\lfloor \frac{\kappa}{2}\rfloor}}{\sqrt{2d}}\left(c^{\vec{s}} - (-1)^{\lambda_p} P c^{\vec{s}}\right)=B_{\kappa,\vec{s}}^{\bar{p}}\,,
\end{equation}
where we used that $P c^{\vec{s}}\in \BC_{2n-\kappa}$. Now, we can use that, by definition, $\langle T_0(O^e_0),O^o_0\rangle_{\id}=\mathcal{C}_{0,e}$ to show that $\Tr[O B_0^e] \Tr[O B_0^o]=\mathcal{C}_{0,e}(O)=\mathcal{C}_{0,o}(O)$.

\end{proof}
Let us add a brief comment about the map $T_\kappa$. In the previous subsection, we defined the generalized coherence associated with the module separation $\BC=\bigoplus_\kappa \BC_\kappa$ as $\mathcal{C}_\kappa(O)=\Tr[PO_{2n-\kappa}O_\kappa]$. This is so because in this special case we can just write $P_\kappa(O)=PO$, which follows from defining $ P_\kappa(O)\propto\Tr_1[(O\otimes \id) Q^1_\kappa ]$.

We present another useful Supplemental Lemma.

\begin{suplemma}\label{suplemma:newbasisoverlap}
 The Hilbert-Schmidt inner product between $O^{\otimes 2}$ and the quadratic symmetry $Q^{\lambda \gamma}_\kappa$ in Eq.~\eqref{sup-eq:Q-parity-basis}, for $\kappa$ even and $\kappa \in \{2,4,\dots, K \}$, where $K=n-1$ or $K=n-2$ depending on whether $n$ is odd or even respectively, can be rewritten as 
    \begin{equation}
        \Tr[O^{\otimes 2}Q_\kappa^{pp}]= \frac{(-1)^{\lfloor \frac{\kappa}{2}\rfloor}}{\sqrt{\dim(\BC_\kappa})}  \mathcal{P}_{\kappa,p}(O)\,,\;\;\;\;\;\Tr[O^{\otimes 2}Q_\kappa^{p\bar{p}}]= \frac{(-1)^{\lfloor \frac{\kappa}{2}\rfloor}}{\sqrt{\dim(\BC_\kappa})}  \mathcal{C}_{\kappa,p}(O)\,.
    \end{equation}
\end{suplemma}

\begin{proof}
Consider the expansion of a general operator in the even $\kappa$ sectors: $O=\sum_{\kappa,p}\sum_{\vec{s}\in\binom{[2n]}{\kappa}} O^p_{\vec{s}}B^p_{\kappa,\vec{s}}+\text{odd contributions}$, where $O^p_{\vec{s}}$ is a real anti-symmetric tensor. We can ignore the 
odd contributions, whose overlap with the operators $Q_\kappa^{\lambda\gamma}$ is zero for $\kappa$ even. We then have 
\begin{equation}
    \begin{split}
        \Tr[O^{\otimes 2}Q^{\lambda \gamma}_{\kappa''}]&=\frac{(-1)^{\lfloor \frac{\kappa''}{2}\rfloor}}{\sqrt{\dim(\BC_\kappa})}\sum_{\kappa,\kappa',p,p'}\sum_{\vec{s},\vec{s}',\vec{s}''}O^p_{\vec{s}}O^{p'}_{\vec{s}'}\Tr[B_{k,\vec{s}}^p B_{k'',\vec{s}''}^\lambda]\Tr[B_{k',\vec{s}'}^{p'} B_{k'',\vec{s}''}^\gamma]\\&=\frac{(-1)^{\lfloor \frac{\kappa}{2}\rfloor}}{\sqrt{\dim(\BC_\kappa})}\sum_{\vec{s}\in\binom{[2n]}{\kappa_0}} O^\lambda_{\vec{s}}O^\gamma_{\vec{s}}\,,
    \end{split}
\end{equation}
where we used Eqs.~\eqref{eq:Borth} and~\eqref{sup-eq:comm-parity-basis}.
If we set $\lambda=\gamma\equiv p$ we thus obtain 
\begin{equation}
    \Tr[O^{\otimes 2}Q^{pp}_{\kappa}]=\frac{(-1)^{\lfloor \frac{\kappa}{2}\rfloor}}{\sqrt{\dim(\BC_\kappa})}\sum_{\vec{s}\in\binom{[2n]}{\kappa}} O^p_{\vec{s}}O^p_{\vec{s}}=\frac{(-1)^{\lfloor \frac{\kappa}{2}\rfloor}}{\sqrt{\dim(\BC_\kappa})}\mathcal{P}_{\kappa,p}(O)\,.
\end{equation}
The last equality follows immediately by noting that $\mathcal{P}_{\kappa,p}(O)=\Tr[O_{\kappa,p}^2]=\sum_{\vec{s}\in\binom{[2n]}{\kappa}} O^p_{\vec{s}}O^p_{\vec{s}}$, with $O_{\kappa,p}=\sum_{\vec{s}\in\binom{[2n]}{\kappa}} O^p_{\vec{s}}B_{\kappa,\vec{s}}^p$ the projection of $O$ onto the subspace $\BC_{\kappa}^p$. This proves the first part of the lemma.

The second part follows by setting $\lambda=p$ and $\gamma=\bar{p}$, from which we have that $\Tr[O^{\otimes 2}Q^{p\bar{p}}_{\kappa}]=\frac{(-1)^{\lfloor \frac{\kappa}{2}\rfloor}}{\sqrt{\dim(\BC_\kappa})}\sum_{\vec{s}\in\binom{[2n]}{\kappa}} O^p_{\vec{s}} O^{\bar{p}}_{\vec{s}}$. In order to show that this can be rewritten as a generalized coherence, let us recall that  $T_{\kappa}(B_{\kappa,\vec{s}}^{p})=B_{\kappa,\vec{s}}^{\bar{p}}$ (see Eq.~\eqref{sup-eq:T-action}), implying that
\begin{equation}
   \mathcal{C}_{\kappa,p}(O)=\Tr[T_\kappa(O_{\kappa,p})O_{\kappa,\bar{p}}]=\sum_{\vec{s},\vec{s}' \in\binom{[2n]}{\kappa}} O^{p}_{\vec{s}}O^{\bar{p}}_{\vec{s}'}\Tr[T_\kappa(B_{\kappa,\vec{s}}^{p})B_{\kappa,\vec{s}'}^{\bar{p}}]=\sum_{\vec{s},\vec{s}' \in\binom{[2n]}{\kappa}} O^{p}_{\vec{s}}O^{\bar{p}}_{\vec{s}'}\Tr[B_{\kappa,\vec{s}}^{\bar{p}}B_{\kappa,\vec{s}'}^{\bar{p}}]=\sum_{\vec{s}\in\binom{[2n]}{\kappa}} O^{\bar{p}}_{\vec{s}}O^p_{\vec{s}}\,. 
\end{equation}
Hence the result is proven. Notice also that $C_{\kappa,p}=C_{\kappa,\bar{p}}$, while $C_{n,p}=0$ since $T_n(O)=0$ for any operator.

\end{proof}

Consider now the module $\BC_n$, with even $n$. We can use the basis $\{Q^0_n, Q^1_n\}$ instead of~\eqref{sup-eq:Q-parity-basis}, since ${\rm span}\{Q^0_n, Q^1_n\}={\rm span}\{Q^{++}_n, Q^{--}_n\}$, and $Q^{+-}_n=Q^{-+}_n=0$.  Then, using the results of the previous subsection (see Eq.~\eqref{sup-eq:theo-1}) we get a contribution to the variance from
\begin{equation}
    \begin{split}
        \Tr[O^{\otimes 2} Q_n^0] \Tr[\rho^{\otimes 2} Q_n^0]+ \Tr[O^{\otimes 2} Q_n^1] \Tr[\rho^{\otimes 2} Q_n^1]=\frac{\mathcal{P}_n(O)\mathcal{P}_n(\rho)+\mathcal{C}_n(O)\mathcal{C}_n(\rho)}{\dim(\BC_n)}\,.
    \end{split}
\end{equation}
On the other hand, let us write
$O_n=O_n^{e}+O_n^{o}$ with $\Tr[O_n^{e}O_n^{o}]=0$.
It follows that $\mathcal{P}_n(O)=\mathcal{P}_e(O)+\mathcal{P}_o(O)$ and $\mathcal{C}_n(O)=\mathcal{P}_e(O)-\mathcal{P}_o(O)$ from the orthogonality between parity sectors and the fact that $PO^{p}_n=(-1)^{\lambda_p}O^{p}_n$ (we recall that $\mathcal{C}_n(O)=\langle PO_n,O_n\rangle$). Then,\begin{equation} \label{sup-eq:Bn-contribution}
     \Tr[O^{\otimes 2} Q_n^0] \Tr[\rho^{\otimes 2} Q_n^0]+ \Tr[O^{\otimes 2} Q_n^1] \Tr[\rho^{\otimes 2} Q_n^1]=\frac{\mathcal{P}_e(O)\mathcal{P}_e(\rho)+\mathcal{P}_o(O)\mathcal{P}_o(\rho)}{(\dim(\BC_n)/2)}=\frac{\mathcal{P}_e(O)\mathcal{P}_e(\rho)}{\dim(\BC^e_n)}+\frac{\mathcal{P}_o(O)\mathcal{P}_o(\rho)}{\dim(\BC^o_n)}\,.
\end{equation}
We can finally expand the variance as
\begin{equation}
\begin{split}
        \Var_{\thv}[\ell_{\thv}(\rho,O)]&=\sum_{q}\left(\Tr[O^{\otimes 2}Q_{2q}^{++}]\Tr[\rho^{\otimes 2}Q_{2q}^{++}]+\Tr[O^{\otimes 2}Q_{2q}^{--}]\Tr[\rho^{\otimes 2}Q_{2q}^{--}]\right)\\
        &+\sum_{q} \left(\Tr[O^{\otimes 2}Q_{2q}^{+-}]\Tr[\rho^{\otimes 2}Q_{2q}^{+-}]+\Tr[O^{\otimes 2}Q_{2q}^{-+}]\Tr[\rho^{\otimes 2}Q_{2q}^{-+}]\right)
        \\ &+
        \sum_{l} \left(\Tr[O^{\otimes 2}Q_{2l+1}^0]\Tr[\rho^{\otimes 2}Q_{2l+1}^0]+\Tr[O^{\otimes 2}Q_{2l+1}^1]\Tr[\rho^{\otimes 2}Q_{2l+1}^1]\right)-\mathbb{E}_{\thv}[\ell_{\thv}(\rho,O)]^2
        \,,
\end{split}
\end{equation}
where the indices $q$ and $l$ run as in the module decomposition~\eqref{sup-eq:parity-modules}.
The odd modules give the same contributions as in Theorem~\ref{sup-th:theorem1}. Furthermore, Supplemental Lemma \ref{suplemma:newbasisoverlap} relates the two terms in the first sum with generalized purities in the sectors of even and odd  parity respectively (except for the case in which $q=\lfloor n/2 \rfloor$ and $n$ is even, for which we use~\eqref{sup-eq:Bn-contribution}). Similarly, the remaining contributions yield the generalized coherences. Finally, the mean value squared cancels out the contribution coming from $q=0$, according to Supplemental Lemma \ref{suplemma:meannewbasis}. This results in the closed-form expression
\begin{equation}
          \Var_{\thv}[\ell_{\thv}(\rho,O)]=\sum_{q=1}^{\lfloor \frac{n}{2}\rfloor} \sum_{p=e,o}\frac{\PC_{2q,p}(\rho)\PC_{2q,p}(O)+(1-\delta_{2q,n})\CC_{2q,p}(\rho)\CC_{2q,p}(O)}{\dim(\BC_{2q}^p)}
          +\sum_{l=0}^{n-1}\frac{\PC_{2l+1}(\rho)\PC_{2l+1}(O)+\CC_{2l+1}(\rho)\CC_{2l+1}(O)}{\dim(\BC_{2l+1})}\,.
\end{equation}

The last step of the proof is to consider fermionic states or measurement operators. These have definite parity, which means that they only have support on either $\{\BC_{2q}^e\}_q$ or $\{\BC_{2q}^o\}_q$. It is thus clear that the contributions from the odd modules vanish. Moreover the generalized coherences also vanish by definition. Altogether, this completes the proof of Supplemental Theorem~\ref{sup-th:parity-basis}.

\section{Generalized entanglement and Fermionic Entanglement Entropies}\label{sec:ferm-ent}

In this section we provide operational meaning to the generalized entanglement arising from the Lie algebra of parametrized matchgate circuits in terms of known fermionic measures. This result is in itself remarkable, as we recall that  we are working in a framework where the underlying physical system is composed of (distinguishable) qubits on a quantum computer, and not of actual fermions. However, the fact that the Hilbert space of $n$-qubits $\HC=(\mathbb{C}^2)^{\otimes n}$ is isomorphic to the Fock space $\FC_{-}(\mathbb{C}^n)$ of local spinless fermions on $n$ sites~\cite{bratteli2012operator,de2009x}, along with the fact that the circuit's Lie algebra is isomorphic to the  free fermion algebra, leads to this notable connection between generalized entanglement and the fermionic entanglement measures introduced in~\cite{gigena2015entanglement, gigena2020one, gigena2021many}.

In what follows we will assume that $\rho$ is a fermionic state. Let us notice first that we can write the $\kappa$-purities as 
\begin{equation}\label{eq:kpuritytensor}
\mathcal{P}_\kappa(\rho)=\frac{1}{d}\sum_{\nu_1<\nu_2<\dots < \nu_\kappa} \Tr[\rho i^{\lfloor\frac{ \kappa }{2}\rfloor}c_{\nu_1}c_{\nu_2}\dots c_{\nu_\kappa}]^2=\frac{1}{d}\sum_{\nu_1<\nu_2<\dots < \nu_\kappa} C_{\nu_1\nu_2\dots \nu_\kappa}^2\,,
\end{equation}
where we recall that $d=2^n$, and where have defined the tensor of contractions
\begin{equation}
 C_{\nu_1\nu_2\dots \nu_\kappa}:=   \Tr[\rho i^{\lfloor\frac{ \kappa }{2}\rfloor}c_{\nu_1}c_{\nu_2}\dots c_{\nu_\kappa}]\,,\;\;\;\; \nu_1<\nu_2<\dots < \nu_\kappa\,,
\end{equation}
which is antisymmetric in all its indices. 
Let us focus now in the case $\kappa=2$, i.e., on the so-called $\g$-purity~\cite{ragone2023unified}. We can express $C$ as
\begin{equation}
   C=i (\langle \bm{c}\bm{c}^t \rangle-\id)\,,
\end{equation}
for $\bm{c}=(c_1,c_2,c_3,\dots)^t$. Then, we can write 
\begin{equation}
    \mathcal{P}_\mf{g}(\rho)=\frac{1}{2d}\sum_{\nu_1,\nu_2}C_{\nu_1\nu_2}C_{\nu_1\nu_2}=-\frac{1}{2d}\sum_{\nu_1,\nu_2}C_{\nu_1\nu_2}C_{\nu_2\nu_1}=-\frac{1}{2d}\Tr[C^2]\,.
\end{equation}

In order to relate the $\g$-purity with  fermionic entanglement measures, we
introduce Dirac fermionic operators
from Majoranas as usual:
\begin{equation}\label{dicar-ops}
    d_i:=\frac{c_{2i-1}+i c_{2i}}{2}\,.
\end{equation}
One can readily verify that the following anti-commutation relations hold
\begin{align}\label{eq:anticomm-JW}
    &\{d_j,d_{j'}\}=\vec{0}\,,\quad \{d_j\ad,d_{j'}\ad\}=\vec{0}\,,\quad
    \{d_j,d_{j'}\ad\}=\delta_{jj'}\id\,,
\end{align}
for all $1\leq j,j'\leq n$. Here, $\vec{0}$ denotes the $2^n\times 2^n$ all zero matrix.
We can now
 define the Dirac contraction matrix (as opposed to Majorana) as
\begin{equation}\label{eq:diraconctr}
    D=\begin{pmatrix}
        \langle \bm{d}^\dag \bm{d}\rangle & \langle \bm{d} \bm{d}\rangle\\
         \langle \bm{d}^\dag \bm{d}^\dag\rangle &  \langle \bm{d} \bm{d}^\dag\rangle
    \end{pmatrix}=\id-\langle \bm{\gamma}\bm{\gamma}^\dag \rangle\,,
\end{equation}
with $\bm{\gamma}:=(\bm{d}\,,\bm{d}^\dag)^t$ and $\bm{d}=(d_1,d_2,d_3,\dots)^t$. 

In Ref.~\cite{gigena2015entanglement} the authors introduce the entropies of $D$ (therein $D$ is denoted as ``extended'' one-body reduced density matrix) as a measure of fermionic entanglement. For clarity, let us recall that how this notion of entanglement is  defined. First, an entanglement entropy between a single mode and its orthogonal complement is introduced. Then the sum over all modes of the previous entropy is defined. Finally, a minimization of the ensuing quantity, over all free fermion (Bogoliubov) transformations is performed. It was then proven in Ref.~\cite{gigena2015entanglement} that the resulting entanglement entropy is equal to the corresponding entropy of the matrix $D$. Hence, by construction these entropies are invariant under free fermion transformations. We also mention that this measure of entanglement unifies previous approaches based on either mode entanglement or particle entanglement.

Notably, we can relate the $\mf{g}$-purity to the linear entropy of $D$. This can be proven by relating
the matrices $C$ and $D$. We begin by defining   
$\tilde{\bm{\gamma}}=(d_1,d_1^\dag, d_2,d_2^\dag,\dots)^t$ which can be obtained  from $\vec{d}$ as $\tilde{\bm{\gamma}}=P\bm{\gamma}$ with  $P$ being  a permutation matrix. Then, we can easily see from Eq.~\eqref{dicar-ops} that
\begin{equation}
    \tilde{\bm{\gamma}}=\frac{1}{\sqrt{2}}\Omega \bm{c}, \quad \text{where} \quad \Omega=\bigoplus_{i=1}^n \frac{1}{\sqrt{2}}\begin{pmatrix}
        1& i \\
        1& -i
    \end{pmatrix}\,,
\end{equation}
with $\Omega^\dag \Omega=\id$, 
implying
$
    \bm{\gamma}=\frac{1}{\sqrt{2}}P\Omega \bm{c}\,.
$
From this we obtain the relation
\begin{equation}\label{eq:CDrelation}
  C=i\Omega^\dag P(1-2D) P \Omega\,, 
\end{equation}
which leads directly to the following Lemma.

\begin{suplemma}\label{lem:fermion-ent}
    Consider the fermionic entanglement measure $S_2(\rho):=2 \Tr[D(1-D)]$, i.e,  the linear entropy of $D$. 
    The following relation holds:
    \begin{equation}
     \mathcal{P}_\mf{g}(\rho)=\frac{2 \Tr[D^2]-n}{d}=  \frac{n-S_2(\rho)}{d}\,. 
    \end{equation}
   Moreover, if $O\in i\mf{g}$ and $\Tr[O^2]=2^n$ (i.e., $O$ is a Pauli operator), we have
    \begin{equation}
        \Var_{\thv}[\ell_{\thv}(\rho,O)]= \frac{n-S_2(\rho)}{\dim(\mf{g})}\,.
    \end{equation}
\end{suplemma}
\begin{proof}
    The proof follows by noting that $ \mathcal{P}_\mf{g}(\rho)=-\frac{1}{2d}\Tr[C^2]=\frac{1}{2d}\Tr[(\id-D)(\id-D)]=\frac{2 \Tr[D^2]-n}{d}$, where we used $\Tr[\id]=2n$, $\Tr[D]=n$ and the relation \eqref{eq:CDrelation}.
\end{proof}

This lemma has several important implications. First, it establishes a rigorous connection between a generalized entanglement measure, and 
$S_2(\rho)$, a measure of fermionic entanglement. Second, it allows us to directly express the variance as a function of $S_2(\rho)$. This shows that the maximum variance is achieved for (fermionic) unentangled states, i.e., Slater determinants. Conversely, states which are maximally entangled have have no component in the algebra (by studying the eigenvalues of $D$ one finds that the maximum value of $S_2(\rho)$ is $n$~\cite{gigena2015entanglement}). Interestingly, Supplemental Lemma~\ref{lem:fermion-ent} shows that in order for the variance to be large, the quantum resources measured by fermionic entanglement \cite{gigena2020one} must be small. Crucially, this realization begs the question: \textit{If the amount of resource is small (and the model is trainable), is it also classical simulable?} In Section~\ref{sec:analitics} we explore this question as well as the subtle connection between entanglement, trainability, and simulability.

While we have successfully related variances to an operationally meaningful notion of fermionic entanglement in many important scenarios (those for $O\in \mf{g}$), the question of whether 
all variances are related to some notion of fermionic entanglement naturally arises. In other words, we can ask if for $\kappa\neq 2$
the $\kappa$-purities are also related to some notion of fermionic entanglement. The first thing to notice is that according to Eq.~\eqref{eq:kpuritytensor} contractions of higher order are needed. This means that $\mathcal{P}_\kappa(\rho)$ is necessarily a function of fermionic density matrices of bodyness up to $\kappa/2$ (assuming $\kappa$ even, since for fermionic states the odd contractions vanish). Thus it is clear that an interpretation of $\mathcal{P}_\kappa(\rho)$ in terms of fermionic entanglement requires the latter to be defined via general reduced density matrices. Interestingly, partial extensions of the previous fermionic entanglement ideas have been recently presented \cite{gigena2021many} by precisely using  higher order reduced density matrices. In this case the number of fermions nonetheless has to be fixed, something which is not required for defining 
the purities. In this sense, the notion of generalized entanglement we presented might shed additional light on the problem of defining entanglement in fermionic systems.

\section{Analytical evaluation of variances for particular families of states}\label{sec:analitics}

\subsection{Gaussian states}
A pure Gaussian state can be defined as any state of the form $|\psi\rangle=U|0\rangle$ with $U\in e^{\mf{g}}$ for the even parity sector and as $|\psi\rangle=Uc_1|0\rangle$  the odd parity sector. As we know, the $\kappa$-purities are invariant under the action of $U$, so we just need to compute the purities of either $|0\rangle$ or $c_1|0\rangle$ to characterize all these states.

Let us consider first the even sector. We can write for $|\psi\rangle=|0\rangle^{\otimes n}$, and thus
\begin{equation}
    \rho=\left(|0\rangle\langle 0|\right)^{\otimes n}=\left(\frac{\id+Z}{2}\right)^{\otimes n}=\frac{1}{d}\left(\id+\sum_i Z_i+\sum_{i<j}Z_i Z_j+\sum_{i<j<k}Z_i Z_jZ_k+\dots\right)\,.
\end{equation}
Notice that a term containing $\kappa/2$ $Z$ operators ($\kappa$ even) belongs to $\BC_\kappa$: $Z_i=-i c_{2i-1}c_{2i}$ implying that each of these terms is an ordered product of $\kappa$ Majoranas. Explicitly $\sum_{i_1<i_2<\dots i_{\kappa/2}}Z_{i_1}Z_{i_2}\dots Z_{i_{\kappa/2}}=\prod_{j=1}^\kappa \sum'_{i_j}Z_{i_j}=(-i)^{\kappa/2}\prod_{j=1}^\kappa \sum'_{i_j}c_{2i_j-1}c_{2i_j}$ with the prime in the sum indicating the ordering that $i_1<i_2<\dots i_{\kappa/2}$.
Hence, with this expansion, it is easy to compute the purities:
\begin{equation}
    \mathcal{P}_\kappa(\rho)= \sum_{l=1}^{\dim(\BC_\kappa)}\Tr[\frac{(-i)^{\frac{\kappa}{2}}}{d}\prod_{j=1}^\kappa \sum_{i_j}c_{2i_j-1}c_{2i_j} \frac{(i)^{ \frac{\kappa}{2}}c^{\vec{s}^l}}{\sqrt{d}}]^2=\frac{1}{d}\sum'_l=\frac{1}{d}\binom{n}{\kappa/2}\,,
\end{equation}
where we have denoted as $\sum'_l$ the sum over those $l$ such that $\vec{s}_\kappa^l$ matches one of the $\prod_j c_{2i_j-1}c_{2i_j}$ terms. This is essentially just counting how many of the these terms there are, which can be easily found to be $\binom{n}{\kappa/2}$, as it corresponds to choosing $\kappa/2$ positions for the $n$ $Z$-operators.

Consider now the odd sector. Each term $\prod_{j=1}^\kappa \sum_{i_j}c_{2i_j-1}c_{2i_j}$ is now replaced by $\prod_{j=1}^\kappa \sum_{i_j}c_1c_{2i_j-1}c_{2i_j}c_1=\pm \prod_{j=1}^\kappa \sum_{i_j}c_{2i_j-1}c_{2i_j}$ where we used the fact that both these terms and the operator $c_1$ are Paulis (so they either commute or anti-commute), and that $c_1^2=\id$. These signs do not contribute to the value of $\mathcal{P}_\kappa(\rho)$, which is then the same as before. 

On the other hand, it is clear that $P|\psi\rangle= \pm |\psi\rangle$. This means that $\CC_\kappa(\rho)=\mathcal{P}_\kappa(\rho)$.
The final form of the variance for $\rho$ a pure Gaussian state is then
\begin{equation}
    \Var_{\thv}[\ell_{\thv}(\rho,O)]=  \sum_{\kappa'=1}^{n-1} \frac{1}{d} \binom{n}{\kappa'}\binom{2n}{2\kappa'}^{-1} \big(\mathcal{P}_{\kappa'}(O)+\CC_{\kappa'}(O)\big)\,.
\end{equation}

We now focus on the case $O\in \BC_\kappa$, which corresponds to one of the main body examples. Since the Majorana operators are Pauli strings, their product is as well. This means that if we want to consider normalized operators (having eigenvalues $\pm 1$) they necessarily have the form $ O=i^{\lfloor \frac{\kappa}{2}\rfloor}U^\dag c_{\nu_1}\cdots c_{\nu_\kappa} U$. As we said before, we can omit $U$ when computing a purity. Thus we have
\begin{equation}
\mathcal{P}_\kappa(O)=\sum_{l=1}^{\dim(\BC_\kappa)}\Tr[i^{\lfloor \frac{\kappa}{2}\rfloor}c_{\nu_1}\cdots c_{\nu_\kappa} \frac{(i)^{\lfloor \frac{ \kappa}{2}\rfloor}c^{\vec{s}^l}}{\sqrt{d}}]^2=d\,.
\end{equation}
Clearly $\CC_\kappa(O)=0$ which leads us to 
\begin{equation}
    \Var_{\thv}[\ell_{\thv}(\rho,O)]= \binom{n}{\kappa/2}\binom{2n}{\kappa}^{-1} \,.
\end{equation}

\subsection{Non-fermionic states and operators}
Consider the non-fermionic state $|\psi\rangle=\alpha |0\rangle+\beta c_1|0\rangle$ which is a superposition of states with different parities, and observables of the form $O=X_j$. Let us also assume $\alpha,\beta\in \mathbb{R}$. Notice first that $X_j=\prod_{i=1}^{2j-1}c_{i}\in \BC_{2j-1}$ so that $\mathcal{P}_\kappa(O)=\delta_{\kappa,2j-1}d$. On the other hand, in other to compute the purities of the state let us notice that for $k$ odd
\begin{equation}
\mathcal{P}_\kappa(|\psi\rangle\langle\psi|)=\mathcal{P}_\kappa(\alpha^2|0\rangle\langle 0|+\beta c_1 |0\rangle\langle 0|c_1+\alpha\beta(c_1|0\rangle\langle 0|+|0\rangle\langle 0|c_1))=\alpha^2\beta^2\mathcal{P}_\kappa(\{c_1,|0\rangle\langle 0|\})\,,
\end{equation}
where we used that the first two terms only include an even number of Majoranas, and where $\{\cdot,\cdot\}$ denotes the anti-commutator. Since $c_1=X_1$ and Pauli operators either commute or anti-commute one finds
\begin{equation}
    \{c_1,|0\rangle\langle 0|\}=\frac{2}{d}X_1\left(\id+\sum_{i>1} Z_i+\sum_{i>1,i<j}Z_i Z_j+\sum_{i>1,i<j<k}Z_i Z_jZ_k+\dots\right)\,.
\end{equation}
We can now compute the parity as in the previous Gaussian case taking into account that we can choose from $n-1$ $Z$ operators and that we need $j-1$ of them for the $2j-1$-purity. Thus we get (the covariance of the observables is null):
\begin{equation}
     \Var_{\thv}[\ell_{\thv}(\rho,O)]=  4\alpha^2\beta^2 \binom{n-1}{j-1}\binom{2n}{2j-1}^{-1}\,.
\end{equation}

\subsection{Magic states}\label{sec:magicst}

We are going to study the case of tensor products of $4$-qubit magic states of the form
\begin{equation}
|\psi(\tau)\rangle=\frac{|0000\rangle+|0011\rangle+|1100\rangle+e^{i\tau}|1111\rangle}{2}\,.
\end{equation}
Namely, the complete state has the form $|\Psi(\tau)\rangle=|\psi(\tau)\rangle^{\otimes \frac{n}{4}}$. These states have been employed recently to classically simulate quantum algorithms via free fermion techniques plus magic \cite{reardon2023improved}. It was therein conjecture, and later proven in\cite{dias2023classical,cudby2023gaussian}, that the complexity grows exponentially with the magic. It can be easily seen that for $\tau=0$ the state is Gaussian, while for $\tau=\pi$ we achieve maximum magic.

Consider first a single copy $|\psi(\tau)\rangle$. We are going to characterize the purities via the fermionic entanglement techniques introduced in \cite{gigena2015entanglement} and used in  the previous section. Therein it was shown that for $n=4$ fermions the matrix $D$ (defined in \eqref{eq:diraconctr}) can be diagonalized analytically leading to a normal form
\begin{equation}\label{eq:normalform}
|\psi(\tau)\rangle=\alpha(\tau)|0000\rangle'+\beta(\tau)|1111\rangle'\,,
\end{equation}
via a Bogoliubov transformation. Here $|0000\rangle'$ is the vacuum of new fermions operators related to the previous ones linearly. It also holds that $|\alpha(\tau)|=\sqrt{f_+}$, $|\beta(\tau)|=\sqrt{f_{-}}$ for $f_{\pm}$ the $4$-degenerate eigenvalues of $D$, which can be written as $f_{\pm}=\frac{1\pm \sqrt{1-\frac{S_2(D)}{4}}}{2}$. For these states one can easily find $f_{\pm}=\frac{1\pm \cos(\tau/2)}{2}$ leading to $S(|\psi(\tau)\rangle \langle \psi(\tau)|)=\sin^2(\tau/2)$, or equivalently 
\begin{equation}
    \mathcal{P}_2(|\psi(\tau)\rangle \langle \psi(\tau)|)=\frac{1}{4}\cos^2(\tau/2)\,,
\end{equation}
and $|\alpha(\tau)|=\cos(\tau/4)$, $|\beta(\tau)|=\sin(\tau/4)$. This confirms that the Gaussian states $\tau=0$ have no fermionic entanglement, while $\tau=\pi$ is the maximum magical and entangled case.

Let us now consider two copies of the previous state, and compute the purity in the basis of \eqref{eq:normalform}. We have three kinds of contribution to the purity, terms involving operators acting on the first copy, terms 
involving operators acting on the second copy, and crossed terms arising from traces of operators in the algebra which act on both. The first two contributions yield twice $\frac{2^4}{2^8}\mathcal{P}_2(|\psi(\tau)\rangle\langle \psi(\tau)|)$ (the pre-factor must be included because the Hilbert space dimension which normalizes the Majorana products is now $d=2^8$). One can easily show that the crossed terms vanish: the algebra elements acting on both copies are of the form $\widehat{S_iS_j}$ with $i=1,2,3,4$, $j=5,6,7,8$, $S=X,Y$. Both $X$ and $Y$ induce a bit flip in the computational basis, meaning that $\langle\widehat{S_iS_j}\rangle=0$, as can be seen explicitly by using \eqref{eq:normalform}: $\langle aaaabbbb|\widehat{S_iS_j}|ccccdddd\rangle=\delta_{ac}\delta_{bd}\langle aaaabbbb|\widehat{S_iS_j}|aaaabbbb\rangle=0$ (here $a,b,c,d=0,1$ representing the possible contributions involved in $\langle\widehat{S_iS_j}\rangle$). Thus we obtain $\mathcal{P}_2(|\Psi(\tau)\rangle\langle \Psi(\tau)|)=\frac{8}{d}\cos^2(\tau/2)$ for $n/4=2$ copies. By repeating the previous argument one easily obtains
\begin{equation}
    \mathcal{P}_2(|\Psi(\tau)\rangle\langle \Psi(\tau)|)=\frac{n}{d}\cos^2(\tau/2)\,,
\end{equation}
implying
\begin{equation}\label{eq:o-in-bk-lol}
   \Var_{\thv}[\ell_{\thv}(\rho,O)]=  
\frac{n\cos^2(\tau/2)}{\dim(\mf{g})}=  
\frac{\cos^2(\tau/2)}{2n-1} \,.
\end{equation}

Let us also discuss the extent associated with these states: we recall that the fermionic extent is defined as the minimum $\xi=(\sum_i |a_i|)^2$ over all expansions of a given state $|\psi\rangle$ in terms of Gaussian states, namely $|\psi\rangle=\sum_i a_i |\varphi_i\rangle$ with $|\varphi_i\rangle$ Gaussian. In other words, the fermionic extent quantifies how many terms one needs to expand an arbitrary state as a linear combination of Gaussian states. While mean values over Gaussian states are efficiently computable via Pffafians, a mean value over a state with exponential extent in the number of qubits cannot be efficiently computed with this method. 
It is interesting to notice that for a single copy the previous decomposition gives an upper bound to the extent $\xi=(|\cos(\tau/4)|+|\sin(\tau/2)|)^2=1+\sin(\tau/2)$ which coincides with the actual value found in \cite{reardon2023improved}. On the other hand, it has been recently proven that for $n=4$ qubits this quantity is multiplicative \cite{dias2023classical,cudby2023gaussian} thus yielding
\begin{equation}
    \xi=[1+\sin(\tau/2)]^{n/4}\,,
\end{equation}
for the state $|\Psi(\tau)\rangle$. Remarkably, this model is trainable even for states with an exponential extent. This does not render the model non-classically-simulable as we explain in the next section. It is also interesting to compare $\xi$ to the entanglement entropy $S_2(\rho)=n \sin^2(\tau/2)$, which scales linearly with the number of copies for fixed $\tau$.

\section{Remarks on classical simulability of trainable models based on matchgate circuits}

In the main body we raised the following question: \textit{Does the inherent structure that precludes the presence of BPs in  a matchgate-based model (a requisite for trainability) simultaneously render it classically simulable?} 

To analyze the previous question, we need to break the problem apart into pieces. First, let us discuss classical simulation methods for matchgate circuits. The most common simulation technique for fermionic systems is based on Wick's theorem~\cite{bravyi2004lagrangian} which allows us to express the expectation value of a general observable over some Gaussian state in terms of the expectation value of the same state but only over elements in the algebra. Hence, this method is extremely well suited for cases when $\rho$ is a Gaussian state and $O$ is arbitrary. For non-Gaussian (entangled) states Wick-based methods start to become more expensive. In fact, recent advancements ~\cite{dias2023classical,cudby2023gaussian} have shown that the cost of simulation grows with the amount of ``extent'' (or magic) in the state. This is due to the fact that the extent is related to a decomposition of the initial state into a sum of Gaussian states. A second important simulation technique is called $\mf{g}$-sim~\cite{somma2005quantum,goh2023lie}, which is extremely well suited for measurement operators in $\BC_\kappa$ with $\kappa$ not scaling with $n$. The key idea here is to work in the Heisenberg picture and backwards propagate the observable through the circuit, leveraging knowledge of the  Lie algebra's structure constants. Importantly, we recall that  for $\mf{g}$-sim to work, one needs to be able to estimate the expectation value of the initial state over the Lie algebra basis elements. 

As we will discuss now, the cases where the loss function does not exhibit a BP, appear to be precisely the situations where the loss function can be efficiently classically estimated. First, let us recall from the main text that a necessary condition for trainability is to have small generalized globality (as otherwise the term $\dim(\BC_\kappa)$ in the denominator of the variance will lead to an exponential concentration). This already restricts the settings where BPs can be avoided. Regarding the initial state, we know that if it Gaussian, then Wick-based methods become sufficient. This means that some form of fermionic entanglement \cite{gigena2015entanglement} is necessary and that one should instead use $\g$-sim (which, as previously mentioned, is meant to be used with measurement of small generalized globality). To guarantee that Wick-based simulations are not possible, let us focus on the state $\ket{\Psi(\tau)}$ presented in the main text. We will also assume for simplicity $O\in i\g$. We know that for $\tau\in(0,\pi]$ and $\tau-\pi\in \Omega(1/\poly(n))$, then the state  has an  exponentially large amount of extent, but the variance is not exponentially vanishing since this is also a low entanglement state (notice that this exponential separation, proven in Section~\ref{sec:magicst}, is a remarkable result which clearly shows that these two quantities measure different types of resources). It should be clear that in this case, one can also trivially compute the expectation value of all the basis elements of the Lie algebra, thus enabling the use of $\g$-sim.

While our previous discussions are by no means complete or thorough, they point to the fact that the absence of BP requires both low generalized globality and entanglement, the first condition enabling the use of the  $\g$-sim method. If having low fermionic entanglement always allows for an efficient classical method that computes the expectation values of the Lie algebra basis elements, there seems to be no room for quantum advantage. 
One could then argue that the same structure which makes a model ``trainable'' renders it simulable. While we pose this as an open research direction, it is worth highlighting that more fundamental work will be required. The theory of entanglement in fermionic systems is not developed far enough to fully characterize variances beyond the DLA case. Moreover, the previous argument also ignores the presence of coherence effects between modules, which we have seen can increase the variances. 

While our work provides a fundamental step in the rigorous search for quantum advantage, highlighting the physical elements that enable quantum trainability in parameterized matchgate circuits,  it is not clear at the moment how these considerations apply to other circuits. One could then  conjecture that in general trainability implies some form of simulability (perhaps even for shallow circuits). Here we can even envision that the first step is to show that those few models for which trainability guarantees have been proven are within the grasp of classical simulations. Conversely, rigorously proving classical hardness would discard the conjecture, thus opening the path for guarantees of variational quantum advantage.

\end{document}